\documentclass[submission,copyright,creativecommons]{eptcs}
\providecommand{\event}{AiML 2026} 

\usepackage{aiml26}

\usepackage{iftex}

\usepackage[T1]{fontenc}    

\usepackage{latexsym}
\usepackage{amssymb}
\usepackage{amsmath}
\usepackage{stmaryrd}
\usepackage[all, knot]{xy}
\usepackage{verbatim}
\usepackage[english]{babel}
\usepackage{graphicx}
\usepackage{multirow}
\usepackage{wasysym}
\usepackage{tikz}
\usetikzlibrary{arrows.meta,positioning}

\title{A General Theory of Propositional Modal Bundled Modalities}
\author{Yifeng Ding \qquad \qquad Yuanzhe Yang
\institute{Department of Philosophy\\
Peking University\thanks{The authors thank three anonymous referees for their helpful comments, and also the audience at the workshop ``Bundles in Logic'' at Lorentz Center.}
}
\email{\quad yf.ding@pku.edu.cn \quad \qquad 2301210999@pku.edu.cn}
}

\newcommand{\titlerunning}{A General Theory of Propositional Modal Bundled Modalities}
\newcommand{\authorrunning}{Y.Ding, Y.Yang}

\newcommand{\BP}{\mathbf{P}}
\newcommand{\BA}{\mathbf{A}}
\newcommand{\Nat}{\mathbb{N}}
\newcommand{\M}{\mathcal{M}}
\newcommand{\N}{\mathcal{N}}

\newcommand{\LL}{\mathcal{L}}
\newcommand{\TT}{\mathcal{T}}
\newcommand{\bund}{{\textnormal{\Circle}}}
\newcommand{\nbund}{{\textnormal{\CIRCLE}}}
\newcommand{\Nbh}{\mathtt{Nbh}}
\newcommand{\Dom}{\mathtt{Dom}}

\newcommand{\op}{{\mathtt{op}}}
\newcommand{\id}{\iota}
\newcommand{\cp}{{\mathtt{com}}}
\newcommand{\Z}{\mathcal{Z}}
\newcommand{\BL}{\mathbf{L}}

\newcommand{\PC}{\mathtt{Core}^+}
\newcommand{\NC}{\mathtt{Core}^-}

\newcommand{\ext}[1]{\Vert #1 \Vert}

\newcommand{\bis}{\mathrel{\mathchoice%
{\raisebox{.3ex}{$\,
  \underline{\makebox[.7em]{$\leftrightarrow$}}\,$}}%
{\raisebox{.3ex}{$\,
  \underline{\makebox[.7em]{$\leftrightarrow$}}\,$}}%
{\raisebox{.2ex}{$\,
  \underline{\makebox[.5em]{\scriptsize$\leftrightarrow$}}\,$}}%
{\raisebox{.2ex}{$\,
  \underline{\makebox[.5em]{\scriptsize$\leftrightarrow$}}\,$}}}}

\hypersetup{
  bookmarksnumbered,
  pdftitle    = {\titlerunning},
  pdfauthor   = {\authorrunning},
  pdfsubject  = {Modal Logic},               % Consider adding a more appropriate subject or description
  pdfkeywords = {Bundled Modality, Bisimulation, Axiomatization} % Uncomment and enter keywords specific to your paper
}

\begin{document}
\maketitle

\begin{abstract}
In studies of bundled modalities, we encode a complex conceptual notion into the semantics of a single modal operator and study its logic.
Although there is already a substantial body of work on various concrete bundled operators,
we still lack a general understanding of them.
In this paper, we provide a general theory of the expressivity and axiomatization of bundled modalities.
We offer a uniform way to define bisimulations for arbitrary bundled modalities and
justify our definition by the corresponding Hennessy-Milner property.
We also define a special class of bundled modalities called positive-negative-independent bundles.
This class of bundles, together with their duals, cover most bundled modalities studied in the literature,
and their axiomatizations can be done with the help of a more abstract notion of convex neighborhood semantics and corresponding representation results.
As case studies, we axiomatize the ``someone knows'' bundle $\bigvee_{a \in A} \Box_a \phi$ over $S5$-models,
the ``disagreement within group'' bundle $\bigvee_{a, b \in A} \Box_a \phi \wedge \Box_b \neg \phi$ over $KD45$-models,
and the ``belief without knowledge'' bundle $B \phi \wedge \neg K \phi$ over $S4.2$-models.
\end{abstract}

\section{Introduction}\label{sec.introduction}
The idea of \emph{a bundled modality} is to encode a complex conceptual notion into the semantics of a single modal operator
and study the logic that takes this operator as the primitive modality.
The logics of bundled modalities,
which have more complicated semantics yet weaker expressive power compared to full modal logics,
often have interesting behaviors on both conceptual and technical levels.

Research on ``bundled modalities'' started long before the name first appeared in the literature.
For example, the \emph{non-contingency modality} $\triangle \phi := \Box \phi \vee \Box \neg \phi$ received attention since 1966 \cite{HM-1966-CNCML, IH-1995-LNC, SK-1995-MLNC, EZ-1999-CDLN},
and the bundle $\circ \phi := \phi \to \Box \phi$ was studied under different names since 2005 \cite{JM-2005-LEA, CS-2008-LIB, CS-2008-LEA}.
The terminology ``bundle'' was first introduced in recent studies of first-order modal logic \cite{YW-2017-NFEL, ML-2022-BFOML, ML-2023-BFOML, XW-2023-CEB};
during the same period,
not only did studies of the aforementioned modalities revive \cite{JF-2014-AN, JF-2015-CKWH, JF-2020-FCL, JF-2021-UTFBN, DG-2016-RIML, DG-2017-NLUTFB, DG-2021-RIML},
but there also emerged studies of many other bundled modalities, e.g.\ 
false belief $\Box \phi \wedge \neg \phi$ \cite{CS-2011-LFB, DG-2017-NLUTFB, JF-2021-UTFBN, TW-2022-IFB, JF-2025-LFBRB},
strong disagreement between two agents $(\Box_a \phi \wedge \Box_b \neg \phi) \vee (\Box_a \neg \phi \wedge \Box_b \phi)$ \cite{TP-2018-SBD},
someone knows $\bigvee_{a \in A} \Box_a \phi$ \cite{AT-2021-SK, YD-2023-SKWA}, and
secret knowledge $\Box_a \phi \wedge \bigwedge_{b \in A \setminus \{a\}} \Box_a \neg \Box_b \phi$ \cite{ZX-2022-SRI},
just to name a few (more examples can be found in Example \ref{ex.cbund}).

Given the multitudinous papers on concrete bundled modalities,
one naturally wants to find a general theory subsuming them.
There are indeed a few attempts towards such a theory in the literature,
but none of them is completely satisfactory.
For example, \cite{AG-2022-vBAML} introduced a method to automatically define bisimulations for bundled modalities (which are called \emph{molecular connectives} there),
but such a method applies only to a restricted class of modalities called \emph{uniform connectives};
\cite{JF-2025-UAN} introduced a uniform approach to axiomatizing logics of bundled modalities called \emph{the method based on almost-definability schemas},
but no uniform way to find such schemas for an arbitrary bundled modality is provided,
and it is also unclear to which kinds of bundled modalities this method applies.
As a result, in studies of concrete bundled modalities,
bisimulations are often left as open problems,
while radically different notions of canonical models are introduced to prove completeness theorems,
even though the logics in question often have a lot in common.

Our goal in this paper, then, is to provide a general theory for both expressivity and axiomatization of bundled modalities.
For expressivity, 
we view the semantics of bundled modalities as a special kind of \emph{neighborhood semantics}
and uniformly define bisimulations for \emph{all} bundled modalities.
As for axiomatizations, we focus on a particularly well-behaved class of bundled modalities called \emph{positive-negative-independent bundles} (\emph{PN-independent bundles} for short).
This class of bundles satisfy a natural convexity axiom and,
together with their duals,
cover most bundled modalities studied in the literature and yet a lot more.
We show that the fairly standard two-step strategy of proving completeness theorems, namely canonical model building followed by transforming to a semantically equivalent intended model, applies with a high degree of uniformity.
Specifically, canonical models can be constructed uniformly with a rich supply of canonical axioms, and model transformation does not require much action since, from our analysis of bundled semantics, we know exactly what it takes to preserve truth values.

The structure of the paper is as follows.
In Section \ref{sec.language}, we introduce the general language and semantics for bundled modalities.
Then, in Section \ref{sec.bisimulation}, we show how to uniformly define bisimulations for bundled modalities
and prove a Hennessy-Milner theorem for the resulting bisimulations.
In Section \ref{sec.convex}, we introduce the notion of PN-independent bundles,
characterize their expressivity,
and also introduce a novel notion of \emph{convex neighborhood logic},\footnote{
  The same kind of logic is also discussed in \cite{AT-2021-SK}.}
which lies between minimal neighborhood logic and monotonic neighborhood logic,
and abstracts the core feature shared by all PN-independent bundles.
Then, in Section \ref{sec.cases}, to illustrate our recipe for axiomatizations, 
we offer complete axiomatizations for three concrete bundled modalities:
the ``someone knows'' bundle $\bigvee_{a \in A} \Box_a \phi$ \cite{AT-2021-SK,YD-2023-SKWA} over $S5$-models,
the ``disagreement within group'' bundle $\bigvee_{a, b \in A} \Box_a \phi \wedge \Box_b \neg \phi$ (which generalizes the two-agent case \cite{TP-2018-SBD}) over $KD45$-models,
and the ``belief without knowledge'' bundle $B \phi \wedge \neg K \phi$ over $S4.2$-models (and also $S4F$-models).
To our knowledge, none of them is available in the literature.
Finally, in Section \ref{sec.conclusion},
we summarize our results and suggest some directions for future work.

\section{Language and Semantics}\label{sec.language}
We now define the language and semantics for bundled modalities in general.
For simplicity, we focus on cases where there is only one unary bundled modality;
generalization to multiple bundled modalities with arbitrary arities is left for future work.
Formally, we introduce only \emph{one} modal language with a unary modal operator,
and different bundled modalities are viewed as different \emph{semantics} for the same modal operator.

Fix a countable set $\BP$ of propositional letters.

\begin{definition}[Language]
  Formulas in $\LL_\bund$ are defined recursively as follows:
  \begin{center}
    $\phi ::= p \mid \top \mid \neg \phi \mid (\phi \wedge \phi) \mid \bund \phi \quad (\text{where } p \in \BP)$.
  \end{center}

  Let $\nbund \phi$ be the abbreviation for $\bund \neg \phi$ ($\nbund \phi$ becomes useful when we offer axiomatizations).
\end{definition}

Next, we define the \emph{bundled semantics} for the above language.
We fix a countable set $\BA$ of indices,
together with an arity function $\rho$ from $\BA$ to $\Nat$
s.t.\ for any $n \in \Nat$, $\rho^{-1}[\{n\}]$ is infinite.
We also assume that for each $n \in \Nat$,
there is $\id_n \in \BA$ with $\rho(\id_n) = n$.
We consider Kripke models of the following kind.

\begin{definition}[Kripke Models]
  A \emph{Kripke model} is a triple $\M = (W, \{R_a\}_{a \in \BA}, V)$, where
  $W \neq \emptyset$ is the set of worlds;
  for each $a \in \BA$, $R_a$ is a $(\rho(a) + 1)$-ary relation on $W$
  and in particular, $R_{\id_n} = \bigcup_{w \in W} \{w\}^{n+1}$;
  and $V$ is a valuation function from $\BP$ to $\wp(W)$.
  For $w \in W$ and $a \in \BA$, we use $R_a(w)$ to denote $\{(v_0, ..., v_{n-1}) \in W^n \mid (w, v_0, ..., v_{n-1}) \in R_a\}$ and do not distinguish $(v)$ from $v$.
\end{definition}

The structures of the bundles are expressed by the following \emph{bundle terms}, which are essentially formulas of a polyadic modal language with only one propositional letter in \emph{negation normal form} (where $(+)$ is the only propositional letter, and $(-)$ is its negation):

\begin{definition}[Bundle Terms]
  Terms in $\TT_\mathbf{A}$ are defined recursively as follows:
  \begin{center}
    $\tau ::= (+) \mid (-) \mid \nabla_a(\underbrace{\tau, ..., \tau}_{\rho(a)\text{-many}}) \mid \Delta_a(\underbrace{\tau, ..., \tau}_{\rho(a)\text{-many}}) \quad (\text{where } a \in \BA)$.
  \end{center}
  
  As usual, when $\rho(a) = 1$, we may use $\Box_a \tau$ and $\Diamond_a \tau$ to denote $\nabla_a(\tau)$ and $\Delta_a(\tau)$, respectively.
\end{definition}

Note that Boolean connectives are not included as primitive components in our definition of bundle terms:
this is because they can be viewed as \emph{modalities} corresponding to the $\id_n$'s.
For example, binary disjunction and conjunction can be viewed as $\nabla_{\id_2}$ and $\Delta_{\id_2}$, respectively.\footnote{
The idea of viewing Boolean operators as modalities can also be found in \cite{VG-1998-SFPM, VG-2006-INDF}, for example.
}

Then, for any bundle term $\tau$, we define the corresponding bundled semantics for $\LL_\bund$ as follows.

\begin{definition}[Bundled Semantics]
  Let $\M = (W, \{R_a\}_{a \in \BA}, V)$ be a Kripke model.
  
  First, for any $w \in W$ and $X \subseteq W$, we define the following satisfaction relation $\vDash$ for terms in $\TT_\BA$:
  \begin{center}
    \begin{tabular}{|lcl|}
      \hline
      $\M,w,X \vDash (+)$ & $\iff$ & $w \in X$\\
      $\M,w,X \vDash (-)$ & $\iff$ & $w \notin X$\\
      $\M,w,X \vDash \nabla_a(\tau_0, ..., \tau_{n-1})$ & $\iff$ & $\forall (v_0, ..., v_{n-1}) \in R_a(w) \exists i < n \M,v_i,X \vDash \tau_i$\\
      $\M,w,X \vDash \Delta_a(\tau_0, ..., \tau_{n-1})$ & $\iff$ & $\exists (v_0, ..., v_{n-1}) \in R_a(w) \forall i < n \M,v_i,X \vDash \tau_i$.\\
      \hline
    \end{tabular}
  \end{center}
  
  Then, for any $\tau \in \TT_\BA$, $w \in W$ and $\phi \in \LL_\bund$, 
  the satisfaction relation $\vDash_\tau$ for $\tau$ is defined recursively as follows
  (Boolean cases are omitted, and $|\phi|^\tau_\M = \{w \in W \mid \M,w \vDash_\tau \phi\}$):
  \begin{center}
    \begin{tabular}{|lcl|}
      \hline
      $\M,w \vDash_\tau p$ & $\iff$ & $w \in V(p)$\\
      $\M,w \vDash_\tau \bund \phi$ & $\iff$ & $\M,w,|\phi|^\tau_\M \vDash \tau$.\\
      \hline
    \end{tabular}
  \end{center}
\end{definition}

Intuitively, we ``unzip'' the modal structure encoded by $\tau$ when we compute the truth value of $\bund \phi$.
This formalizes the usual way in which semantics of bundled modalities are defined.

Note that since $R_{\id_2} = \{(w, w, w) \mid w \in W\}$,
$\M,w,X \vDash \nabla_{\id_2}(\tau_0,\tau_1)$ iff $(\M,w,X \vDash \tau_0) \vee (\M,w,X \vDash \tau_1)$,
while $\M,w,X \vDash \Delta_{\id_2}(\tau_0,\tau_1)$ iff $(\M,w,X \vDash \tau_0) \wedge (\M,w,X \vDash \tau_1)$,
so $\nabla_{\id_2}$ and $\Delta_{\id_2}$ can be safely viewed as $\vee$ and $\wedge$.
For the sake of readability, we shall directly write $\vee$ and $\wedge$ rather than $\nabla_{\id_2}$ and $\Delta_{\id_2}$ from now on.

As a concrete example, if we want to give $\bund \phi$ the non-contingency semantics,
we may take some $a \in \BA$ s.t.\ $\rho(a) = 1$,
and let $\tau_\triangle = \Box_a (+) \vee \Box_a (-)$.
Then, it is not hard to see that
$\M,w \vDash_{\tau_\triangle} \bund \phi$ iff $(\forall v \in R_a(w) \M,v \vDash_{\tau_\triangle} \phi) \vee (\forall v \in R_a(w) \M,v \not \vDash_{\tau_\triangle} \phi)$, as intended.

\section{Bisimulations for Bundled Modalities}\label{sec.bisimulation}
In this section, we offer a uniform way to define bisimulations for bundled modalities
and show that the bisimulations defined in this way correctly characterize the expressivity of the corresponding bundled modalities.
A key observation leading to these bisimulations is that
the bundled semantics can be viewed as a special kind of \emph{neighborhood semantics}.
The neighborhood structure generated by a bundle term over a Kripke model is defined as follows 
(from which one could see why we choose to define bundle terms in negation normal forms).

\begin{definition}[Generated Neighborhood]
  Let $\M = (W, \{R_a\}_{a \in \BA}, V)$ be a Kripke model. 
  
  We define a function $\Nbh_\M$ from $W \times \TT_\BA$ to $\wp(\wp(W) \times \wp(W))$ recursively as follows:\footnote{
    $\prod_{x \in X} Y_x$ denotes the set of functions from $X$ to $\bigcup_{x \in X} Y_x$
    that map each $x \in X$ to an element in $Y_x$.
  }
  \begin{center}
    \begin{tabular}{rcl}
    $\Nbh_\M(w, (+))$ & $=$ & $\{(\{w\}, \emptyset)\}$\\
    $\Nbh_\M(w, (-))$ & $=$ & $\{(\emptyset, \{w\})\}$\\
    $\Nbh_\M(w, \nabla_a(\tau_0, ..., \tau_{n-1}))$ & $=$ & $\{\bigsqcup \text{ran}(f) \mid f \in \prod_{(v_0, ..., v_{n-1}) \in R_a(w)} \bigcup_{i < n} \Nbh_\M(v_i, \tau_i)\}$\\
    $\Nbh_\M(w, \Delta_a(\tau_0, ..., \tau_{n-1}))$ & $=$ & $\bigcup_{(v_0, ..., v_{n-1}) \in R_a(w)} \{\bigsqcup \text{ran}(g) \mid g \in \prod_{i < n} \Nbh_\M(v_i, \tau_i)\}$\\[0.3em]
    && (where $\bigsqcup\{ (X_i, Y_i) \mid {i \in I}\}= (\bigcup_{i \in I} X_i, \bigcup_{i \in I} Y_i)$)
    \end{tabular}
  \end{center}
\end{definition}

Intuitively, given a pointed model $\M,w$,
each $(X, Y) \in \Nbh_\M(w, \tau)$ represents a minimal way to satisfy and falsify $\phi$ in the model 
in order for $\bund \phi$ to be true at $\M,w$ under the $\tau$-bundled semantics
($X$ and $Y$ mark where $\phi$ must be satisfied and falsified, respectively).
Following this intuition,
an element in $\Nbh_\M(w, \nabla_a(\tau_0, ..., \tau_{n-1}))$ is obtained by choosing for each successor $(v_0, ..., v_{n-1})$ of $w$ one $i < n$ and one element from $\Nbh_\M(v_i, \tau_i)$, and then merging them together,
while an element in $\Nbh_\M(w, \Delta_a(\tau_0, ..., \tau_{n-1}))$ is obtained by first choosing one successor $(v_0, ..., v_{n-1})$ of $w$, then choosing for each $i < n$ one element from $\Nbh_\M(v_i, \tau_i)$, and merging them together.

Our intuition above is verified by the following proposition,
which shows that $\bund \phi$ is true at $\M,w$ iff there is a pair $(X, Y) \in \Nbh_\M(w, \tau)$
s.t.\ $X \subseteq |\phi|_\M$ and $Y \subseteq W \setminus |\phi|_\M^\tau$
--- in other words, $X$ and $W \setminus Y$ constitute a pair of lower and upper bounds for $|\phi|^\tau_\M$.
It is worth noting that the \emph{Axiom of Choice} (AC) is indispensable for its proof.

For simplicity, we write $(X, Y) \sqsubseteq (X', Y')$ for $X \subseteq X'$ and $Y \subseteq Y'$.

\begin{proposition}\label{prop.nbhsem1}(AC)
  For any model $\M,w$, any $\tau \in \TT_\BA$ and $\phi \in \LL_\bund$, 
  \begin{center}
    $\M,w \vDash_\tau \bund \phi \iff \exists (X, Y) \in \Nbh_\M(w, \tau) ((X, Y) \sqsubseteq (|\phi|^\tau_\M, W \setminus |\phi|^\tau_\M))$.
  \end{center}
\end{proposition}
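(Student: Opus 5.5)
We prove a stronger, term-level statement and then specialize it. Fix the model $\M$ and an arbitrary set $Z \subseteq W$. We show by induction on $\tau \in \TT_\BA$ that, for every $w \in W$,
\begin{center}
  $\M,w,Z \vDash \tau \iff \exists (X, Y) \in \Nbh_\M(w, \tau)\, ((X, Y) \sqsubseteq (Z, W \setminus Z))$.
\end{center}
The proposition then follows by taking $Z = |\phi|^\tau_\M$ and unfolding the clause for $\bund\phi$. Note that $\phi$ plays no role beyond fixing $Z$. For brevity, call a pair $(X,Y)$ \emph{$Z$-good} if $(X, Y) \sqsubseteq (Z, W\setminus Z)$. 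A union $\bigsqcup$ of pairs is $Z$-good iff each of the pairs is. This observation drives every inductive step.

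The base cases are immediate. For $(+)$, the only element of $\Nbh_\M(w,(+))$ is $(\{w\},\emptyset)$, which is $Z$-good iff $w \in Z$. The case $(-)$ is symmetric.

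For $\Delta_a(\tau_0,\dots,\tau_{n-1})$, suppose first that the term holds at $w$. Pick a witness $(v_0,\dots,v_{n-1}) \in R_a(w)$ with $\M,v_i,Z \vDash \tau_i$ for each $i$. By the induction hypothesis, each $\Nbh_\M(v_i,\tau_i)$ contains a $Z$-good pair. Since $n$ is finite, we may choose one such pair $g(i)$ for each $i$ without any choice principle. Then $\bigsqcup \text{ran}(g)$ is $Z$-good and lies in $\Nbh_\M(w,\Delta_a(\dots))$. Conversely, a $Z$-good element of $\Nbh_\M(w,\Delta_a(\dots))$ arises from some tuple and some $g$. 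Each component $g(i)$ is then $Z$-good, so the induction hypothesis gives $\M,v_i,Z\vDash\tau_i$ for all $i$.

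The main obstacle is the case $\nabla_a(\tau_0,\dots,\tau_{n-1})$, left to right, and this is where AC enters. Assume that for every tuple $\vec v \in R_a(w)$ there is some $i<n$ with $\M,v_i,Z\vDash\tau_i$. By the induction hypothesis, for each such $\vec v$ the set
\begin{center}
  $S_{\vec v} = \{P \in \bigcup_{i<n}\Nbh_\M(v_i,\tau_i) \mid P \text{ is } Z\text{-good}\}$
\end{center}
is nonempty. Here $R_a(w)$ may be infinite, even uncountable, so we invoke AC to obtain $f \in \prod_{\vec v \in R_a(w)} S_{\vec v}$. This $f$ is also a member of $\prod_{\vec v} \bigcup_{i<n}\Nbh_\M(v_i,\tau_i)$, so $\bigsqcup\text{ran}(f) \in \Nbh_\M(w,\nabla_a(\dots))$, and it is $Z$-good because every $f(\vec v)$ is. If $R_a(w)=\emptyset$, then $f$ is the empty function and $\bigsqcup\text{ran}(f) = (\emptyset,\emptyset)$, which is trivially $Z$-good. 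This matches the vacuous truth of $\nabla_a$.

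For the converse, suppose $\bigsqcup \text{ran}(f)$ is $Z$-good. Then every $f(\vec v)$ is $Z$-good. Each $f(\vec v)$ belongs to $\Nbh_\M(v_i,\tau_i)$ for some $i<n$, so the induction hypothesis yields $\M,v_i,Z\vDash\tau_i$ for that $i$. This holds for every $\vec v \in R_a(w)$, so $\M,w,Z \vDash \nabla_a(\tau_0,\dots,\tau_{n-1})$, which completes the induction and hence the proof.
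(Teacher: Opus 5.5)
Your proof is correct and follows the paper's argument essentially step for step. You use the same strengthening to an arbitrary $Z \subseteq W$, the same induction on terms, AC only in the left-to-right direction of the $\nabla_a$ case, and only finite choice in the $\Delta_a$ case; your explicit handling of $R_a(w)=\emptyset$ is an extra detail the paper leaves implicit.
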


\begin{proof}
  Fix a model $\M = (W, \{R_a\}_{a \in \BA}, V)$.
  We use induction on the structure of terms to show that for any $\tau \in \TT_\BA$, $w \in W$ and $Z \subseteq W$,
  $\M,w,Z \vDash \tau$ iff $\exists (X, Y) \in \Nbh_\M(w, \tau) ((X, Y) \sqsubseteq (Z, W \setminus Z))$.

  The base cases for $(+)$ and $(-)$ are obvious by definition, so we focus on the inductive steps.

  For $\nabla_a(\tau_0, ..., \tau_{n-1})$, we have 
  \begin{align*}
    \M,w,Z \vDash \nabla_a(\tau_0, ..., \tau_{n-1}) \iff & \forall \vec{v} \in R_a(w) \exists i < n \M,v_i,Z \vDash \tau_i\\
    \overset{\text{IH}}{\iff} & \forall \vec{v} \in R_a(w) \exists i < n \exists (X, Y) \in \Nbh_\M(v_i, \tau_i) ((X, Y) \sqsubseteq (Z, W \setminus Z)) \\
    \overset{\text{AC}}{\iff} & \exists f \in \prod_{\vec{v} \in R_a(w)} \bigcup_{i < n} \Nbh_\M(v_i, \tau_i) \forall \vec{v} \in R_a(w) (f(\vec{v}) \sqsubseteq (Z, W \setminus Z)) \\
    \iff & \exists f \in \prod_{\vec{v} \in R_a(w)} \bigcup_{i < n} \Nbh_\M(v_i, \tau_i) (\bigsqcup \text{ran}(f) \sqsubseteq (Z, W \setminus Z))\\
    \iff & \exists (X, Y) \in \Nbh_\M(w, \nabla_a(\tau_0, ..., \tau_{n-1})) ((X, Y) \sqsubseteq (Z, W \setminus Z)).
  \end{align*}

  For $\Delta_a(\tau_0, ..., \tau_{n-1})$, similarly, we have 
  \begin{align*}
    \M,w,Z \vDash \Delta_a(\tau_0, ..., \tau_{n-1}) \iff & \exists \vec{v} \in R_a(w) \forall i < n \M,v_i,Z \vDash \tau_i\\
    \overset{\text{IH}}{\iff} & \exists \vec{v} \in R_a(w) \forall i < n \exists (X, Y) \in \Nbh_\M(v_i, \tau_i) ((X, Y) \sqsubseteq (Z, W \setminus Z))\\
    \iff & \exists \vec{v} \in R_a(w) \exists g \in \prod_{i < n} \Nbh_\M(v_i, \tau_i) \forall i < n (g(i) \sqsubseteq (Z, W \setminus Z))\\
    \iff & \exists \vec{v} \in R_a(w) \exists g \in \prod_{i < n} \Nbh_\M(v_i, \tau_i) (\bigsqcup \text{ran}(g) \sqsubseteq (Z, W \setminus Z))\\
    \iff & \exists (X, Y) \in \Nbh_\M(w, \Delta_a(\tau_0, ..., \tau_{n-1})) ((X, Y) \sqsubseteq (Z, W \setminus Z)).
  \end{align*}

  Note that we do not need AC in the case for $\Delta_a(\tau_0, ..., \tau_{n-1})$ since the $g$ we need is finite.
\end{proof}

Before moving on to define the bisimulations,
as preparation,
we first introduce the following notions of \emph{domains} and \emph{completions} for the neighborhood structures generated by terms.

\begin{definition}[Domain and Completion]
  Let $\M = (W, \{R_a\}_{a \in \BA}, V)$ be a model.
  
  We define a function $\Dom_\M$ from $W \times \TT_\BA$ to $\wp(W)$ recursively as follows:
  \begin{center}
    \begin{tabular}{rcl}
      $\Dom_\M(w, (+))$, $\Dom_\M(w, (-))$ & $=$ & $\{w\}$ \\
      $\Dom_\M(w, \nabla_a(\tau_0, ..., \tau_{n-1}))$, $\Dom_\M(w, \Delta_a(\tau_0, ..., \tau_{n-1}))$ & $=$ & $\bigcup_{(v_0, ..., v_{n-1}) \in R_a(w)} \bigcup_{i < n} \Dom_\M(v_i, \tau_i)$.\\
    \end{tabular}
  \end{center}

  Then, for any term $\tau \in \TT_\BA$, the \emph{completion} of $\Nbh_\M(w, \tau)$
  is defined as 
  \begin{center}
    $\Nbh^\cp_\M(w, \tau) = \{Z \subseteq \Dom_\M(w, \tau) \mid \exists (X, Y) \in \Nbh_\M(w, \tau) ((X, Y) \sqsubseteq (Z, \Dom_\M(w, \tau) \setminus Z))\}$.
  \end{center}
\end{definition}

Intuitively, the truth value of $\bund \phi$ at $\M,w$ under $\tau$-semantics depends only on the truth values of $\phi$ in $\mathtt{Dom}_\M(w, \tau)$.
This is formally expressed by following proposition, 
which can be checked via an easy induction on the structure of terms.

\begin{proposition}\label{prop.nbhsem2}
  For any model $\M,w$ and any $\tau \in \TT_\BA$, $\bigcup_{(X, Y) \in \Nbh_\M(w, \tau)} X \cup Y \subseteq \Dom_\M(w, \tau)$.
  Thus, by Proposition \ref{prop.nbhsem1}, for any $\phi \in \LL_\bund$, $\M,w \vDash_\tau \bund \phi \iff |\phi|^\tau_\M \cap \Dom_\M(w, \tau) \in \Nbh^\cp_\M(w, \tau)$.
\end{proposition}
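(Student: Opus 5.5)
The proof has two parts: an inclusion proved by structural induction on $\tau$, and then the equivalence, obtained by intersecting with $\Dom_\M(w,\tau)$ inside Proposition \ref{prop.nbhsem1}.

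\textbf{First claim.} I will show by induction on $\tau$ that for every $w \in W$ and every $(X,Y) \in \Nbh_\M(w,\tau)$ we have $X \cup Y \subseteq \Dom_\M(w,\tau)$.
\begin{itemize}
\item For $(+)$ and $(-)$, the unique pair is $(\{w\},\emptyset)$ or $(\emptyset,\{w\})$. In both cases $X \cup Y = \{w\} = \Dom_\M(w,\tau)$.
\item For $\nabla_a(\tau_0,\dots,\tau_{n-1})$, a pair has the form $\bigsqcup \mathrm{ran}(f)$. Each $f(\vec v)$ lies in $\Nbh_\M(v_i,\tau_i)$ for some $i<n$ and some $\vec v \in R_a(w)$. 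By the induction hypothesis its components lie in $\Dom_\M(v_i,\tau_i)$, which is a subset of $\Dom_\M(w,\tau)$. A union of such sets stays inside $\Dom_\M(w,\tau)$.
\item The $\Delta_a$ case is the same, with $g(i) \in \Nbh_\M(v_i,\tau_i)$ for one fixed $\vec v$.
\end{itemize}

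\textbf{Second claim.} Put $D = \Dom_\M(w,\tau)$ and $Z = |\phi|^\tau_\M$. By Proposition \ref{prop.nbhsem1}, it suffices to show, for each $(X,Y) \in \Nbh_\M(w,\tau)$,
\[
(X,Y) \sqsubseteq (Z, W\setminus Z) \iff (X,Y) \sqsubseteq (Z\cap D, D\setminus (Z \cap D)).
\]
\begin{itemize}
\item Right to left: this holds because $Z\cap D \subseteq Z$ and $D\setminus(Z\cap D) = D\setminus Z \subseteq W\setminus Z$.
\item Left to right: use the first claim. Since $X \subseteq D$ and $X \subseteq Z$, we get $X \subseteq Z\cap D$. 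Since $Y \subseteq D$ and $Y \subseteq W\setminus Z$, we get $Y \subseteq D\setminus Z$.
\end{itemize}
The right-hand condition, taken over some $(X,Y)\in\Nbh_\M(w,\tau)$, is exactly the definition of $Z\cap D \in \Nbh^\cp_\M(w,\tau)$, because $Z\cap D \subseteq D$.

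Both parts are routine. The only point needing care is the bookkeeping in the $\nabla_a$ case: the induction hypothesis must be stated for all worlds simultaneously, so that it can be applied at every successor $v_i$.
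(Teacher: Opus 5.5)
Your proof is correct and follows the same route as the paper. The paper establishes the inclusion $X \cup Y \subseteq \Dom_\M(w,\tau)$ by an easy induction on the structure of terms and then reads off the equivalence from Proposition \ref{prop.nbhsem1}. You fill in the details the paper leaves implicit: the induction is carried out uniformly over all worlds, and for each pair in $\Nbh_\M(w,\tau)$ the two $\sqsubseteq$-conditions are equivalent.
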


With the help of the above notions, we can define bisimulations for bundled modalities in a uniform way.
Essentially, we are just applying the kind of neighborhood bisimulation introduced in \cite{HH-2007-NBHBIS} to the neighborhood structures generated by bundle terms,
but there are also some minor adjustments.

\begin{definition}[Bundled Bisimulation]
  Let $\M = (W, \{R_a\}_{a \in \BA}, V)$ be a model, and let $\tau \in \TT_\BA$.
  Let $\Z$ be a binary relation on $W$.
  For any $w, v \in W$ and $X \subseteq \mathtt{Dom}_\M(w, \tau)$, $Y \subseteq \mathtt{Dom}_\M(v, \tau)$,
  we say $X$ and $Y$ are \emph{$\Z$-coherent},
  if $\Z[X \cup Y] \cap \mathtt{Dom}_\M(w, \tau) \subseteq X$
  and $\Z[X \cup Y] \cap \mathtt{Dom}_\M(v, \tau) \subseteq Y$.

  Then, we say $\Z$ is a \emph{$\tau$-bisimulation} on $\M$, if for all $(w, v) \in \Z$,
  the following are satisfied:
  \begin{itemize}
    \item (Inv) For all $p \in \mathbf{P}$, $w \in V(p)$ iff $v \in V(p)$;
    \item ($\tau$-Coh) For all $X \subseteq \Dom_\M(w, \tau)$ and $Y \subseteq \Dom_\M(v, \tau)$ s.t.\ $X$ and $Y$ are $\Z$-coherent,\\
    $X \in \Nbh^\cp_\M(w, \tau)$ iff $Y \in \Nbh^\cp_\M(v, \tau)$.
  \end{itemize}

  We say $\M,w$ and $\N,v$ are \emph{$\tau$-bisimilar},
  denoted by $\M,w \bis_\tau \N,v$,
  if there is a $\tau$-bisimulation $\Z$ on the disjoint union of $\M$ and $\N$ s.t.\ $(w, v) \in \Z$.
\end{definition}

Note that the condition ($\tau$-Coh) only quantifies over pairs of \emph{$\Z$-coherent} sets,
from which many (but maybe not all) pairs of sets that are undefinable over $\Dom_\M(w, \tau)$ and $\Dom_\M(v, \tau)$ are excluded.
If we quantify over all pairs of sets instead,
than the resulting notion of bisimulation would become too strong,
and the Hennessy-Milner theorem (Theorem \ref{thm.hm}) would no longer hold.
One could refer to \cite{HH-2007-NBHBIS} for more details concerning the notion of $\Z$-coherence
in the more abstract context of neighborhood logic.

It is rather routine to check that $\bis_\tau$ is indeed an equivalence relation.
It is also interesting to note that $\tau$-bisimulation is \emph{not} first-order definable when $\tau$ is complicated enough; see Appendix \ref{appd.fodf} for an example.
Below is a simple example of a pair of $\tau$-bisimilar models.

\begin{example}\label{ex.bis}
  Let $\tau = \Diamond \Box \Diamond (+) \wedge \Diamond \Box \Diamond (-)$, and let $\M$, $\N$ be the following models.
  
  $\M,w_0$ and $\N,v_0$ are not bisimilar in the standard sense, but they are $\tau$-bisimilar.
  \begin{center}
  {\small
    $\xymatrix@R=0.05em@C=1.2em{
    p_1 \; w_1 & & & & & & w_4 \; p_4 & &  p_2 \; v_2 & \cdot \ar[l] & & & & \cdot \ar[r] & v_5 \; p_5 \\
     & \cdot \ar[ul] & & & & \cdot \ar[ur] & & & & & \cdot \ar[dl] \ar[ul] & & \cdot \ar[dr] \ar[ur]  \\
    p_2 \; w_2 & & \cdot \ar[ul] \ar[dl] & w_0 \ar[l] \ar[r] & \cdot \ar[ur] \ar[dr] & & w_5 \; p_5 & & p_1 \; v_1 & \cdot \ar[l] & & v_0 \ar[ul] \ar[dl] \ar[dr] \ar[ur] & & \cdot \ar[r] & v_4 \; p_4\\
     & \cdot \ar[ul] \ar[dl] & & \M & & \cdot \ar[ur] \ar[dr] & & & & & \cdot \ar[ul] \ar[dl] & \N & \cdot \ar[ur] \ar[dr]\\
    p_3 \; w_3 & & & & & & w_6 \; p_6 & & p_3 \; v_3 & \cdot \ar[l] & & & & \cdot \ar[r] & v_6 \; p_6\\
    }$\\
  }
  \end{center}
  \begin{itemize}
    \item $\Dom_\M(w_0, \tau) = \{w_i \mid 1 \leq i \leq 6\}$ and $\Dom_\N(v_0, \tau) = \{v_i \mid 1 \leq i \leq 6\}$.
    \item $\Nbh_\M(w_0, \tau) = N \times N$, where $N = \{\{w_1, w_2\}, \{w_1, w_3\}, \{w_4, w_5\}, \{w_4, w_6\}\}$.
    \item $\Nbh_\N(v_0, \tau) = N' \times N'$, where $N' = \{\{v_1, v_2\}, \{v_1, v_3\}, \{v_4, v_5\}, \{v_4, v_6\}\}$.
    \item For any $1 \leq i \leq 6$, $\Dom_\M(w_i, \tau) = \Dom_\N(v_i, \tau) = \Nbh_\M(w_i, \tau) = \Nbh_\N(v_i, \tau) = \emptyset$.
    \item $\Z = \bigcup_{0 \leq i \leq 6} (\{w_i, v_i\} \times \{w_i, v_i\})$
    forms a $\tau$-bisimulation on the disjoint union of $\M$ and $\N$,
    so $\M, w_0 \bis_\tau \N, v_0$.
  \end{itemize}
\end{example}

\begin{comment} % Standard bisimulation implies bundled bisimulation
It is also worth noting that the above notion of bisimilarity is implied by the standard notion of bisimilarity $\bis$.\footnote{
We do not include the definition here due to limited space.
The reader can refer to textbooks like \cite{PB-2001-ML} for the standard definition of bisimilarity.
}

\begin{proposition}\label{prop.stdbis}(AC)
  For any models $\M,w$, $\N,v$ and $\tau \in \TT_\BA$, $\M,w \bis \N,v \implies \M,w \bis_\tau \N,v$.
\end{proposition}

\begin{proof}[Proof Sketch]
  It suffices to show that for any model $\M$, if $\Z$ is a standard bisimuluation on $\M$, then it is also $\tau$-bisimulation on $\M$.
  Assume that $\Z$ is a standard bisimulation on $\M$.
  Without loss of generality, we may also assume that $\Z$ is an equivalence relation.
  Then, by induction on the structure of terms,
  we can show that for any $\tau \in \TT_\BA$, $(w, v) \in \Z$ and
  $(X_0, X_1) \in \Nbh_\M(w, \tau)$, there is $(Y_0, Y_1) \in \Nbh_\M(v, \tau)$ s.t.\ 
  for all $i \in \{0, 1\}$ and $v' \in Y_i$, there is $w' \in X_i$ s.t.\ $(w', v') \in \Z$.
  The rest of the proof is similar with the proof for the first part of Proposition \ref{prop.cvbis}.
\end{proof}
\end{comment}

Then, we check that the truth values of formulas under bundled semantics are indeed preserved by bundled bisimilarity.
For simplicity, we write $\M,w \equiv_\tau \N,v$ when 
$\M,w \vDash_\tau \phi \iff \N,v \vDash_\tau \phi$ holds for all $\phi \in \LL_\bund$.

\begin{proposition}
  For any models $\M,w$ and $\N,v$ and $\tau \in \TT_\BA$,
  $\M,w \bis_\tau \N,v \implies \M,w \equiv_\tau \N,v$.
\end{proposition}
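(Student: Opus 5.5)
We work on the disjoint union $\M \uplus \N$, since truth under $\vDash_\tau$ at a world of $\M$ or $\N$ coincides with its truth there in the disjoint union (the generated neighborhoods and domains of a world only involve worlds reachable from it). So it suffices to show the following: if $\Z$ is a $\tau$-bisimulation on a single model $\M$, then for every $\phi \in \LL_\bund$ and every $(w,v) \in \Z$, we have $\M,w \vDash_\tau \phi$ iff $\M,v \vDash_\tau \phi$. We prove this by induction on $\phi$, simultaneously for all pairs in $\Z$. The atomic case is exactly (Inv), and the Boolean cases are immediate.

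For the case $\bund \phi$, fix $(w,v) \in \Z$ and put
\[
X = |\phi|^\tau_\M \cap \Dom_\M(w,\tau), \qquad Y = |\phi|^\tau_\M \cap \Dom_\M(v,\tau).
\]
By Proposition \ref{prop.nbhsem2}, $\M,w \vDash_\tau \bund\phi$ iff $X \in \Nbh^\cp_\M(w,\tau)$, and $\M,v \vDash_\tau \bund\phi$ iff $Y \in \Nbh^\cp_\M(v,\tau)$. By ($\tau$-Coh) it is therefore enough to check that $X$ and $Y$ are $\Z$-coherent.

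Take $u \in \Z[X \cup Y] \cap \Dom_\M(w,\tau)$. Then $u' \Z u$ for some $u' \in X \cup Y \subseteq |\phi|^\tau_\M$. By the induction hypothesis applied to $\phi$ and the pair $(u',u) \in \Z$, we get $u \in |\phi|^\tau_\M$, hence $u \in X$. The inclusion into $Y$ is symmetric. Here the induction hypothesis must be stated for all pairs in $\Z$, not just $(w,v)$; it is also used along $\Z$ in the direction $(u',u)$, and this is the direction in which $\Z[\cdot]$ is taken, so no symmetry of $\Z$ is needed.

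The only real subtlety is this coherence step: we must intersect with the domains before applying ($\tau$-Coh), because that condition only concerns subsets of $\Dom_\M(w,\tau)$ and $\Dom_\M(v,\tau)$. Proposition \ref{prop.nbhsem2} is exactly what licenses this restriction. The remaining routine point is the disjoint-union invariance of $\vDash_\tau$, which we check by a straightforward induction on $\phi$. That induction needs the fact that $R_a(w)$, and hence $\Nbh(w,\tau)$ and $\Dom(w,\tau)$, stay within the component containing $w$. This holds also for the $R_{\id_n}$, since they are defined uniformly as diagonals.
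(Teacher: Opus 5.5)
Your proof is correct and is essentially the paper's argument: induct on $\phi$ uniformly over all pairs in $\Z$, restrict $|\phi|^\tau_\M$ to the two $\tau$-domains, use the induction hypothesis to obtain $\Z$-coherence, and then combine ($\tau$-Coh) with Proposition~\ref{prop.nbhsem2}. Your remarks on invariance under disjoint unions and on not needing symmetry of $\Z$ just spell out details the paper leaves implicit.
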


\begin{proof}
  It suffices to show by induction that for any model $\M$ and $\tau$-bisimulation $\Z$ on $\M$,
  for any $\phi \in \LL_\bund$,
  if $(w, v) \in \Z$, then $\M,w \vDash_\tau \phi$ iff $\M,v \vDash_\tau \phi$.
  We only show the case for $\bund \phi$.

  Assume that $(w, v) \in \Z$,
  and let $X = |\phi|^\tau_\M \cap \Dom_\M(w, \tau)$, $Y = |\phi|^\tau_\M \cap \Dom_\M(v, \tau)$.
  By IH, it is not hard to see that $X$ and $Y$ are $\Z$-coherent,
  so $X \in \Nbh^\cp_\M(w, \tau)$ iff $Y \in \Nbh^\cp_\M(v, \tau)$.
  Then, by proposition \ref{prop.nbhsem2}, $\M,w \vDash_\tau \bund \phi$ iff $X \in \Nbh^\cp_\M(w, \tau)$ iff $Y \in \Nbh^\cp_\M(v, \tau)$ iff $\M, v \vDash_\tau \bund \phi$.
\end{proof}

Finally, we show that the converse direction of the above implication holds when the models in question are \emph{image-finite} (i.e.\ for any $w \in W$ and $a \in \BA$,
$R_a(w)$ is finite).
This shows that the notion of $\tau$-bisimilarity indeed characterizes the expressivity of $\LL_\bund$ under $\tau$-semantics in a natural way.

\begin{theorem}[Hennessy-Milner]\label{thm.hm}
  For any image-finite models $\M,w$, $\N,v$ and $\tau \in \TT_\BA$,
  \begin{center}
    $\M,w \bis_\tau \N,v \iff \M,w \equiv_\tau \N,v$.
  \end{center}
\end{theorem}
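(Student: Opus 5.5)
The direction from left to right is exactly the preceding proposition, so only the converse needs an argument. I will take $\Z$ to be the relation $\equiv_\tau$ on the disjoint union $\M \uplus \N$, and show that this $\Z$ is a $\tau$-bisimulation. The union is still image-finite, and $\tau$-truth of formulas is unaffected by taking disjoint unions, so $(w,v) \in \Z$ will follow. (Inv) is immediate, since $\Z$ preserves truth of propositional letters. All the work is in ($\tau$-Coh).

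\textbf{Step 1: finiteness of domains.} By induction on $\tau$, I will show that $\Dom(u, \tau)$ is finite for every world $u$. The base cases are singletons. For $\nabla_a$ and $\Delta_a$, the domain is a union over the finitely many tuples in $R_a(u)$ of finitely many finite sets. This uses image-finiteness, and also the fact that each $R_{\id_n}(u) = \{(u,\dots,u)\}$ is finite anyway. I expect this to be the crux: it is the only place image-finiteness enters, and it is what allows a \emph{finite} Boolean combination of formulas to define subsets of the domains.

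\textbf{Step 2: coherent sets are definable.} Fix $(w,v) \in \Z$ and let $X \subseteq \Dom(w,\tau)$, $Y \subseteq \Dom(v,\tau)$ be $\Z$-coherent. Put $D = \Dom(w,\tau) \cup \Dom(v,\tau)$, which is finite by Step 1, and $S = X \cup Y$. Since $\Z$ is reflexive, coherence gives $S \cap \Dom(w,\tau) = X$ and $S \cap \Dom(v,\tau) = Y$.

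Next take any $u \in S$ and $u' \in D \setminus S$. Then $u' \not\equiv_\tau u$. Indeed, otherwise $u' \in \Z[S]$, and $u'$ lies in one of the two domains, so coherence would force $u' \in X$ or $u' \in Y$, contradicting $u' \notin S$. Hence there is a formula $\psi_{u,u'}$ true at $u$ and false at $u'$ (negating if necessary). Now set
\[
\phi = \bigvee_{u \in S} \bigwedge_{u' \in D \setminus S} \psi_{u,u'},
\]
a finite formula, with $\phi = \neg\top$ if $S = \emptyset$ and empty conjunctions read as $\top$. Then $|\phi|^\tau \cap D = S$, so $|\phi|^\tau \cap \Dom(w,\tau) = X$ and $|\phi|^\tau \cap \Dom(v,\tau) = Y$.

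\textbf{Step 3: conclude ($\tau$-Coh).} By Proposition \ref{prop.nbhsem2}, $X \in \Nbh^\cp(w,\tau)$ iff $w \vDash_\tau \bund\phi$. Since $w \equiv_\tau v$, this holds iff $v \vDash_\tau \bund\phi$, which by Proposition \ref{prop.nbhsem2} again holds iff $Y \in \Nbh^\cp(v,\tau)$. This establishes ($\tau$-Coh), so $\Z$ is a $\tau$-bisimulation.

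Beyond Step 1, the only delicate point is the bookkeeping with overlapping domains in Step 2. Reflexivity of $\Z$ together with coherence handles this, as shown there.
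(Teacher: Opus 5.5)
Your proof is correct and follows the paper's argument. You take $\equiv_\tau$ on the image-finite disjoint union, use finiteness of $\Dom(w,\tau) \cup \Dom(v,\tau)$ to define the $\Z$-coherent union $X \cup Y$ there by a single formula, and conclude ($\tau$-Coh) via Proposition~\ref{prop.nbhsem2}. The differences are cosmetic: you build $\phi$ from pairwise separating formulas where the paper uses characteristic formulas $\psi_C$ of $\Z$-classes, and you spell out the reflexivity-plus-coherence check that the paper leaves as ``not hard to check''.
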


\begin{proof}
  Let $\tau \in \TT_\BA$ be arbitrary.
  Using disjoint unions, it suffices to show that $\Z = \{(w, v) \mid \M,w \equiv_\tau \M,v\}$ forms a $\tau$-bisimulation for any image-finite $\M$.

  The condition (Inv) is trivially satisfied, so we only need to check ($\tau$-Coh).
  Assume that $(w, v) \in \Z$.
  Since $\M$ is image-finite, it is easy to check that $\Dom_\M(w, \tau)$ and $\Dom_\M(v, \tau)$ are finite,
  so we can define every $\Z$-equivalence class $C$ over $\Dom_\M(w, \tau) \cup \Dom_\M(v, \tau)$ with a formula $\psi_C$.
  Now, assume that $X \subseteq \Dom_\M(w, \tau)$ and $Y \subseteq \Dom_\M(v, \tau)$ are $\Z$-coherent,
  and consider $\phi = \bigvee_{u \in X \cup Y} \psi_{[u]_\Z}$,
  where $[u]_\Z$ is the $\Z$-equivalence class over $\Dom_\M(w, \tau) \cup \Dom_\M(v, \tau)$ containing $u$.
  Since $X$ and $Y$ are $\Z$-coherent,
  it is not hard to check that $X = |\phi|^\tau_\M \cap \Dom_\M(w, \tau)$
  and $Y = |\phi|^\tau_\M \cap \Dom_\M(v, \tau)$.
  Then, since $\M,w \equiv_\tau \M,v$,
  $\M,w \vDash_\tau \bund \phi$ iff $\M,v \vDash_\tau \bund \phi$,
  so by Proposition \ref{prop.nbhsem2}, $X \in \Nbh^\cp_\M(w, \tau)$ iff $Y \in \Nbh^\cp_\M(v, \tau)$.
\end{proof}

\section{PN-Independent Bundled Modalities}\label{sec.convex}

Now, we turn to a special class of bundled modalities we call \emph{positive-negative-independent} bundled modalities (\emph{PN-independent} bundles for short).
As we shall see, most bundled modalities studied in the literature are either PN-independent or duals of PN-independent bundles, 
and such bundles are indeed well-behaved on the technical level.

\begin{definition}[PN-Independent Bundles]\label{def.cvb}
  Let $\M = (W, \{R_a\}_{a \in \BA}, V)$ be a model, and let $\tau \in \TT_\BA$.
  We say $\tau$ is \emph{PN-independent} over $\M$,
  if for all $w \in W$ and $(X, Y), (X', Y') \in \Nbh_\M(w, \tau)$,
  if $X \cap Y = X' \cap Y' = X \cap Y' = \emptyset$,
  then $(X, Y') \in \Nbh_\M(w, \tau)$.
\end{definition}

We say a family $N$ of sets is \emph{convex},
if for any $X, Y \in N$ and any set $Z$,
if $X \subseteq Z \subseteq Y$,
then $Z \in N$.
Note that if $\tau$ is PN-independent over $\M$,
then for any $w$ in $\M$,
$\Nbh^\cp_\M(w, \tau)$ is convex.
This feature of PN-independent modalities validates the axiom of convexity 
$\mathtt{CONV}: \bund (\phi \wedge \psi) \wedge \bund (\phi \vee \chi) \to \bund \phi$,
which, as we shall see,
plays a crucial role in axiomatizations of PN-independent bundles.

There is an easy syntactic sufficient condition for PN-independence,
which covers a great amount of PN-independent bundles in the literature in their exact syntactic forms,
and probably all such bundles in the literature 
up to logical equivalence.

\begin{proposition}\label{prop.cvnf}
Define bundle terms in $\TT^\times_\BA$ by the following grammar:
\begin{itemize}
  \item[] $\TT^+_\BA \ni \tau^+ ::= (+) \mid \nabla_a(\tau^+, ..., \tau^+) \mid \Delta_a(\tau^+, ..., \tau^+)$ \quad ($a \in \BA$)
  \item[] $\TT^-_\BA \ni \tau^- ::= (-) \mid \nabla_a(\tau^-, ..., \tau^-) \mid \Delta_a(\tau^-, ..., \tau^-)$ \quad ($a \in \BA$)
  \item[] $\TT^\times_\BA \ni \tau^\times ::= \tau^+ \mid \tau^- \mid \tau^\times \wedge \tau^\times \mid \Box_a \tau^\times$ \quad ($\tau^+ \in \TT_\BA^+$, $\tau^- \in \TT_\BA^-$, $\rho(a) = 1$).
\end{itemize}
  Then, for any $\tau^\times \in \TT^\times_\BA$
  and any model $\M$,
  $\tau^\times$ is PN-independent over $\M$.
\end{proposition}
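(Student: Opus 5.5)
We prove, by induction on the construction of $\tau^\times \in \TT^\times_\BA$, the statement: for every model $\M$ and every $w$, and all $(X, Y), (X', Y') \in \Nbh_\M(w, \tau^\times)$ with $X \cap Y = X' \cap Y' = X \cap Y' = \emptyset$, we have $(X, Y') \in \Nbh_\M(w, \tau^\times)$.

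\textbf{Base cases: pure terms.} A preliminary induction on $\TT^+_\BA$ shows that every $(X, Y) \in \Nbh_\M(w, \tau^+)$ has $Y = \emptyset$. Indeed, $(\{w\}, \emptyset)$ has this form, and $\bigsqcup$ of pairs with empty second component again has empty second component. Symmetrically, every pair in $\Nbh_\M(w, \tau^-)$ has empty first component. So for $\tau^+$, the pairs are $(X, \emptyset)$ and $(X', \emptyset)$, and $(X, Y') = (X, \emptyset)$ is already in the neighborhood. For $\tau^-$, the pairs are $(\emptyset, Y)$ and $(\emptyset, Y')$, and $(X, Y') = (\emptyset, Y')$ is already there. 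The disjointness hypotheses are not even needed here.

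\textbf{Conjunction.} Recall that $\wedge$ is $\Delta_{\id_2}$ and $R_{\id_2}(w) = \{(w, w)\}$. Hence
\[
\Nbh_\M(w, \tau_0 \wedge \tau_1) = \{ (A_0 \cup A_1, B_0 \cup B_1) \mid (A_i, B_i) \in \Nbh_\M(w, \tau_i) \}.
\]
Write $(X, Y) = (A_0 \cup A_1, B_0 \cup B_1)$ and $(X', Y') = (A'_0 \cup A'_1, B'_0 \cup B'_1)$. Each component pair inherits the disjointness conditions, since $A_i \subseteq X$, $B_i \subseteq Y$, $A'_i \subseteq X'$ and $B'_i \subseteq Y'$. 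So the IH gives $(A_i, B'_i) \in \Nbh_\M(w, \tau_i)$, and their union $(X, Y')$ lies in $\Nbh_\M(w, \tau_0 \wedge \tau_1)$.

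\textbf{Box case.} This is the main step. For $\Box_a \tau$ we have
\[
\Nbh_\M(w, \Box_a \tau) = \{ \bigsqcup \text{ran}(f) \mid f \in \textstyle\prod_{v \in R_a(w)} \Nbh_\M(v, \tau) \}.
\]
Let $(X, Y) = \bigsqcup \text{ran}(f)$ and $(X', Y') = \bigsqcup \text{ran}(f')$. For each $v \in R_a(w)$, write $f(v) = (A_v, B_v)$ and $f'(v) = (A'_v, B'_v)$. Then $A_v \subseteq X$, $B_v \subseteq Y$, $A'_v \subseteq X'$ and $B'_v \subseteq Y'$, so the three disjointness conditions hold componentwise. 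The IH at $v$ gives $(A_v, B'_v) \in \Nbh_\M(v, \tau)$. Define $g(v) = (A_v, B'_v)$. Then
\[
\bigsqcup \text{ran}(g) = \Bigl(\bigcup_v A_v, \bigcup_v B'_v\Bigr) = (X, Y'),
\]
so $(X, Y') \in \Nbh_\M(w, \Box_a \tau)$.

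The crucial point is that the same index $v$ pairs $f(v)$ with $f'(v)$. This works because $\rho(a) = 1$: the choice function picks a neighborhood at each successor $v$ itself, with no choice among coordinates. This is exactly why $\Diamond$ and $n$-ary $\nabla$ are excluded from the grammar of $\TT^\times_\BA$. For a $\Diamond$, the two pairs could come from different successors and could not be recombined. For a $\nabla$, different coordinates $i$ could be chosen at the same successor. Likewise, a $\vee$ of mixed terms fails because the two pairs may come from different disjuncts.
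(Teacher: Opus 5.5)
Your proof is correct and follows essentially the paper's route. The paper inducts on the same grammar with the same case analysis, but carries the stronger invariant $\Nbh_\M(w,\tau^\times) = N^+_{w,\tau^\times} \times N^-_{w,\tau^\times}$; this is your swap-closure property with the disjointness side conditions dropped, and since (as you note) the base cases never use disjointness, your componentwise recombination in the $\wedge$ and $\Box_a$ steps in fact establishes that product form too.
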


\begin{proof}[Proof Sketch]
  Let $\M = (W, \{R_a\}_{a \in \BA}, V)$ be arbitrary.
  First, note that for any $w \in W$ and $\tau^+ \in \TT^+_\BA$,
  there is $N^+_{w, \tau^+} \in \wp(\wp(W))$ s.t.\ $\Nbh_\M(w, \tau^+) = N^+_{w, \tau^+} \times \{\emptyset\}$;
  similarly, for any $\tau^- \in \TT^-_\BA$,
  there is $N^-_{w, \tau^-} \in \wp(\wp(W))$ s.t.\ $\Nbh_\M(w, \tau^-) = \{\emptyset\} \times N^-_{w, \tau^-}$.
  Then, we check by induction that for any $w \in W$ and $\tau^\times \in \TT^\times_\BA$,
  there are $N^+_{w, \tau^\times}, N^-_{w, \tau^\times} \in \wp(\wp(W))$
  s.t.\ $\Nbh_\M(w, \tau^\times) = N^+_{w, \tau^\times} \times N^-_{w, \tau^\times}$ (which is actually stronger than the requirement in Definition \ref{def.cvb}).
  The base case for $\tau^+$ and $\tau^-$ was just
  noted above.
  For the inductive step,
  if $\Nbh_\M(w, \tau^\times_0) = N^+_{w, \tau^\times_0} \times N^-_{w, \tau^\times_0}$
  and $\Nbh_\M(w, \tau^\times_1) = N^+_{w, \tau^\times_1} \times N^-_{w, \tau^\times_1}$,
  then it is routine to compute that 
  $\Nbh_\M(w, \tau^\times_0 \wedge \tau^\times_1) = \{X_0 \cup X_1 \mid X_0 \in N^+_{w, \tau^\times_0}, X_1 \in N^+_{w, \tau^\times_1}\} \times \{Y_0 \cup Y_1 \mid Y_0 \in N^-_{w, \tau^\times_0}, Y_1 \in N^-_{w, \tau^\times_1}\}$;
  similarly, if $\Nbh_\M(v, \tau^\times) = N^+_{v, \tau^\times} \times N^-_{v, \tau^\times}$ for all $v \in R_a(w)$,
  then $\Nbh_\M(w, \Box_a \tau^\times) = \{\bigcup \text{ran}(g) \mid g \in \prod_{v \in R_a(w)} N^+_{v, \tau^\times}\} \times \{\bigcup \text{ran}(h) \mid h \in \prod_{v \in R_a(w)} N^-_{v, \tau^\times}\}$.
\end{proof}

\begin{example}\label{ex.cbund}
  By Proposition \ref{prop.cvnf}, the following bundles are PN-independent over all models:
  $\Diamond (+) \wedge \Diamond (-)$ (contingency / ignorance \cite{HM-1966-CNCML, IH-1995-LNC, SK-1995-MLNC, EZ-1999-CDLN, JF-2014-AN, JF-2015-CKWH, JF-2020-FCL});
  $(+) \wedge \Diamond (-)$ (unknown truth / accident \cite{JM-2005-LEA, CS-2008-LIB, CS-2008-LEA, JF-2021-UTFBN, DG-2016-RIML, DG-2017-NLUTFB, DG-2021-RIML});
  $\Box (+) \wedge (-)$ (false belief \cite{CS-2011-LFB,DG-2017-NLUTFB, JF-2021-UTFBN, TW-2022-IFB, JF-2025-LFBRB});
  $\Box (-) \wedge (+)$ (strong accident \cite{TP-2017-WESA});
  $((+) \vee \Diamond (+)) \wedge \Diamond (-)$ (semi-weak contingency / disjunctive ignorance \cite{JF-2021-DISI});
  $((+) \vee \Diamond (+)) \wedge ((-) \vee \Diamond (-))$ (weak contingency \cite{JF-2019-SNC});
  $BK (+) \wedge \widehat{K} (-)$ (Dunning-Kruger ignorance \cite{JF-2025-RIDKI});
  $\Box_a (+) \wedge \Diamond_b (-)$, $\Box_a (+) \wedge \Box_b (-)$, $\Diamond_a (+) \wedge \Box_b (-)$, $\Diamond_a (+) \wedge \Diamond_b (-)$ (\cite{JF-2025-UAN});
  $\bigvee_{a \in A} \Box_a (+)$ (someone knows \cite{AT-2021-SK,YD-2023-SKWA});
  $\Box_a (+) \wedge \bigwedge_{b \in B} \Box_a \Diamond_b (-)$ (secret knowledge \cite{ZX-2022-SRI});
  $(\Box_a(+) \wedge \Diamond_b(-)) \vee (\Box_b(+) \wedge \Diamond_a(-))$ (moderate disagreement \cite{TP-2019-MBD}; it is equivalent to $(\Box_a (+) \vee \Box_b (+)) \wedge (\Diamond_a (-) \vee \Diamond_b (-))$).

  Moreover, assuming seriality of the models, the following bundles also become PN-independent:
  $(\Box (+) \wedge (-)) \vee (\Box (-) \wedge (+))$ (radical ignorance \cite{VF-2020-WHRI,JF-2025-LFBRB}; equivalent to $((+) \vee \Box (+)) \wedge ((-) \vee \Box (-))$ over serial models);
  $(\Box_a (-) \wedge \Box_b (+)) \vee (\Box_a (+) \wedge \Box_b (-))$ (strong disagreement \cite{TP-2018-SBD}; equivalent to $(\Box_a (+) \vee \Box_b (+)) \wedge (\Box_a (-) \vee \Box_b (-))$ over serial models).

  On the other hand, one non-ad hoc example of non-PN-independent bundles is the so-called \emph{Rumsfeld ignorance} or \emph{unknown unknowns} $\Diamond (+) \wedge \Diamond (-) \wedge \Diamond (\Box(+) \vee \Box(-))$ \cite{KF-2018-IOI, JF-2023-AIRI}:
  both itself and its dual are not PN-independent,
  even over models with rather strong properties (see Appendix \ref{appd.ncvri}).
\end{example}

For PN-independent bundles, we can also define a stronger notion of bisimulation,
which does not quantify over \emph{all} subsets of the $\tau$-domains,
but only the neighborhoods generated by $\tau$.
Compared to the kind of neighborhood bisimulation defined in \cite{HH-2007-NBHBIS},
our bisimulation for PN-independent bundles shows greater resemblance to 
the bisimulation for monotonic neighborhood models (see e.g.\ \cite{EP-2017-NBHSM} for the definition)
and the bisimulation for (non-)contingency modality defined in \cite{JF-2014-AN}.

\begin{definition}[Independent Bisimuluation]\label{def.cvbis}
  Let $\M$ be a model, and let $\tau \in \TT_\BA$.
  Let $\Z$ be a binary relation on $\M$.
  We say $\Z$ is an \emph{independent $\tau$-bisimulation} on $\M$,
  if for all $(w, v) \in \Z$, we have
  \begin{itemize}
    \item (Inv) For all $p \in \mathbf{P}$, $w \in V(p)$ iff $v \in V(p)$;
    \item ($\tau$-Zig) For all $(X_0, X_1) \in \Nbh_\M(w, \tau)$ s.t.\ $\Z[X_0] \cap X_1 = \emptyset$,
    there is $(Y_0, Y_1) \in \Nbh_\M(v, \tau)$ s.t.\ 
    $(Y_0, Y_1) \sqsubseteq (\Z[X_0], \Z[X_1])$;
    \item ($\tau$-Zag) For all $(Y_0, Y_1) \in \Nbh_\M(v, \tau)$ s.t.\ $\Z[Y_0] \cap Y_1 = \emptyset$,
    there is $(X_0, X_1) \in \Nbh_\M(w, \tau)$ s.t.\ $(X_0, X_1) \sqsubseteq (\Z^{-1}[Y_0], \Z^{-1}[Y_1])$.
\end{itemize}

We say $\M,w$ and $\N,v$ are \emph{independently $\tau$-bisimilar},
denoted by $\M,w \bis_\tau^\times \N,v$,
if there is an independent $\tau$-bisimulation $\Z$ on the disjoint union of $\M$ and $\N$ s.t.\ $(w, v) \in \Z$.
\end{definition}

Note that when $\tau^+ \in \TT^+_\BA$ (so that for any $w$ in $\M$, $\Nbh_\M(w, \tau^+) = N^+_{w, \tau^+} \times \{\emptyset\}$ for some $N^+_{w, \tau^+}$
),
for any relation $\Z$ on $\M$ and any $(X, Y) \in \Nbh_\M(w, \tau^+)$,
we trivially have $\Z[X] \cap Y = \emptyset$,
so ($\tau^+$-Zig) and ($\tau^+$-Zag) in Definition \ref{def.cvbis} can be simplified as follows:
\begin{itemize}
    \item ($\tau^+$-Zig) For all $(X, \emptyset) \in \Nbh_\M(w, \tau^+)$
    there is $(Y, \emptyset) \in \Nbh_\M(v, \tau^+)$ s.t.\ 
    $Y \subseteq \Z[X]$;
    \item ($\tau^+$-Zag) For all $(Y, \emptyset) \in \Nbh_\M(v, \tau^+)$,
    there is $(X, \emptyset) \in \Nbh_\M(w, \tau^+)$ s.t.\ $X \subseteq \Z^{-1}[Y]$.
\end{itemize}
In other words, independent bisimulation for $\tau^+ \in \TT^+_\BA$ is essentially the bisimulation for \emph{monotonic neighborhood models} \cite{EP-2017-NBHSM} applied to $\tau^+$-generated neighborhood structures.

Also note that for the contingency bundle $\tau_\triangledown = \Diamond (+) \wedge \Diamond (-)$,
$\Nbh_\M(w, \tau_\triangledown) = \{\{v\} \mid v \in R(w)\}^2$ for any $\M,w$,
so the conditions ($\tau_\triangledown$-Zig) and ($\tau_\triangledown$-Zag) can be simplified as follows:
\begin{itemize}
    \item ($\tau_\triangledown$-Zig) For all $w_0, w_1 \in R(w)$ s.t.\ $(w_0, w_1) \notin \Z$,
    there are $v_0, v_1 \in R(v)$ s.t.\ 
    $(w_0, v_0), (w_1, v_1) \in \Z$;
    \item ($\tau_\triangledown$-Zag) For all $v_0, v_1 \in R(w)$ s.t.\ $(v_0, v_1) \notin \Z$,
    there are $w_0, w_1 \in R(v)$ s.t.\ 
    $(w_0, v_0), (w_1, v_1) \in \Z$.
\end{itemize}
Then, what we obtain is essentially the bisimulation for (non-)contingency defined in \cite{JF-2014-AN}.

Again, it is rather routine to check that $\bis^\times_\tau$ is an equivalence relation.
The reader could also check that $\M,w_0$ and $\N,v_0$ given in Example \ref{ex.bis} are independently $\tau$-bisimilar.
We then show that $\bis^\times_\tau$ always coincides with $\bis_\tau$ when $\tau$ is PN-independent (proof in Appendix \ref{appd.cvbis}).

\begin{proposition}\label{prop.cvbis}
  For any models $\M,w$, $\N,v$ and $\tau \in \TT_\BA$,
  $\M,w \bis^\times_\tau \N,v \implies \M,w \bis_\tau \N,v$;
  moreover, if $\tau$ is PN-independent over $\M$ and $\N$,
  then $\M,w \bis_\tau \N,v \implies \M,w \bis^\times_\tau \N,v$.
\end{proposition}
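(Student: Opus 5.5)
For the first implication, I will show that any independent $\tau$-bisimulation $\Z$ on a model $\M$ (the disjoint union) is itself a $\tau$-bisimulation. I expect to be able to assume without loss of generality that $\Z$ is symmetric, since the union of $\Z$ and $\Z^{-1}$ is again an independent $\tau$-bisimulation (Zig and Zag swap). (Inv) is immediate, so the work is ($\tau$-Coh). Fix $(w,v) \in \Z$ and $\Z$-coherent $X \subseteq \Dom_\M(w,\tau)$, $Y \subseteq \Dom_\M(v,\tau)$, and suppose $X \in \Nbh^\cp_\M(w,\tau)$, witnessed by $(X_0,X_1) \in \Nbh_\M(w,\tau)$ with $X_0 \subseteq X$ and $X_1 \subseteq \Dom_\M(w,\tau)\setminus X$. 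Coherence gives $\Z[X_0] \cap \Dom_\M(w,\tau) \subseteq X$, so $\Z[X_0] \cap X_1 = \emptyset$, and ($\tau$-Zig) applies. It yields $(Y_0,Y_1) \in \Nbh_\M(v,\tau)$ with $Y_0 \subseteq \Z[X_0]$ and $Y_1 \subseteq \Z[X_1]$. By Proposition \ref{prop.nbhsem2}, $Y_0,Y_1 \subseteq \Dom_\M(v,\tau)$, so coherence gives $Y_0 \subseteq Y$. Moreover, if $u \in Y_1 \cap Y$, then $u \in \Z[x]$ for some $x \in X_1$; by symmetry $x \in \Z[Y] \cap \Dom_\M(w,\tau) \subseteq X$, contradicting $X_1 \cap X = \emptyset$. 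Hence $Y \in \Nbh^\cp_\M(v,\tau)$, and the converse direction is the same argument using Zag.

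For the second implication, I will take the natural candidate. Let $\Z$ be a $\tau$-bisimulation on the disjoint union, made into an equivalence relation if possible; the largest $\tau$-bisimulation should be an equivalence, since $\bis_\tau$ is noted to be one. I will show $\Z$ is an independent $\tau$-bisimulation. For ($\tau$-Zig), take $(X_0,X_1) \in \Nbh_\M(w,\tau)$ with $\Z[X_0]\cap X_1=\emptyset$. Set
\[
X := \Z[X_0]\cap \Dom_\M(w,\tau), \qquad Y := \Z[X_0]\cap\Dom_\M(v,\tau).
\]
Because $\Z$ is an equivalence, $\Z[X\cup Y] \subseteq \Z[X_0]$, so $X$ and $Y$ are $\Z$-coherent. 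Also $X_0 \subseteq X$ (reflexivity) and $X_1 \cap X = \emptyset$, so $X \in \Nbh^\cp_\M(w,\tau)$, and ($\tau$-Coh) gives $Y \in \Nbh^\cp_\M(v,\tau)$. This provides a lower-bound witness $(Y_0, Y_1') \in \Nbh_\M(v,\tau)$ with $Y_0 \subseteq Y \subseteq \Z[X_0]$, but its $Y_1'$ need not lie inside $\Z[X_1]$.

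To fix the negative part, I will use a second coherent pair,
\[
X' := \Dom_\M(w,\tau)\setminus \Z[X_1], \qquad Y' := \Dom_\M(v,\tau)\setminus\Z[X_1].
\]
Complements of $\Z$-closed sets are $\Z$-closed, so this pair is coherent. Again $X_0 \subseteq X'$, since $\Z[X_0]\cap X_1=\emptyset$ and $\Z$ is symmetric, and $X_1 \cap X' = \emptyset$, so $X' \in \Nbh^\cp_\M(w,\tau)$, and hence $Y' \in \Nbh^\cp_\M(v,\tau)$. This gives $(Y_0'',Y_1) \in \Nbh_\M(v,\tau)$ with $Y_1 \subseteq \Dom_\M(v,\tau)\setminus Y' \subseteq \Z[X_1]$.

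Finally, PN-independence combines the two witnesses into $(Y_0,Y_1)$. The disjointness hypotheses of Definition \ref{def.cvb} hold: $Y_0\cap Y_1'=\emptyset$ and $Y_0''\cap Y_1=\emptyset$ hold because these are witnesses, and $Y_0 \cap Y_1 \subseteq \Z[X_0]\cap\Z[X_1] = \emptyset$. ($\tau$-Zag) follows by symmetry.

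The main obstacle is this last combination step together with the preliminary reduction to an equivalence relation. I would first try to prove that the reflexive, symmetric, transitive closure of a $\tau$-bisimulation is still a $\tau$-bisimulation; failing that, I would work directly with $\bis_\tau$ restricted to the disjoint union. Checking that this relation satisfies ($\tau$-Coh) is where care is needed, since coherence refers to $\Z$-images.
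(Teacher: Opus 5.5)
Your proof is correct and follows the paper's argument essentially step for step. The first direction is the same Zig/Zag-to-($\tau$-Coh) computation, with your symmetrization $\Z\cup\Z^{-1}$ doing the work where the paper assumes an equivalence, and the second direction uses the same two coherent pairs built from $\Z[X_0]$ and from the complement of $\Z[X_1]$, merged by PN-independence. The paper also passes to an equivalence relation in one line. Your first option for that step works: let $E$ be the equivalence closure of $\Z$ and let $w=u_0,\dots,u_n=v$ be a $\Z\cup\Z^{-1}$-chain. For $E$-coherent $X,Y$, the sets $E[X\cup Y]\cap\Dom_\M(u_i,\tau)$ start at $X$, end at $Y$, and consecutive ones are $\Z$-coherent, so ($\tau$-Coh) transfers along the chain.
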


\begin{corollary}
  For any models $\M,w$, $\N,v$ and $\tau \in \TT_\BA$, $\M,w \bis^\times_\tau \N,v \implies \M,w \equiv_\tau \N,v$;
  moreover,
  if $\M$, $\N$ are also image-finite and $\tau$ is PN-independent over $\M$ and $\N$,
  then $\M,w \bis^\times_\tau \N,v \iff \M,w \equiv_\tau \N,v$.
\end{corollary}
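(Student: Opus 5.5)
The corollary follows by chaining results already available, so the plan is essentially bookkeeping.

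For the first claim, I assume $\M,w \bis^\times_\tau \N,v$. The first half of Proposition \ref{prop.cvbis}, which holds for arbitrary $\tau$ and arbitrary models, gives $\M,w \bis_\tau \N,v$. The preservation proposition proved just before Theorem \ref{thm.hm} then yields $\M,w \equiv_\tau \N,v$. No image-finiteness or PN-independence is needed for this direction.

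For the biconditional, the left-to-right direction is the case just handled. For right-to-left, I assume $\M,w \equiv_\tau \N,v$ with $\M,\N$ image-finite. Theorem \ref{thm.hm} gives $\M,w \bis_\tau \N,v$. Since $\tau$ is PN-independent over $\M$ and $\N$, the second half of Proposition \ref{prop.cvbis} upgrades this to $\M,w \bis^\times_\tau \N,v$.

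The only point that needs care is a bookkeeping issue about disjoint unions. Both $\bis_\tau$ and $\bis^\times_\tau$ are defined via relations on the disjoint union of $\M$ and $\N$. The proof of Theorem \ref{thm.hm} also works inside the disjoint union, which requires that union to be image-finite; this holds, because $R_a(u)$ in the union is just $R_a(u)$ computed in whichever component contains $u$. Likewise, Proposition \ref{prop.cvbis} is stated with the hypothesis ``PN-independent over $\M$ and $\N$'', so nothing further about the union has to be verified by hand. If it were, the check would be that $\Nbh$ at a world of the disjoint union coincides with $\Nbh$ computed in its own component, since neighborhoods are built only from successors. With these observations the corollary is immediate, and I expect no genuine obstacle.
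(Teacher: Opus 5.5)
Your proposal is correct and matches the intended argument: the paper gives no separate proof and treats the corollary as immediate from Proposition~\ref{prop.cvbis}, the preservation result for $\bis_\tau$, and Theorem~\ref{thm.hm}, chained in exactly the order you describe. Your remarks on disjoint unions (image-finiteness and neighborhoods being computed within each component) are accurate but not strictly needed, since both cited results are already stated for pairs of models.
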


\subsection{Convex Neighborhood Semantics}

Next, we consider axiomatizations of PN-independent bundled modalities.
Our strategy is to first consider a kind of \emph{convex neighborhood semantics} for $\LL_\bund$,
which abstract the core structure of the bundled semantics for all PN-independent bundles,
and then transfer completeness results under such a semantics to concrete bundled semantics on Kripke models using representation theorems.

\begin{definition}[Convex Neighborhood Models]
  A \emph{convex neighborhood model} is a $4$-tuple $\N = $\\ $(W, N^+, N^-, V)$,
  where $W \neq \emptyset$ is the set of worlds,
  $N^+$, $N^-$ are both functions from $W$ to $\wp(\wp(W))$,
  and $V: \BP \to \wp(W)$ is the valuation function.
  A convex neighborhood frame is a model without the $V$ part.

  We say $\N$ is a \emph{core model} if for all $w \in W$, we have 
  (i) for all $X \in N^+(w)$, there is $Y \in N^-(w)$ s.t.\ $X \cap Y = \emptyset$;
  (ii) for all $Y \in N^-(w)$, there is $X \in N^+(w)$ s.t.\ $X \cap Y = \emptyset$, and 
  (iii) for all $\star \in \{+, -\}$ and $Z, Z' \in N^\star(w)$,
  if $Z \subseteq Z'$, then $Z = Z'$.\footnote{
    Our core convex neighborhood models generalize core monotonic neighborhood models \cite{HH-2003-MML,EP-2017-NBHSM}.
  }
\end{definition}

Then, we define the following kind of \emph{support semantics} on convex neighborhood models:

\begin{definition}[Neighborhood Semantics]
  Let $\N = (W, N^+, N^-, V)$ be a convex neighborhood model.
  For any $w \in W$ and $\phi \in \LL_\bund$, the satisfaction relation $\Vdash$ is defined recursively as usual, with the clause for $\bund$ being
  ($|\phi|_\N = \{w \in W \mid \N,w \Vdash \phi\}$):
  \begin{center}
  \begin{tabular}{|lcl|}
  \hline
  &&\\[-0.9em]
  $\N,w \Vdash \bund \phi$ & $\iff$ & $\exists X \in N^+(w) \exists Y \in N^-(w) ((X, Y) \sqsubseteq (|\phi|_\N, W \setminus |\phi|_\N))$.\\[0.1em]
  \hline
  \end{tabular}
  \end{center}
  As usual, validity on frames is defined as truth at all worlds under all valuations.
\end{definition}

Equivalently, given a convex neighborhood model $\N = (W, N^+, N^-, V)$, 
the above semantics can also be defined by first generating a neighborhood function $N$ from $N^+$ and $N^-$,
and then applying the exact neighborhood semantics to this $N$ (i.e.\  $\M,w \Vdash \bund \phi$ iff $|\phi|_\N \in N(w)$),
where the $N$ in question is defined by $N(w) = \{Z \subseteq W \mid \exists X \in N^+(w) \exists Y \in N^-(w) ((X, Y) \sqsubseteq (Z, W \setminus Z)\}$ for all $w \in W$.
Conversely, given an ordinary neighborhood model $\N = (W, N, V)$ where $N$ maps each world to a \emph{convex} family of neighborhoods,
we can also generate a pair of positive and negative neighborhood functions $N^+$ and $N^-$ by letting $N^+ = N$ and $N^-(w) = \{W \setminus X \mid X \in N(w)\}$ for all $w \in W$;
then, the exact neighborhood semantics w.r.t.\ $N$ also coincides with the above support semantics w.r.t.\ $N^+$ and $N^-$ (i.e.\ $|\phi|_\N \in N(w)$ iff $\exists X \in N^+(w)\exists Y \in N^-(w) ((X, Y) \sqsubseteq (|\phi|_\N, W \setminus |\phi|_\N))$).
The situation here resembles monotonic neighborhood logic,
where we also have an equivalence between the support semantics w.r.t.\ an arbitrary neighborhood function $N^+$ and the exact semantics w.r.t.\ its upward closure $N$
(i.e.\ $|\phi|_\N \in N(w)$ iff $\exists X \in N^+(w) X \subseteq |\phi|_\N$, where $N(w) = \{Z \subseteq W \mid \exists X \in N^+(w) X \subseteq Z\}$),
and the two kinds of semantics coincide if the neighborhood function in question is upward closed.

Next, by a \emph{convex modal logic}, we mean a set of $\LL_\bund$-formulas that contains all instances of the axioms and is closed under the rules displayed in the following table as well as uniform substitution.
The minimal convex modal logic is $\mathbf{CONV}$.\footnote{
  It is worth noting that in the literature, 
  the axiom $\mathtt{CONV}$ sometimes appears in the form of the rule 
  that allows one to infer $\bund \psi \wedge \bund \chi \to \bund \phi$ from $\psi \to \phi$ and $\phi \to \chi$ (see \cite{TP-2018-SBD, TP-2019-MBD, ZX-2022-SRI} for example).
  This is derivable in $\mathbf{CONV}$, since if $\vdash \psi \to \phi$, then $\vdash \psi \leftrightarrow (\phi \land \psi)$, and if $\vdash \phi \to \chi$, then $\vdash \chi \leftrightarrow (\phi \lor \chi)$.
  When we cite $\mathtt{CONV}$, we occasionally mean this rule.}
\begin{center}
  \begin{tabular}{|ll|ll|}
    \hline
    \multicolumn{4}{|c|}{Shared axioms and rules of convex modal logics}\\
    \hline
    \multicolumn{2}{|c|}{\textbf{Axioms}} & \multicolumn{2}{|c|}{\textbf{Rules}}\\
    $\mathtt{TAUT}$ & All propositional tautologies & 
    $\mathtt{MP}$ & From $\phi$ and $\phi \to \psi$, infer $\psi$\\
    $\mathtt{CONV}$ & $\bund (\phi \wedge \psi) \wedge \bund (\phi \vee \chi) \to \bund \phi$ &
    $\mathtt{RE}$ & From $\phi \leftrightarrow \psi$, infer $\bund \phi \leftrightarrow \bund \psi$\\[0.1em]
    \hline
  \end{tabular}
\end{center}

\begin{proposition}
$\mathbf{CONV}$ is sound w.r.t.\ the class of all convex neighborhood models (frames).
\end{proposition}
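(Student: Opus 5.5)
We argue by induction on the length of derivations in $\mathbf{CONV}$ that every theorem is valid on every convex neighborhood frame, i.e.\ true at every world of every convex neighborhood model. Validity on all frames then follows immediately, since a frame validates a formula iff every model on it does. Because validity on a frame means truth under all valuations, and uniform substitution only changes the valuation (the truth set of $\sigma(\phi)$ under $V$ equals the truth set of $\phi$ under the valuation $p \mapsto |\sigma(p)|_\N$), closure under uniform substitution is handled in the standard way.

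The routine cases go first. $\mathtt{TAUT}$ is valid and $\mathtt{MP}$ preserves validity, as in any semantics with classical Boolean clauses. For $\mathtt{RE}$: if $\phi \leftrightarrow \psi$ is valid on the frame, then under any valuation $|\phi|_\N = |\psi|_\N$. The clause for $\bund$ depends only on the truth set of its argument, so $\N,w \Vdash \bund\phi$ iff $\N,w \Vdash \bund\psi$.

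The only substantive step is the validity of $\mathtt{CONV}$. Fix $\N = (W, N^+, N^-, V)$ and $w$ with $\N,w \Vdash \bund(\phi \wedge \psi)$ and $\N,w \Vdash \bund(\phi \vee \chi)$. From the first conjunct we get $X \in N^+(w)$ and $Y \in N^-(w)$ with
\[
X \subseteq |\phi|_\N \cap |\psi|_\N, \qquad Y \subseteq W \setminus (|\phi|_\N \cap |\psi|_\N).
\]
From the second we get $X' \in N^+(w)$ and $Y' \in N^-(w)$ with
\[
X' \subseteq |\phi|_\N \cup |\chi|_\N, \qquad Y' \subseteq W \setminus (|\phi|_\N \cup |\chi|_\N).
\]
We then pick the positive witness from the first pair and the negative witness from the second, namely $(X, Y')$. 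Indeed $X \subseteq |\phi|_\N$, and $Y' \subseteq W \setminus (|\phi|_\N \cup |\chi|_\N) \subseteq W \setminus |\phi|_\N$, so $(X, Y') \sqsubseteq (|\phi|_\N, W \setminus |\phi|_\N)$. Hence $\N,w \Vdash \bund\phi$.

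This mixing step is exactly what the independent choice of $N^+(w)$ and $N^-(w)$ in the support semantics permits, mirroring PN-independence. No convexity or core condition on the frame is needed, so soundness holds for all convex neighborhood models, not just core ones. There is no real obstacle; the only care needed is to state the substitution argument at the level of frames.
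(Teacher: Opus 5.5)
Your proof is correct and is the routine verification the paper leaves implicit. The Boolean axioms and the rules $\mathtt{MP}$, $\mathtt{RE}$ and uniform substitution are handled as usual at the level of frames. $\mathtt{CONV}$ is validated by pairing the positive witness $X$ for $\bund(\phi\wedge\psi)$ with the negative witness $Y'$ for $\bund(\phi\vee\chi)$, which the independent quantification over $N^+(w)$ and $N^-(w)$ in the support semantics permits.
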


For the remainder of this section, we construct canonical models for convex modal logics. 
Note that the canonical models we define are not the usual kind of canonical models for neighborhood logics in general;
they should rather be viewed as a generalization of the kind of canonical models for \emph{monotonic} neighborhood logics defined in \cite{YD-2023-SKWA}.
An important feature of our canonical models (as well as the canonical models in \cite{YD-2023-SKWA}) is that the canonical models are automatically \emph{core} models,
and correspondingly, such canonical models admit a greater amount of canonical axioms
compared to the standard canonical models for neighborhood logics in general.

\begin{definition}[Core]
  Let $\BL$ be a convex modal logic.
  An \emph{$\BL$-filter} is a subset $F$ of $\LL_\bund$ s.t.\ 
  (i) $\top \in F$, 
  (ii) $\phi, \psi \in F$ implies $\phi \wedge \psi \in F$, and  
  (iii) $\phi \in F$, $\vdash_\BL \phi \to \psi$ implies $\psi \in F$.
  Note that $\LL_\bund$ is a filter.

  Given a set $\Delta$ of $\LL_\bund$-formulas, we say an $\BL$-filter $F$ is $+$-\emph{coherent} (resp.\ $-$-\emph{coherent}) with $\Delta$
  if for all $\phi \in F$, there is $\psi \in F$ s.t.\ $\bund (\phi \wedge \psi) \in \Delta$ (resp.\ $\nbund (\phi \wedge \psi) \in \Delta$).\footnote{Equivalently, an $\BL$-filter $F$ is $+$-coherent with $\Delta$ if $\{\phi \in F \mid \bund \phi \in \Delta\}$ is \emph{dense} in $F$ in the sense that for any $\phi \in F$ there is $\psi \in F$ with $\vdash_\BL \psi \to \phi$ such that $\bund\psi \in \Delta$; similarly for $-$-coherence with $\Delta$.}
  Let $\PC_\BL(\Delta)$ (resp.\ $\NC_\BL(\Delta)$) denote the set of \emph{maximal} $\BL$-filters among those $+$-coherent (resp.\ $-$-coherent) with $\Delta$.
\end{definition}

\begin{definition}[Canonical Models]
  Let $\BL$ be a convex modal logic.
  The canonical model w.r.t.\ $\BL$ is $\N^c_\BL = (W^c_\BL, N^\oplus_\BL, N^\ominus_\BL, V^c_\BL)$, where 
  $W^c_\BL = \{\Delta \subseteq \LL_\bund \mid \Delta \text{ is a maximally $\BL$-consistent set}\}$,
  $V^c(p) = \{\Delta \in W^c_\BL \mid p \in \Delta\}$ for all $p \in \BP$,
  and for any $\Delta \in W^c_\BL$, $N^\oplus_\BL(\Delta)$ and $N^\ominus_\BL(\Delta)$ are defined as follows:
  \begin{center}
    $\ext{F} = \{\Theta \in W^c_\BL \mid F \subseteq \Theta\}$;\,
    $N^\oplus_\BL(\Delta) = \{\ext{F} \mid F \in \PC_\BL(\Delta)\}$;\,
    $N^\ominus_\BL(\Delta) = \{\ext{F'} \mid F' \in \NC_\BL(\Delta)\}$.
  \end{center}
\end{definition}

Intuitively, if a filter $F$ is $+$-coherent (resp.\ $-$-coherent) w.r.t.\ $\Delta$,
then it is consistent with the information in $\Delta$ to have a positive (resp.\ negative) neighborhood supporting all formulas in $F$.
When defining the canonical models, 
we want to realize the formulas in an MCS $\Delta$ with a \emph{minimal} number of positive and negative neighborhoods, so we want each neighborhood to support as many formulas as possible;
thus, the positive (resp.\ negative) neighborhoods of $\Delta$ in the canonical models are constructed from the \emph{maximal} filters among the ones that are $+$-coherent (resp.\ $-$-coherent) with $\Delta$.

Now, we fix an arbitrary convex modal logic $\BL$. 
To avoid notational clutter, 
from now on,
we omit all prefixes and subscripts ``$\mathbf{L}$''.
The following shows that the canonical model w.r.t.\ $\BL$ indeed has all the properties we need.

\begin{lemma}\label{lem.ext}
  For any $\Delta \in W^c$, if $F_0$ is a filter $+$-coherent (resp.\ $-$-coherent) with $\Delta$,
  then there is $F \in \PC(\Delta)$ (resp.\ $\NC(\Delta)$) s.t.\ $F_0 \subseteq F$.
  Thus, for any $\bund \phi \in \Delta$,
  there is $F \in \PC(\Delta)$ and $F' \in \NC(\Delta)$ s.t.\ $\phi \in F$ and $\neg \phi \in F'$.
\end{lemma}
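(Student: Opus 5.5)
Zorn's lemma applied to the poset of $\BL$-filters that contain $F_0$ and are $+$-coherent with $\Delta$, ordered by inclusion, gives the first part. This poset is nonempty because $F_0$ belongs to it. Suppose $\{F_i\}_{i \in I}$ is a nonempty chain in it, and let $F = \bigcup_{i\in I} F_i$. Then $F$ is an $\BL$-filter: it contains $\top$, it is closed under $\BL$-provable consequence, and for $\phi, \psi \in F$ there is a single $F_i$ containing both, since the $F_i$ form a chain, so $\phi \wedge \psi \in F_i \subseteq F$.

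The union $F$ is also $+$-coherent with $\Delta$. Coherence is a local condition. For $\phi \in F$ we have $\phi \in F_i$ for some $i$, so some $\psi \in F_i \subseteq F$ satisfies $\bund(\phi \wedge \psi) \in \Delta$. So every chain has an upper bound, and Zorn gives a maximal element $F \supseteq F_0$. Any filter that is $+$-coherent with $\Delta$ and extends $F$ also contains $F_0$, so $F$ is maximal among all filters $+$-coherent with $\Delta$. Hence $F \in \PC(\Delta)$. The $-$-coherent case is identical with $\nbund$ in place of $\bund$.

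For the second part, fix $\bund \phi \in \Delta$. Let $F_0 = \{\chi \mid \vdash_\BL \phi \to \chi\}$, the principal filter generated by $\phi$. To check that it is $+$-coherent with $\Delta$, take $\chi \in F_0$ and choose $\psi = \phi \in F_0$. Since $\vdash_\BL \phi \to \chi$, the formula $\chi \wedge \phi$ is provably equivalent to $\phi$. By $\mathtt{RE}$, $\vdash_\BL \bund(\chi \wedge \phi) \leftrightarrow \bund \phi$, so $\bund(\chi \wedge \phi) \in \Delta$ because $\Delta$ is an MCS.

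Next let $F_0' = \{\chi \mid \vdash_\BL \neg\phi \to \chi\}$. Given $\chi \in F_0'$, take $\psi = \neg\phi$. Then $\chi \wedge \neg\phi$ is provably equivalent to $\neg\phi$, and by $\mathtt{RE}$ together with the definition $\nbund\theta = \bund\neg\theta$,
\[
\nbund(\chi \wedge \neg\phi) = \bund\neg(\chi\wedge\neg\phi) \leftrightarrow \bund\neg\neg\phi \leftrightarrow \bund\phi .
\]
Hence $\nbund(\chi \wedge \neg\phi) \in \Delta$, and $F_0'$ is $-$-coherent with $\Delta$. Applying the first part to $F_0$ and $F_0'$ yields the required $F \in \PC(\Delta)$ containing $\phi$ and $F' \in \NC(\Delta)$ containing $\neg\phi$.

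The only step needing care is checking that the union of a chain stays coherent. That check works because coherence asks for a witness $\psi$ inside the filter itself, so the witness found in $F_i$ carries over to the union. Note also that $\mathtt{CONV}$ is not needed anywhere in this lemma; only $\mathtt{RE}$ and the properties of MCSs are used.
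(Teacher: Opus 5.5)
Your proof is correct and follows the paper's route: Zorn's Lemma, using that the union of a chain of coherent filters stays coherent because the witness lies inside the filter, and then checking via $\mathtt{RE}$ that the filters generated by $\phi$ and $\neg\phi$ are $+$- and $-$-coherent. Your restriction to filters containing $F_0$ and your remark that $\mathtt{CONV}$ is not used are both accurate, and the argument is otherwise the same as the paper's.
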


\begin{proof}
  Observe that if $\mathcal{F}$ forms a $\subseteq$-chain of filters $+$-coherent (resp.\ $-$-coherent) with $\Delta$,
  then $\bigcup \mathcal{F}$ is also a filter that is $+$-coherent (resp.\ $-$-coherent) with $\Delta$.
  Thus, by Zorn's Lemma, every $+$-coherent (resp.\ $-$-coherent) filter can be extended into a maximal one.\footnote{
  Since the language $\LL_\bund$ is countable,
  our use of Zorn's Lemma here is not essential.
  For example, we could enumerate all formulas in $\LL_\bund$,
  add them one by one to the filter in question if admissible, and repeat the procedure for countably many times.
  }
  If $\bund \phi \in \Delta$, then the filter $F_0$ generated by $\{\phi\}$ consisting of all logical consequences of $\phi$ under $\BL$ is $+$-coherent with $\Delta$, since for any $\psi \in F_0$, $\vdash \phi \leftrightarrow (\psi \land \phi)$ and thus $\bund (\psi \land \phi) \in \Delta$.
  Similarly, the filter generated by $\{\lnot\phi\}$ is $-$-coherent with $\Delta$.
\end{proof}

\begin{lemma}\label{lem.nwitt}
  For any $\Delta \in W^c$, if $\bund \phi \notin \Delta$,
  then there is no $F \in \PC(\Delta)$ and $F' \in \NC(\Delta)$ satisfying $\phi \in F$ and $\neg \phi \in F'$.
\end{lemma}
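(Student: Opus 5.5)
We argue by contraposition: assume $F \in \PC(\Delta)$ and $F' \in \NC(\Delta)$ with $\phi \in F$ and $\neg\phi \in F'$, and derive $\bund\phi \in \Delta$. Maximality of $F$ and $F'$ will not be needed; only the coherence conditions are used, together with one application of $\mathtt{CONV}$.

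First we extract the two relevant boxed formulas. Since $\phi \in F$ and $F$ is $+$-coherent with $\Delta$, there is $\psi \in F$ with $\bund(\phi \wedge \psi) \in \Delta$. Since $\neg\phi \in F'$ and $F'$ is $-$-coherent with $\Delta$, there is $\chi' \in F'$ with $\nbund(\neg\phi \wedge \chi') \in \Delta$. Unfolding the abbreviation gives $\bund\neg(\neg\phi \wedge \chi') \in \Delta$. Since $\neg(\neg\phi \wedge \chi')$ is provably equivalent to $\phi \vee \neg\chi'$, rule $\mathtt{RE}$ and closure of the MCS $\Delta$ under $\BL$-provable implications give $\bund(\phi \vee \chi) \in \Delta$ with $\chi := \neg\chi'$.

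We then apply $\mathtt{CONV}$. The instance $\bund(\phi \wedge \psi) \wedge \bund(\phi \vee \chi) \to \bund\phi$ belongs to $\BL$. Since both conjuncts of the antecedent lie in $\Delta$ and $\Delta$ is maximally consistent, $\bund\phi \in \Delta$. This is the contrapositive of the lemma.

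There is no real obstacle. The only points needing care are the correct unfolding of $\nbund$ as $\bund\neg$, and the use of $\mathtt{RE}$ to replace $\neg(\neg\phi\wedge\chi')$ by $\phi \vee \neg\chi'$. Notably, the argument does not even need $F$ and $F'$ to be jointly consistent.
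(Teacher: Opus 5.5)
Your proof is correct and follows the paper's argument step for step. Coherence gives $\bund(\phi\wedge\psi)\in\Delta$ and $\nbund(\neg\phi\wedge\chi)\in\Delta$, $\mathtt{RE}$ rewrites the latter as $\bund(\phi\vee\neg\chi)$, and one instance of $\mathtt{CONV}$ then gives $\bund\phi\in\Delta$.
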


\begin{proof}
  Note that if there is $F \in \PC(\Delta)$ and $F' \in \NC(\Delta)$ s.t.\ $\phi \in F$ and $\neg \phi \in F'$,
  then since $F$ and $F'$ are $+$-coherent and $-$-coherent with $\Delta$ respectively, there is $\psi \in F$ and $\chi \in F'$
  s.t.\ $\bund (\phi \wedge \psi), \nbund (\neg \phi \wedge \chi) \in \Delta$.
  Also note that $\nbund (\neg \phi \wedge \chi) \in \Delta$ implies $\bund (\phi \vee \neg \chi) \in \Delta$ by $\mathtt{RE}$.
  Thus $\bund \phi \in \Delta$ by $\mathtt{CONV}$.
\end{proof}

It is worth noting that in our proof for the completeness of the minimal convex modal logic $\mathbf{CONV}$,
Lemma \ref{lem.nwitt} is the only place where the axiom $\mathtt{CONV}$ is used.

% With the help of the above lemmas, we can prove the following \emph{Truth Lemma}.
\begin{proposition}\label{prop.core}
  $\N^c$ is a core model.
\end{proposition}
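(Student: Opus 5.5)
We verify the three clauses in the definition of a core model directly from the definitions of $\PC(\Delta)$ and $\NC(\Delta)$, using Lemma \ref{lem.ext} to produce witnesses. Clauses (i) and (ii) need one coherence step each. Clause (iii) rests on a general fact about how filters compare with their extensions.

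For (i), let $X = \ext{F}$ with $F \in \PC(\Delta)$. Since $\top \in F$ and $F$ is $+$-coherent with $\Delta$, there is $\psi \in F$ with $\bund(\top \wedge \psi) \in \Delta$, so $\bund \psi \in \Delta$ by $\mathtt{RE}$. Lemma \ref{lem.ext} then gives $F' \in \NC(\Delta)$ with $\neg\psi \in F'$. Every $\Theta \in \ext{F}$ contains $\psi$ and every $\Theta \in \ext{F'}$ contains $\neg\psi$. Maximal consistent sets cannot contain both, so $\ext{F} \cap \ext{F'} = \emptyset$, and $Y = \ext{F'} \in N^\ominus(\Delta)$ is the required set.

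Clause (ii) is symmetric. From $\top \in F'$ and $-$-coherence we get $\chi \in F'$ with $\nbund \chi \in \Delta$, that is, $\bund \neg\chi \in \Delta$. Lemma \ref{lem.ext} (applied to $\neg\chi$) yields $F \in \PC(\Delta)$ with $\neg\chi \in F$. Hence $\ext{F}$ is disjoint from $\ext{F'}$, since the latter only contains sets with $\chi$.

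For (iii), the key auxiliary claim is: for $\BL$-filters $F, G$, if $\ext{F} \subseteq \ext{G}$ then $G \subseteq F$.
\begin{itemize}
\item If $F = \LL_\bund$, the inclusion $G \subseteq F$ is trivial.
\item Otherwise, suppose $\psi \in G \setminus F$. Then $F \cup \{\neg\psi\}$ is $\BL$-consistent. Indeed, if it were not, there would be $\phi \in F$ (using closure under $\wedge$) with $\vdash \phi \to \psi$, and then $\psi \in F$ by upward closure.
\item By Lindenbaum we extend $F \cup \{\neg\psi\}$ to some $\Theta \in W^c$. This $\Theta$ lies in $\ext{F} \setminus \ext{G}$, a contradiction.
\end{itemize}
Now take $Z = \ext{F}$ and $Z' = \ext{F'}$, both in $N^\oplus(\Delta)$, with $Z \subseteq Z'$. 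The claim gives $F' \subseteq F$. Since $F$ is $+$-coherent with $\Delta$ and $F'$ is maximal among such filters, $F' = F$, so $Z = Z'$. The case of $N^\ominus(\Delta)$ is identical.

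The only delicate point I expect is the degenerate case of the inconsistent filter $\LL_\bund$, which arises when, e.g., $\bund\bot \in \Delta$. It is handled uniformly by the trivial case of the auxiliary claim, and it causes no problem in (i) and (ii), because the disjointness argument there never uses consistency of $F$.
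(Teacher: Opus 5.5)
Your proof is correct and matches the paper's argument step for step. Clauses (i) and (ii) use the same coherence-plus-Lemma \ref{lem.ext} witness, and (iii) uses the same route: $\ext{F}\subseteq\ext{G}$ forces $G\subseteq F$ via a Lindenbaum extension of $F\cup\{\neg\psi\}$, and maximality then gives $G=F$. Your separate case $F=\LL_\bund$ is harmless but redundant, because your consistency argument for $F\cup\{\neg\psi\}$ already covers it: an inconsistent filter contains every formula, so no $\psi\in G\setminus F$ exists.
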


\begin{proof}
  Let $\Delta \in W^c$ be arbitrary.
  For requirement (i), take any $F \in \PC(\Delta)$.
  Since $F$ is $+$-coherent, there is $\phi \in F$ such that $\bund\phi \in \Delta$.
  So by Lemma \ref{lem.ext}, there is $F' \in \NC(\Delta)$ s.t.\ $\neg \phi \in F'$,
  and thus $\ext{F} \cap \ext{F'} = \emptyset$.
  Requirement (ii) can be proved similarly.
  For (iii), take $F, G \in \PC(\Delta)$ s.t.\ $\ext{F} \subseteq \ext{G}$.
  Now $F$ must extend $G$, since otherwise there would have been a $\phi \in G \setminus F$,
  and $F \cup \{\lnot\phi\}$ can be extended to a maximally consistent set $\Theta \in W^c$ such that $\Theta \in \ext{F} \setminus \ext{G}$, contradicting $\ext{F} \subseteq \ext{G}$.
  But then $G = F$ since $F$ is maximal among those filters $+$-coherent with $\Delta$, of which $G$ is a member.
  Hence, $\ext{F} = \ext{G}$.
  The case for $F',G' \in \NC(\Delta)$ is similar.
\end{proof}

\begin{lemma}[Truth]\label{lem.truth}
  For any $\Delta \in W^c$ and $\phi \in \LL_\bund$, $\N^c, \Delta \Vdash \phi \iff \phi \in \Delta$.
\end{lemma}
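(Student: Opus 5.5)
We prove the Truth Lemma by induction on $\phi$. The atomic and Boolean cases are routine, using the definition of $V^c$ and the usual properties of maximally consistent sets. The only substantive case is $\bund \phi$. Here the induction hypothesis gives $|\phi|_{\N^c} = \{\Theta \in W^c \mid \phi \in \Theta\}$, which we abbreviate as $\widehat{\phi}$. Consequently $W^c \setminus |\phi|_{\N^c} = \widehat{\neg\phi}$. So we must show
\[
\bund\phi \in \Delta \iff \exists F \in \PC(\Delta)\, \exists F' \in \NC(\Delta)\, \big(\ext{F} \subseteq \widehat{\phi} \text{ and } \ext{F'} \subseteq \widehat{\neg\phi}\big).
\]

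For the left-to-right direction, assume $\bund\phi \in \Delta$. Lemma \ref{lem.ext} supplies $F \in \PC(\Delta)$ with $\phi \in F$ and $F' \in \NC(\Delta)$ with $\neg\phi \in F'$. Then every $\Theta \in \ext{F}$ contains $\phi$, and every $\Theta \in \ext{F'}$ contains $\neg\phi$. This gives the required inclusions $\ext{F} \subseteq \widehat{\phi}$ and $\ext{F'} \subseteq \widehat{\neg\phi}$.

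For the right-to-left direction, suppose we have $F \in \PC(\Delta)$ and $F' \in \NC(\Delta)$ with $\ext{F} \subseteq \widehat{\phi}$ and $\ext{F'} \subseteq \widehat{\neg\phi}$. We first convert these semantic inclusions into syntactic membership: $\phi \in F$ and $\neg\phi \in F'$. This step is the only delicate point, and we handle it with the same argument as in Proposition \ref{prop.core}(iii).

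Suppose $\phi \notin F$. Then $F \cup \{\neg\phi\}$ is $\BL$-consistent. Otherwise there would be $\psi \in F$ with $\vdash \psi \to \phi$, since $F$ is closed under conjunction, and filter closure would then put $\phi \in F$. Extending $F \cup \{\neg\phi\}$ to a maximally consistent set by Lindenbaum yields some $\Theta \in \ext{F}$ with $\phi \notin \Theta$, contradicting $\ext{F} \subseteq \widehat{\phi}$. Hence $\phi \in F$. The same argument applied to $F'$ gives $\neg\phi \in F'$.

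Finally, suppose for contradiction that $\bund\phi \notin \Delta$. Then Lemma \ref{lem.nwitt} says no $F \in \PC(\Delta)$ and $F' \in \NC(\Delta)$ can satisfy $\phi \in F$ and $\neg\phi \in F'$. But we have just produced such a pair, so $\bund\phi \in \Delta$. This completes the modal case and hence the induction.
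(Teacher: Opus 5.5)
Your proof is correct and follows the paper's argument: Lemma \ref{lem.ext} gives the left-to-right direction, Lemma \ref{lem.nwitt} (in contrapositive) gives the right-to-left direction, and the induction hypothesis plus a Lindenbaum argument link $\ext{F} \subseteq |\phi|_{\N^c}$ to $\phi \in F$. The only difference is that you spell out that Lindenbaum step, the nontrivial half of ``$\phi \in F$ iff $\phi$ holds throughout $\ext{F}$'', which the paper folds into ``by IH''.
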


\begin{proof}
  As usual, the lemma is proved by induction. We only consider the inductive case for $\bund \phi$.

  By Lemma \ref{lem.ext} and \ref{lem.nwitt}, 
  $\bund \phi \in \Delta$ iff there is $F \in \PC(\Delta)$ and $F' \in \NC(\Delta)$ s.t.\ $\phi \in F$ and $\neg \phi \in F'$.
  Moreover, by IH, $\phi \in F$ iff for any $\Theta \in \ext{F}$, $\N^c, \Theta \Vdash \phi$,
  and $\neg \phi \in F'$ iff for any $\Theta \in \ext{F'}$, $\N^c, \Theta \not \Vdash \phi$.
  Thus, $\bund \phi \in \Delta$ iff there is $F \in \PC(\Delta)$ and $F' \in \NC(\Delta)$ s.t.\ $(\ext{F}, \ext{F'}) \sqsubseteq (|\phi|_{\N^c}, W^c \setminus |\phi|_{\N^c})$ iff $\N^c,\Delta \Vdash \bund \phi$.
\end{proof}

Thus, when the logic $\BL$ in question is just $\mathbf{CONV}$, we obtain the following completeness theorem:
\begin{theorem}
  Under convex neighborhood semantics, $\mathbf{CONV}$ is sound and strongly complete w.r.t.\ the class of all models,
  and also the class of all core models.
\end{theorem}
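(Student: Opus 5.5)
Soundness is the earlier proposition: $\mathbf{CONV}$ is sound w.r.t.\ all convex neighborhood models. Since core models form a subclass of all models, soundness w.r.t.\ core models follows immediately. Likewise, strong completeness w.r.t.\ core models implies strong completeness w.r.t.\ all models, because a consistent set satisfiable in a core model is satisfiable in a model. So the only real task is strong completeness w.r.t.\ core models, and I would carry it out through the canonical model $\N^c_{\mathbf{CONV}}$ just constructed.

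Concretely, I take a $\mathbf{CONV}$-consistent set $\Gamma$ and extend it to a maximally $\mathbf{CONV}$-consistent set $\Delta \in W^c$ by the standard Lindenbaum construction. This uses only $\mathtt{TAUT}$ and $\mathtt{MP}$, and countability of $\LL_\bund$ even avoids Zorn. By the Truth Lemma (Lemma \ref{lem.truth}), $\N^c,\Delta \Vdash \phi$ for every $\phi \in \Gamma$. By Proposition \ref{prop.core}, $\N^c$ is a core model. Hence every consistent set is satisfiable in a core model, which is strong completeness.

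One preliminary check is needed: the canonical construction presupposes that $\mathbf{CONV}$ is a convex modal logic in the paper's sense, which holds by definition. We also need $W^c \neq \emptyset$, as required of models; this follows from consistency of $\mathbf{CONV}$, which in turn follows from soundness (e.g.\ on a one-point model). I would make both points explicit.

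There is no substantive obstacle, since the work is already done in Lemmas \ref{lem.ext}, \ref{lem.nwitt}, \ref{lem.truth} and Proposition \ref{prop.core}. The only point needing a moment's care is that the Truth Lemma as proved relies only on the axioms and rules of $\mathbf{CONV}$, specifically $\mathtt{CONV}$ in Lemma \ref{lem.nwitt} and $\mathtt{RE}$. It therefore applies to $\mathbf{CONV}$ itself with no extra frame conditions to verify.
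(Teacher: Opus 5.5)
Your proposal is correct and matches the paper's argument. Soundness comes from the soundness proposition for all convex neighborhood models, which restricts to core models. Strong completeness comes from extending a consistent set to a maximally consistent $\Delta$ in the canonical model $\N^c$ and applying the Truth Lemma (Lemma \ref{lem.truth}) together with Proposition \ref{prop.core}; the nonemptiness of $W^c$ you flag is in fact automatic, since $\Delta$ itself lies in $W^c$.
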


\section{Case Studies}\label{sec.cases}
In this section, we show how the above canonical model construction could help us achieve completeness results for several concrete PN-independent bundled modalities 
over classes of models with rather strong properties.
To our knowledge, none of the axiomatizations presented below have appeared in the literature.
The following notion of \emph{representation} is key in linking neighborhood completeness to Kripke completeness under bundled semantics.

\begin{definition}[Representation]\label{def.rep}
  Let $\N = (W, N^+, N^-, V)$ be a convex neighborhood model,
  $\M = (W^*, \{R^*_a\}_{a \in \BA}, V^*)$ be a Kripke model,
  and let $\tau \in \TT_\BA$, $v \in W$, $w \in W^*$.
  We say $\N,v$ is \emph{$\tau$-represented} by $\M,w$,
  if there is a function $\pi: W^* \to W$ s.t.\ 
  $\pi(w) = v$,
  and the following are satisfied for all $u \in W^*$:
  \begin{itemize}
    \item (Inv) For any $p \in \BP$, $u \in V^*(p)$ iff $\pi(u) \in V(p)$;
    \item (Forth) For any $(X, Y) \in \Nbh_\M(u, \tau)$ s.t.\ $\pi[X] \cap \pi[Y] = \emptyset$,
    there is $X' \in N^+(\pi(u))$ and $Y' \in N^-(\pi(u))$
    s.t.\ $(X', Y') \sqsubseteq (\pi[X], \pi[Y])$;
    \item (Back) For any $X' \in N^+(\pi(u))$ and $Y' \in N^-(\pi(u))$ s.t.\ $X' \cap Y' = \emptyset$,
    there is $(X, Y) \in \Nbh_\M(u, \tau)$ s.t.\ 
    $(\pi[X], \pi[Y]) \sqsubseteq (X', Y')$.
  \end{itemize}
\end{definition}

With the help of Proposition \ref{prop.nbhsem1}, it is rather routine to check the following by induction:

\begin{proposition}
  If a convex neighborhood model $\N, v$ is $\tau$-represented by a Kripke model $\M, w$,
  then for any $\phi \in \LL_\bund$, $\N,v \Vdash \phi \iff \M,w \vDash_\tau \phi$.
\end{proposition}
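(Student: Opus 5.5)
The plan is to prove a stronger statement by induction on $\phi$. Fix the function $\pi: W^* \to W$ witnessing that $\N,v$ is $\tau$-represented by $\M,w$. Note that (Inv), (Forth) and (Back) are required at \emph{every} $u \in W^*$, not only at $w$. So I would prove that for every $u \in W^*$ and every $\phi \in \LL_\bund$,
\begin{center}
$\N,\pi(u) \Vdash \phi \iff \M,u \vDash_\tau \phi$.
\end{center}
Equivalently, $|\phi|^\tau_\M = \pi^{-1}[|\phi|_\N]$. The statement follows by taking $u = w$, since $\pi(w) = v$. The atomic case is exactly (Inv), and the Boolean cases are immediate. The only real work is the case $\bund \phi$, where the induction hypothesis gives $|\phi|^\tau_\M = \pi^{-1}[|\phi|_\N]$.

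For the left-to-right direction of the modal case, assume $\M,u \vDash_\tau \bund \phi$. By Proposition \ref{prop.nbhsem1} there is $(X,Y) \in \Nbh_\M(u,\tau)$ with $X \subseteq |\phi|^\tau_\M$ and $Y \subseteq W^* \setminus |\phi|^\tau_\M$. By the induction hypothesis, $\pi[X] \subseteq |\phi|_\N$ and $\pi[Y] \subseteq W \setminus |\phi|_\N$, so $\pi[X] \cap \pi[Y] = \emptyset$. Hence (Forth) applies and yields $X' \in N^+(\pi(u))$ and $Y' \in N^-(\pi(u))$ with $(X',Y') \sqsubseteq (\pi[X],\pi[Y]) \sqsubseteq (|\phi|_\N, W \setminus |\phi|_\N)$. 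This means $\N,\pi(u) \Vdash \bund\phi$.

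For the right-to-left direction, take $X' \in N^+(\pi(u))$ and $Y' \in N^-(\pi(u))$ with $(X',Y') \sqsubseteq (|\phi|_\N, W\setminus|\phi|_\N)$. These sets are automatically disjoint, so (Back) gives $(X,Y) \in \Nbh_\M(u,\tau)$ with $\pi[X] \subseteq X'$ and $\pi[Y] \subseteq Y'$. Then $X \subseteq \pi^{-1}[X'] \subseteq \pi^{-1}[|\phi|_\N] = |\phi|^\tau_\M$. Similarly, $Y \subseteq \pi^{-1}[W \setminus |\phi|_\N] = W^* \setminus |\phi|^\tau_\M$. By Proposition \ref{prop.nbhsem1} again, $\M,u \vDash_\tau \bund\phi$.

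I do not expect a genuine obstacle. The one point to get right is the strengthening of the induction to all $u \in W^*$, phrased as $|\phi|^\tau_\M = \pi^{-1}[|\phi|_\N]$. This is what lets the hypothesis be used at the arbitrary worlds appearing in $X$, $Y$, $X'$, $Y'$. It is also what produces the disjointness side-conditions of (Forth) and (Back), since images and preimages of a set and of its complement are disjoint.
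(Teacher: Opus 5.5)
Your proof is correct and matches the paper's approach. The paper only says the result is ``routine by induction with the help of Proposition~\ref{prop.nbhsem1}''; your argument fills in exactly that induction, strengthened to all $u \in W^*$ as $|\phi|^\tau_\M = \pi^{-1}[|\phi|_\N]$, with (Forth) and (Back) handling the two directions of the $\bund$ case.
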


\subsection{Someone Knows and Disagreement within Group}
In this subsection, we fix a non-empty finite set of indices $A \subseteq \rho^{-1}[\{1\}]$, and consider the ``someone knows'' bundle $\tau_\mathtt{SK}(A) = \bigvee_{a \in A} \Box_a (+)$ studied in \cite{AT-2021-SK, YD-2023-SKWA}
and the bundle $\tau_\mathtt{DG}(A) = \bigvee_{a, b \in A} \Box_a (+) \wedge \Box_b (-)$ for disagreement among a group of agents based on $A$,
where the latter generalizes the bundle for disagreement between two agents studied in \cite{TP-2018-SBD}.
We focus on axiomatizations of the $\tau_\mathtt{SK}(A)$-bundle on $S5$-models and
the $\tau_\mathtt{DG}(A)$-bundle on $KD45$-models.

Table \ref{table.ax1} lists axioms and properties of convex neighborhood models relevant to our axiomatizations.
It is routine to check that for any convex neighborhood frame,
if it has any property listed in the table,
then the corresponding axiom is valid over it.
Moreover, we can show that these axioms correspond to the properties \emph{canonically}
(the proof of the following lemma is in Appendix \ref{appd.cano}).

\begin{table}
\centering
  {\small
  \renewcommand\arraystretch{1.2}
  \begin{tabular}{|ll|l|}
    \hline
    \multicolumn{2}{|l|}{\textbf{Axioms}} & \textbf{Properties}\\
    \hline
    $\mathtt{EQU}$ & $\bund \phi \to \nbund \phi$ & 
    (Sym) \, $N^+_w = N^-_w$\\
    $\mathtt{N}_\bund$ & $\bund \top$ & 
    (Pur)$^+$ \, $N^+_w \neq \emptyset \wedge \emptyset \in N^-_w$ \\
    $\mathtt{D}_\bund$ & $\neg \bund \bot$ & 
    (Ser)$^+$ \, $\emptyset \notin N^+_w$\\
    $\mathtt{T}_\bund$ & $\bund \phi \to \phi$ & 
    (Refl)$^+$ \, $\forall X \in N^+_w \, w \in X$ \\
    $\mathtt{C}^n_\bund$ & $\bigwedge_{0 \leq i \leq n} \bund \phi_i \to \bigvee_{0 \leq i < j \leq n} \bund (\phi_i \wedge \phi_j)$ & 
    ($n$-d-Bd)$^+$ \, $|N^+_w| \leq n$\\
    $\mathtt{4}'_\bund$ & $\bund \phi \to \bund (\phi \wedge (\bund (\phi \vee \psi) \to \bund \phi))$ &
    (PI$'$)$^+$ \,\! $\forall X \in N^+_w \forall v \in X ((\exists Y \in N^-_v X \cap Y = \emptyset) \Rightarrow X \in {\uparrow} N^+_v)$\\
    \multirow{2}{*}{$\mathtt{5C}^n_\bund$} & $\bund \phi \wedge \bigwedge_{0 \leq i \leq n} \neg \bund (\phi \wedge \psi_i) \to \bund (\phi \wedge $ & 
    \multirow{2}{*}{($n$-Coa)$^+$ \, $\forall X \in N^+_w \forall v \in X |N^+_v \setminus {\uparrow} \{X\}| \leq n$} \\[-0.4em]
    & $(\bigwedge_{0 \leq i \leq n} \bund \psi_i \to \bigvee_{0 \leq i < j \leq n} \bund (\psi_i \wedge \psi_j)))$ &\\
    \hline
    \multicolumn{2}{|l|}{$\mathtt{N}_\nbund$, $\mathtt{D}_\nbund$, $\mathtt{T}_\nbund$, $\mathtt{C}^n_\nbund$, $\mathtt{4}'_\nbund$, $\mathtt{5C}^n_\nbund$
    are axioms obtained} & 
    (Pur)$^-$, (Ser)$^-$, (Refl)$^-$, ($n$-d-Bd)$^-$, (PI$'$)$^-$, ($n$-Coa)$^-$ are\\
    \multicolumn{2}{|l|}{by replacing $\bund$ with $\nbund$ in the original axioms.} &
    defined by switching $+$ and $-$ in the original definitions.\\
    \hline
  \end{tabular}
  }

  \caption{The first set of axioms and corresponding properties.
  In definitions of the properties,
  we always omit the first universal quantifier $\forall w \in W$; 
  we also use $N_w^\star$ to denote $N^\star(w)$ for any $\star \in \{+, -\}$,
  and ${\uparrow} N$ to denote $\{Z \subseteq W \mid \exists X \in N \, X \subseteq Z\}$.}
  \label{table.ax1}
\end{table}

\begin{lemma}\label{lem.cano1}
  For any convex modal logic $\BL$ and any axiom $\alpha$ in Table \ref{table.ax1},
  if all instances of $\alpha$ is in $\BL$, then the canonical model of $\BL$ has the property corresponding to $\alpha$.
\end{lemma}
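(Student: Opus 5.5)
We prove the lemma axiom by axiom, working in the canonical model $\N^c$. The tools are Lemma \ref{lem.ext} and Lemma \ref{lem.nwitt}, the Truth Lemma, and the following fact from the proof of Proposition \ref{prop.core}: for filters $F, G$, $\ext{F} \subseteq \ext{G}$ iff $G \subseteq F$. Every neighborhood has the form $\ext{F}$ with $F$ a maximal coherent filter, so each property becomes a statement about filters. In each case we assume the property fails and derive a formula, built from finitely many witnesses, that contradicts the corresponding axiom instance in $\Delta$. The $\nbund$-versions follow by the same argument with $+$ and $-$ swapped, because $\nbund$ is $\bund\neg$ and $\mathtt{RE}$ lets us move negations inside.

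\textbf{Easy cases.}
\begin{itemize}
\item (Sym) from $\mathtt{EQU}$. The instance $\bund\phi \to \bund\neg\phi$ together with $\mathtt{RE}$ makes $\bund\chi \in \Delta$ equivalent to $\nbund\chi \in \Delta$. Hence $+$-coherence and $-$-coherence coincide, so $\PC(\Delta) = \NC(\Delta)$.
\item (Pur)$^+$ from $\mathtt{N}_\bund$. Since $\bund\top \in \Delta$, the filter generated by $\top$ is $+$-coherent, so $N^\oplus(\Delta) \neq \emptyset$. The filter generated by $\neg\top$ is the whole language; it is $-$-coherent because $\nbund(\chi \wedge \bot)$ is provably equivalent to $\bund\top$. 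So $\ext{\LL_\bund} = \emptyset$ belongs to $N^\ominus(\Delta)$.
\item (Ser)$^+$ from $\mathtt{D}_\bund$. If $\emptyset = \ext{F}$, then $F$ is inconsistent and contains $\bot$. Coherence then gives $\bund(\bot \wedge \psi) \in \Delta$, i.e.\ $\bund\bot \in \Delta$, a contradiction.
\item (Refl)$^+$ from $\mathtt{T}_\bund$. For $\phi \in F \in \PC(\Delta)$, coherence gives $\bund(\phi \wedge \psi) \in \Delta$, so $\phi \in \Delta$. Thus $F \subseteq \Delta$, i.e.\ $\Delta \in \ext{F}$.
\end{itemize}

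\textbf{Boundedness and the positive-introspection case.}
\begin{itemize}
\item ($n$-d-Bd)$^+$. Suppose there are $n+1$ distinct $F_0, \dots, F_n \in \PC(\Delta)$. Maximality makes them pairwise incomparable, so for each pair $i \ne j$ we pick a formula separating them. We then choose $\phi_i \in F_i$, a conjunction of such formulas strengthened by coherence, so that $\bund\phi_i \in \Delta$. The aim is that $\bund(\phi_i \wedge \phi_j) \in \Delta$ would make the filter generated by $F_i \cup F_j$ coherent, contradicting maximality. This is where I expect real work: coherence of a joined filter needs every member of it to be densely witnessed. I would close this using $\mathtt{CONV}$ in the form of the derived rule, which lets us fatten witnesses from below and above, together with the fact that maximality forces $F_i$ to be determined by its witnessed formulas.
\item (PI$'$)$^+$ from $\mathtt{4}'_\bund$. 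Take $\ext{F} \in N^\oplus(\Delta)$, a point $\Theta \in \ext{F}$, and some $\ext{G} \in N^\ominus(\Theta)$ disjoint from $\ext{F}$. Compactness yields $\phi \in F$ with $\neg\phi \in G$, so $\nbund(\neg\phi \wedge \chi) \in \Theta$. Applying $\mathtt{4}'_\bund$ to coherence witnesses inside $F$ shows that $F$ contains formulas of the form $\bund(\phi \vee \psi) \to \bund\phi$. We then show that $F$ is contained in, or generates, a $+$-coherent filter at $\Theta$. Lemma \ref{lem.ext} extends this to some $F^* \in \PC(\Theta)$ with $F^* \supseteq F$, which gives $\ext{F} \in {\uparrow} N^\oplus(\Theta)$.
\end{itemize}

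\textbf{($n$-Coa)$^+$ from $\mathtt{5C}^n_\bund$.} This combines the two previous arguments. Given $n+1$ maximal filters at $\Theta$ none of which is contained in $F$, we pick $\psi_i$ with $\bund\psi_i \in \Theta$ and $\neg\bund(\phi \wedge \psi_i) \in \Delta$. The axiom then puts $\bigwedge_i \bund\psi_i \to \bigvee_{i<j}\bund(\psi_i \wedge \psi_j)$ into $F \subseteq \Theta$, which contradicts pairwise incompatibility exactly as in the ($n$-d-Bd) case.

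I expect the main obstacle to be the filter-separation step shared by ($n$-d-Bd) and ($n$-Coa): turning ``distinct maximal coherent filters'' into finitely many formulas $\phi_i$ with $\bund\phi_i \in \Delta$ but no $\bund(\phi_i \wedge \phi_j) \in \Delta$.
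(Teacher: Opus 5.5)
Your cases for $\mathtt{EQU}$, $\mathtt{N}_\bund$, $\mathtt{D}_\bund$, $\mathtt{T}_\bund$ and $\mathtt{4}'_\bund$ follow the paper's route. Two details remain in the last one. You have to replace $\phi$ by $\phi\vee\neg\chi$ to get $\bund\phi\in\Theta$ with $\phi\in F$. And membership of $\bund\phi\to\bund(\phi\wedge\psi)$ in $F$ comes from Proposition~\ref{prop.filt}(ii), after a $\mathtt{CONV}$ step under $\bund$.

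The real gap is the step you flag for ($n$-d-Bd)$^+$, on which your ($n$-Coa)$^+$ argument also rests. The closure you propose does not work. Suppose $\phi_i\in F_i$ is built from separating formulas and $\bund\phi_i\in\Delta$. Then one application of $\mathtt{C}^n_\bund$ yields a single formula $\bund(\phi_i\wedge\phi_j)\in\Delta$. But coherence of the filter $G_{ij}$ generated by $F_i\cup F_j$ needs a witness $\bund(\chi\wedge\chi'\wedge\psi)\in\Delta$, with $\psi\in G_{ij}$, for \emph{every} $\chi\in F_i$ and $\chi'\in F_j$.
\begin{itemize}
\item $\mathtt{CONV}$ only interpolates between a lower and an upper witness already in $\Delta$, so it cannot produce these stronger lower witnesses.
\item Proposition~\ref{prop.filt}(ii) likewise needs witnesses below arbitrarily strong members of $F_i$, which one instance of the axiom does not supply.
\item Indeed, $\bund(\phi_i\wedge\phi_j)\in\Delta$ may just as well be realized by a third maximal filter containing $\phi_i\wedge\phi_j$.
\end{itemize}
Separation ($F_i\not\subseteq F_j$) is the wrong information to encode in the $\phi_i$.

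There are two repairs.

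The paper's repair runs through infinitely many stages.
\begin{itemize}
\item Enumerate $F_i=\{\phi_i^k\}_{k\in\Nat}$ and set $\psi_i^m=\bigwedge_{k\le m}\phi_i^k$, so $\bund\psi_i^m\in\Delta$ by Proposition~\ref{prop.filt}(i).
\item Apply $\mathtt{C}^n_\bund$ at \emph{every} stage $m$.
\item By pigeonhole, fix one pair $i<j$ with $\bund(\psi_i^m\wedge\psi_j^m)\in\Delta$ for infinitely many $m$.
\item These witnesses are cofinal in $G_{ij}$, so $G_{ij}$ is $+$-coherent, and maximality forces $F_i=G_{ij}=F_j$.
\end{itemize}

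A finite argument also works if you choose the $\phi_i$ contrapositively.
\begin{itemize}
\item If $G_{ij}$ were $+$-coherent, maximality would give $F_i=F_j$, so it is not.
\item Hence for each pair $i<j$ there are $\chi_{ij}\in F_i$ and $\chi_{ji}\in F_j$ with $\bund(\chi_{ij}\wedge\chi_{ji}\wedge\psi)\notin\Delta$ for all $\psi\in G_{ij}$.
\item Take $\phi_i\in F_i$ with $\bund\phi_i\in\Delta$ and $\vdash\phi_i\to\bigwedge_{j\neq i}\chi_{ij}$.
\item Since $\phi_i\wedge\phi_j$ lies in $G_{ij}$ and provably implies $\chi_{ij}\wedge\chi_{ji}$, $\mathtt{RE}$ gives $\bund(\phi_i\wedge\phi_j)\notin\Delta$ for all $i<j$, contradicting $\mathtt{C}^n_\bund$.
\end{itemize}

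For ($n$-Coa)$^+$ the same must be done at $\Theta$. Your $\psi_i\in G_i$ with $\bund\psi_i\in\Theta$ must not only lie outside $F$ but also imply the non-coherence witnesses for the filters generated by $G_i\cup G_j$. Alternatively, run the paper's sequence argument: re-choose the $F$-side witness $\chi^m$ for each $m$, and use Proposition~\ref{prop.filt}(ii) to put the $\mathtt{5C}^n_\bund$ consequent into $F\subseteq\Theta$.
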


We then define the following two particular systems using axioms in Table \ref{table.ax1}:
\begin{align*}
\mathbf{SKS5}(A) := & \; \mathbf{CONV} \oplus \{\mathtt{N}_\bund, \mathtt{T}_\bund, \mathtt{C}^{|A|}_\bund, \mathtt{4}'_\bund, \mathtt{5C}^{|A|-1}_\bund\}\\
\mathbf{DGD45}(A) := & \; \mathbf{CONV} \oplus \{\mathtt{EQU}, \mathtt{D}_\bund, \mathtt{C}^{|A|}_\bund, \mathtt{4}'_\bund, \mathtt{5C}^{|A|-1}_\bund\}.
\end{align*}

By the canonical model construction in the previous section and Lemma \ref{lem.cano1}, we easily obtain the following neighborhood completeness results:

\begin{lemma}\label{lem.nbhcom1}
  Under convex neighborhood semantics,
  $\mathbf{SKS5}(A)$ is strongly complete w.r.t.\ the class of core models satisfying 
  (Pur)$^+$, (Refl)$^+$, ($|A|$-d-Bd)$^+$, (PI$'$)$^+$, ($(|A| - 1)$-Coa)$^+$,
  and $\mathbf{DGD45}(A)$ is strongly complete w.r.t.\ the class of core models satisfying 
  (Sym), (Ser)$^+$, ($|A|$-d-Bd)$^+$, (PI$'$)$^+$, ($(|A| - 1)$-Coa)$^+$.
\end{lemma}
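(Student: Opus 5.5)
The plan is the standard canonical-model argument. Consider $\mathbf{SKS5}(A)$ first; the case of $\mathbf{DGD45}(A)$ is handled in the same way. Given a $\mathbf{SKS5}(A)$-consistent set $\Gamma$, we extend it by Lindenbaum's lemma to a maximally consistent $\Delta \in W^c$. By the Truth Lemma (Lemma \ref{lem.truth}), $\N^c, \Delta \Vdash \Gamma$. By Proposition \ref{prop.core}, $\N^c$ is a core model. It therefore remains to show that $\N^c$ belongs to the intended class, i.e., that it satisfies each listed frame property.

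This last step is exactly Lemma \ref{lem.cano1}. Every axiom in the definition of $\mathbf{SKS5}(A)$ is an axiom of Table \ref{table.ax1}: $\mathtt{N}_\bund$, $\mathtt{T}_\bund$, $\mathtt{C}^{|A|}_\bund$, $\mathtt{4}'_\bund$ and $\mathtt{5C}^{|A|-1}_\bund$. Since $\mathbf{SKS5}(A)$ contains all their instances, Lemma \ref{lem.cano1} yields that $\N^c$ satisfies (Pur)$^+$, (Refl)$^+$, ($|A|$-d-Bd)$^+$, (PI$'$)$^+$ and ($(|A|-1)$-Coa)$^+$. We must take care to match the parameters: the table's $\mathtt{C}^n_\bund$ corresponds to ($n$-d-Bd)$^+$ and $\mathtt{5C}^n_\bund$ to ($n$-Coa)$^+$, so $n = |A|$ and $n = |A|-1$ respectively. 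For $\mathbf{DGD45}(A)$ the argument is identical, with $\mathtt{EQU}$ giving (Sym) and $\mathtt{D}_\bund$ giving (Ser)$^+$.

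The one point to check is that the canonical-model machinery of Section \ref{sec.convex} applies to these extensions rather than only to $\mathbf{CONV}$. Both systems are convex modal logics: they contain $\mathtt{TAUT}$ and $\mathtt{CONV}$, are closed under $\mathtt{MP}$ and $\mathtt{RE}$, and are closed under uniform substitution since the added axioms are schemas. Lemmas \ref{lem.ext}, \ref{lem.nwitt}, Proposition \ref{prop.core} and Lemma \ref{lem.truth} were proved for an arbitrary convex modal logic $\BL$, so they apply to both systems. With Lemma \ref{lem.cano1} assumed, this argument has no real obstacle. The genuine work, namely verifying that the canonical (maximal-filter) neighborhoods satisfy the more delicate properties (PI$'$)$^+$ and ($n$-Coa)$^+$, is delegated to Lemma \ref{lem.cano1} in Appendix \ref{appd.cano}.
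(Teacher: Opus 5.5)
Your proposal is correct and matches the paper's argument: the paper also gets this lemma directly from the canonical model construction of Section \ref{sec.convex} (the canonical model is core by Proposition \ref{prop.core}, and the Truth Lemma \ref{lem.truth} holds for any convex modal logic), combined with the canonicity results of Lemma \ref{lem.cano1}. Your parameter matching ($\mathtt{C}^{|A|}_\bund$ to ($|A|$-d-Bd)$^+$, $\mathtt{5C}^{|A|-1}_\bund$ to ($(|A|-1)$-Coa)$^+$) is right, and the paper likewise leaves the substantive canonicity proofs to Appendix \ref{appd.cano}.
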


Next, in order to turn the above neighborhood completeness into Kripke completeness under the corresponding bundled semantics,
it suffices to show that neighborhood models with the sets of properties listed in Lemma \ref{lem.nbhcom1} 
can be represented by the corresponding kinds of Kripke models.

\begin{lemma}\label{lem.unr}
  Let $\N, w$ be a core model, and assume that $\N$ satisfies ($|A|$-d-Bd)$^+$, (PI$'$)$^+$ and ($(|A| - 1)$-Coa)$^+$.
  If $\N$ also satisfies (Pur)$^+$ and (Refl)$^+$,
  then $\N$ can be $\tau_\mathtt{SK}(A)$-represented by an $S5$-Kripke model;
  on the other hand, if $\N$ also satisfies (Sym) and (Ser)$^+$,
  then $\N$ can be $\tau_\mathtt{DG}(A)$-represented by a $KD45$-Kripke model.
\end{lemma}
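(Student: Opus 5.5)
The plan is to build the representing Kripke model by a step-by-step, tree-like unravelling of $\N$. Each Kripke world $u$ is a fresh copy of some world $\pi(u)$ of $\N$, and for every agent $a \in A$ the image $\pi[R_a(u)]$ is \emph{exactly} one chosen neighborhood in $N^+(\pi(u))$. The valuation is $V^*(p) = \pi^{-1}[V(p)]$, so (Inv) is automatic. Since $\Nbh_\M(u, \tau_\mathtt{SK}(A)) = \{(R_a(u), \emptyset) \mid a \in A\}$ and $\Nbh_\M(u, \tau_\mathtt{DG}(A)) = \{(R_a(u), R_b(u)) \mid a, b \in A\}$, (Forth) and (Back) of Definition \ref{def.rep} reduce to one requirement. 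Every neighborhood of $\pi(u)$ that is relevant for (Back) must be the image of some agent's cell, and every agent's cell image must lie in ${\uparrow}N^+(\pi(u))$ (or be irrelevant for (Forth)). In the SK case, (Pur)$^+$ and core condition (iii) force $N^-_w = \{\emptyset\}$, so only $N^+$ matters. In the DG case, (Sym) gives $N^- = N^+$, so again only one family has to be realized.

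The construction proceeds in stages. At the root $w^*$ with $\pi(w^*) = w$, I use ($|A|$-d-Bd)$^+$ and $N^+_w \neq \emptyset$ (from (Pur)$^+$, resp.\ core (i) plus (Ser)$^+$) to fix a surjection $f_w: A \to N^+_w$. For each $a$, I create a fresh $a$-cell of copies of the elements of $f_w(a)$.
\begin{itemize}
\item In the S5 case, this cell also contains $w^*$ itself, using (Refl)$^+$, and $R_a$ is the equivalence relation with that cell as a class.
\item In the KD45 case, $R_a(w^*)$ is the cell, and every member of the cell has the same cell as its $R_a$-image. The cell is nonempty by (Ser)$^+$.
\end{itemize}
Consider then a newly created copy $v$ lying in an $a$-cell with image $X$. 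Its $a$-cell is already fixed, so for the remaining agents $b \neq a$ I assign, surjectively, the neighborhoods in $N^+_{\pi(v)} \setminus {\uparrow}\{X\}$. By ($(|A|-1)$-Coa)$^+$ there are at most $|A|-1$ of them. If there are none, the extra agents reuse some neighborhood of $\pi(v)$, or $X$ itself. For each $b$ I create a fresh $b$-cell (containing $v$ in the S5 case, which is legitimate by (Refl)$^+$), and I iterate the construction $\omega$ times. Because cells are always fresh, the resulting relations are equivalence relations, respectively transitive, Euclidean and serial. Hence the model is S5, respectively KD45.

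It remains to verify (Forth) and (Back) at each $v$; this is where (PI$'$)$^+$ enters, and I expect it to be the main obstacle. The cell of $v$ that is inherited rather than chosen has image $X \in N^+_{w'}$ for the parent world $w'$, and $\pi(v) \in X$ (in the KD45 case because the cell consists of copies of elements of $X$).
\begin{itemize}
\item In the SK case, core condition (i) trivially yields some $Y \in N^-_{\pi(v)} = \{\emptyset\}$ with $X \cap Y = \emptyset$. So (PI$'$)$^+$ gives $X \in {\uparrow}N^+_{\pi(v)}$, which is (Forth) for agent $a$. (Back) holds because every neighborhood of $\pi(v)$ is either above $X$, hence contains $\pi[R_a(v)]$, or is assigned to some $b$.
\item In the DG case, (PI$'$)$^+$ applies only when some $Y \in N^+_{\pi(v)}$ is disjoint from $X$. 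If no such $Y$ exists, I argue that agent $a$ never occurs in a pair relevant to (Forth) or (Back). For (Forth), $\pi[R_b(v)]$ contains a neighborhood of $\pi(v)$ and hence meets $X$. For (Back), any $X' \supseteq X$ meets every $Y'$.
\end{itemize}
Pairs $(R_a, R_a)$ never have disjoint images because cells are nonempty. For the root, surjectivity of $f_w$ gives (Back) directly, and (Forth) is immediate since images are exactly neighborhoods. Together these checks give the required $\tau$-representation in both cases.
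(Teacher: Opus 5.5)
Your construction is essentially the paper's tree unravelling. The inherited cell is kept for agent $a$, fresh cells are built for the other agents, and cell images are read off via $\pi$. Your verification of (Forth) and (Back) is sound, apart from one false step at the root.

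There is one small difference in how the remaining agents are handled. You give the agents $b \neq a$ the set $N^+_{\pi(v)} \setminus {\uparrow}\{X\}$, bounded directly by ($(|A|-1)$-Coa)$^+$, and you use (PI$'$)$^+$ only to check (Forth) for the inherited cell. The paper instead first proves $|N^+(v) \cup \{X\}| \le |A|$ from (PI$'$)$^+$ and coreness, then indexes $N^+(v) \cup \{X\}$ by $A$ with $X_a = X$. Under the hypotheses the two assignments coincide; yours makes clearer where each condition is used.

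\textbf{The false step.} In the $\tau_\mathtt{DG}(A)$ case you assert that $N^+_w \neq \emptyset$ follows from core condition (i) and (Ser)$^+$. It does not: both conditions are vacuous when $N^+_w = \emptyset$. For example, the one-world model with $N^+ = N^- = \emptyset$ is core and satisfies (Sym), (Ser)$^+$, ($|A|$-d-Bd)$^+$, (PI$'$)$^+$ and ($(|A|-1)$-Coa)$^+$. In that situation the surjection $f_w : A \to N^+_w$ does not exist, so your root stage is undefined.

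The fix is the case the paper dismisses as trivial. If $N^+_w = \emptyset$, take a single world $w^*$ with $R_a = \{(w^*, w^*)\}$ for all $a \in A$. This model is KD45. Every pair in $\Nbh_\M(w^*, \tau_\mathtt{DG}(A))$ is $(\{w^*\}, \{w^*\})$, whose images overlap, so (Forth) is vacuous; (Back) is vacuous because $N^+_w = \emptyset$.

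\textbf{Minor wording points.}
\begin{itemize}
\item In the SK bullet, the disjoint $Y$ comes simply from $\emptyset \in N^-_{\pi(v)}$ by (Pur)$^+$, not from core (i).
\item When $X$ is reused, the inherited image need not lie in $N^+$ of the parent world. This is harmless, since (PI$'$)$^+$ and Coa only need $X \in N^+_u$ for some $u$ with $\pi(v) \in X$.
\item In the DG (Forth) argument, an agent $b \neq a$ that reuses $X$ should be treated exactly like $a$. Its image need not contain a neighborhood of $\pi(v)$, but it still meets $X$.
\end{itemize}
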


\begin{proof}[Proof Sketch]
  Let $\N = (W, N^+, N^-, V)$ be a core model satisfying ($|A|$-d-Bd), (PI$'$)$^+$ and ($(|A|-1)$-Coa)$^+$,
  and let $w \in W$ be arbitrary.
  When $N^+(w) = \emptyset$, the construction is trivial,
  so we assume that $N^+(w) \neq \emptyset$.
  First, as preparation, we show that for any $u \in W$, $X \in N^+(u)$ and $v \in X$,
  $|N^+(v) \cup \{X\}| \leq |A|$.
  We consider two cases:
  \begin{itemize}
    \item Case 1: If $X \notin {\uparrow}N^+(v)$, then by (PI$'$)$^+$,
    for any $Y \in N^-(v)$, $X \cap Y \neq \emptyset$.
    Then, since $\N$ is a core model, there cannot be any $X' \in N^+(v)$ s.t.\ $X \subseteq X'$
    (for otherwise there should also be $Y' \in N^-(v)$ s.t.\ $X' \cap Y' = \emptyset$, and thus $X \cap Y' = \emptyset$).
    Hence, by ($(|A|-1)$-Coa)$^+$, $|N^+(v)| < |A|$,
    and thus $|N^+(v) \cup \{X\}| \leq |A|$.
    \item Case 2: If $X \in {\uparrow}N^+(v)$, then there is $Z \in N^+(v)$ s.t.\ $Z \subseteq X$.
    Then, since $\N$ is a core model, there is no $X' \in N^+(v)$ s.t.\ $Z \subseteq X \subsetneq X'$,
    so $|N^+(v) \setminus \{X\}| = |N^+(v) \setminus {\uparrow}\{X\}| < |A|$ by ($(|A|-1)$-Coa)$^+$,
    and thus $|N^+(v) \cup \{X\}|\leq |A|$.
  \end{itemize}
  
  Then, we construct the Kripke model $\M$ using a kind of unraveling.
  We first recursively define a countable sequence $(W_k, \{R^k_a\}_{a \in A})_{k \in \Nat}$
  s.t.\ $W_k \subseteq (A \times W)^{k+1}$,
  $R_a^k$ is a binary relation s.t.\
  $\text{dom}(R_a^k) \subseteq W_k$ and 
  $W_{k+1} \subseteq \bigcup_{a \in A} \text{ran}(R_a^k)$,
  and for all $\xi \in W_k$, 
  $R_a^k(\xi) \cap W_{k+1} \subseteq \{(\xi, a)\} \times W$
  and $\mathtt{end}[R_a^k(\xi)] \in \bigcup_{u \in W} N^+(u)$,
  where $\mathtt{end}((\zeta, a, v)) = v$ for all $(\zeta, a, v) \in \bigcup_{k \in \Nat} (A \times W)^{k+1}$. 
  \begin{itemize}
    \item Let $W_0 = \{(c, w)\}$ for some random $c \in A$.
    By ($|A|$-d-Bd)$^+$, $|N^+(w)| \leq |A|$,
    so we may index $N^+(w)$ with $A$ as $\{X_a\}_{a \in A}$.
    Then, for each $a \in A$, let 
    $R^0_a((c, w)) = \{(c, w, a, v) \mid v \in X_a\} \cup \{(c, w)\}$ if $w \in X_a$,
    and $R^0_a((c, w)) = \{(c, w, a, v) \mid v \in X_a\}$ otherwise.
    Let $W_1 = \{(c, w, a, v) \mid a \in A, v \in X_a\}$.
    \item Assume that $W_k$, $\{R^k_a\}_{a \in A}$, and $W_{k+1}$ are constructed.
    For all $\xi \in W_k$, $a \in A$ and 
    $\zeta = (\xi, a, v) \in R_a^k(\xi)$,
    let $X = \mathtt{end}[R^k_a(\xi)]$;
    then, since $v \in X \in \bigcup_{u \in W} N^+(u)$, as we proved above, 
    $|N^+(v) \cup \{X\}| \leq |A|$,
    so we may index $N^+(v) \cup \{X\}$ with $A$
    as $\{X_b\}_{b \in A}$, where $X_a = X$.
    Then, let $R_a^{k+1}(\zeta) = R_a^k(\xi)$,
    and for each $b \in A \setminus \{a\}$,
    let $R_b^{k+1}(\zeta) = \{(\zeta, b, u) \mid u \in X_b\} \cup \{\zeta\}$ if
    $v \in X_b$,
    and $R_b^{k+1}(\zeta) = \{(\zeta, b, u) \mid u \in X_b\}$ otherwise.
    Let $W_{k+2} = \{(\zeta, b, u) \mid b \in A \setminus \{a\}, u \in X_b\}$.
  \end{itemize}
  Then, let $\M = (W^*, \{R^*_a\}_{a \in A}, V^*)$,
  where $W^* = \bigcup_{k \in \Nat} W_k$,
  $R^*_a = \bigcup_{k \in \Nat} R^k_a$,
  and $V^*(p) = W^* \cap \mathtt{end}^{-1}[V(p)]$ for all $p \in \BP$.
  Note that $\M$ is by construction a $KD45$-model,
  and when $\N$ satisfies (Refl)$^+$,
  $\M$ is also an $S5$-model.
  Moreover, note that $\{\mathtt{end}[R^*_a((c, w))] \mid a \in A\} = N^+(w)$, 
  and for any $k>0$ and $\xi \in W_{k}$
  $\{\mathtt{end}[R^*_a(\xi)] \mid a \in A\} = N^+(\mathtt{end}(\xi)) \cup \{X\}$ for some $X \in \bigcup_{u \in W} N^+(u)$ s.t.\ $\mathtt{end}(\xi) \in X$.
  Thus, we could check that when $\N$ has the corresponding properties,
  $\mathtt{end}$ (restricted to $W^*$) satisfies the conditions listed in Definition \ref{def.rep} for $\tau_\mathtt{SK}(A)$ or $\tau_\mathtt{DG}(A)$,
  so $\N,w$ is $\tau_\mathtt{SK}(A)$-represented or $\tau_\mathtt{DG}(A)$-represented by $\M,(c,w)$
  since $\mathtt{end}((c, w)) = w$.
\end{proof}

Then, combining the lemmas above, we obtain the following completeness theorem.

\begin{theorem}
  $\mathbf{SKS5}(A)$ is sound and strongly complete w.r.t.\  the class of $S5$-models 
  under $\tau_\mathtt{SK}(A)$-bundled semantics,
  and $\mathbf{DGD45}(A)$ is sound and strongly complete w.r.t.\ the class of $KD45$-models 
  under $\tau_\mathtt{DG}(A)$-bundled semantics.
\end{theorem}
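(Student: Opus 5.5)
The theorem splits into soundness and completeness, and I would handle each half separately for both systems.

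For soundness, the plan is to check each axiom directly on Kripke models under the bundled semantics. Working with $\tau_\mathtt{SK}(A)$ over $S5$-models and $\tau_\mathtt{DG}(A)$ over $KD45$-models, the shared components go as follows.
\begin{itemize}
\item $\mathtt{CONV}$ holds because both bundles are PN-independent by Proposition \ref{prop.cvnf}. For $\tau_\mathtt{DG}(A)$ this uses the serial reformulation $\bigvee_{a\in A}\Box_a(+) \wedge \bigvee_{b\in A}\Box_b(-)$, as in Example \ref{ex.cbund}. PN-independence makes each $\Nbh^\cp$ convex, and Proposition \ref{prop.nbhsem2} then yields $\mathtt{CONV}$.
\item $\mathtt{RE}$ is immediate.
\end{itemize}
The remaining axioms are checked one by one.
\begin{itemize}
\item $\mathtt{C}^{|A|}_\bund$: among $|A|+1$ true formulas $\bund\phi_i$, two must be known by the same agent $a$, and $\Box_a$ distributes over conjunction. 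For $\tau_\mathtt{DG}(A)$, the $\Box_b(-)$ part also distributes, since $\Box_b\neg\phi_i$ gives $\Box_b\neg(\phi_i\wedge\phi_j)$.
\item $\mathtt{N}_\bund$ and $\mathtt{T}_\bund$ hold by reflexivity on $S5$-models.
\item $\mathtt{D}_\bund$ and $\mathtt{EQU}$ hold by seriality and the symmetric form of $\tau_\mathtt{DG}(A)$.
\item $\mathtt{4}'_\bund$: if agent $a$ knows $\phi$, then at every $a$-successor $v$ agent $a$ still knows $\phi$ by positive introspection. So at $v$, whenever $\bund(\phi\vee\psi)$ holds, $\bund\phi$ holds too (for $\tau_\mathtt{DG}(A)$ the negative witness is inherited from the one for $\phi\vee\psi$). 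Hence $a$ knows $\phi\wedge(\bund(\phi\vee\psi)\to\bund\phi)$.
\item $\mathtt{5C}^{|A|-1}_\bund$: if $a$ knows $\phi$ but no agent supports $\phi\wedge\psi_i$, then at every $a$-successor only the $|A|-1$ agents other than $a$ can witness the $\bund\psi_i$. By Euclideanness, agent $a$ at the successor has the same accessible set, so $a$ cannot witness any $\psi_i$. Pigeonhole then gives a repeated agent and hence some $\bund(\psi_i\wedge\psi_j)$.
\end{itemize}
None of this should be hard, only bookkeeping.

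For completeness, let $\Gamma$ be $\mathbf{SKS5}(A)$-consistent (respectively $\mathbf{DGD45}(A)$-consistent). By Lemma \ref{lem.nbhcom1}, $\Gamma$ is satisfied at some $w$ in a core convex neighborhood model $\N$ with the listed properties. This comes from the canonical model, using Proposition \ref{prop.core} and Lemma \ref{lem.cano1}. Lemma \ref{lem.unr} then supplies an $S5$ (respectively $KD45$) Kripke model $\M,(c,w)$ that $\tau$-represents $\N,w$. By the proposition following Definition \ref{def.rep}, $\M,(c,w)\vDash_\tau\Gamma$. The only mismatch is that the constructed $\M$ interprets only the $R_a$ for $a\in A$. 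I would fix this by setting every other $R_b$ to an arbitrary $S5$ relation such as identity, and letting $R_{\id_n}$ take its mandated value. These extra relations do not occur in $\tau$, so neither $\Nbh_\M$ nor the truth values change.

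The main obstacle I expect is the soundness of $\mathtt{5C}^{|A|-1}_\bund$ and $\mathtt{4}'_\bund$ for $\tau_\mathtt{DG}(A)$. There the negative half $\Box_b\neg\phi$ must be carried to successors, and I would rely on the $KD45$ facts that $R_a(v)=R_a(u)$ for $v\in R_a(u)$ and that $\Box_b$-knowledge is introspected. The completeness direction is mere assembly of earlier lemmas.
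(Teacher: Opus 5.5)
Your proposal is correct and follows the paper's route: completeness combines Lemma \ref{lem.nbhcom1}, the representation Lemma \ref{lem.unr}, and the truth-preservation proposition for $\tau$-representations, and soundness is the routine axiom-by-axiom check the paper leaves implicit; your pigeonhole verifications of $\mathtt{C}^{|A|}_\bund$, $\mathtt{4}'_\bund$ and $\mathtt{5C}^{|A|-1}_\bund$ go through. Two small corrections: $\mathtt{N}_\bund$ needs only $A \neq \emptyset$ (since $\Box_a \top$ always holds), not reflexivity; and the obstacle you anticipate for $\tau_\mathtt{DG}(A)$ does not arise, because the negative witness $\Box_b\neg\phi$ at the current world already falsifies $\phi \wedge (\ldots)$, while at a successor you reuse the negative witness supplied by $\bund(\phi\vee\psi)$ or $\bund\psi_i$, so no introspection for $\Box_b$ is needed.
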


Using axioms in Table \ref{table.ax1}, we can also axiomatize even more complicated bundles,
say the bundle $\bigvee_{a \in A} B_a (+) \wedge \bigvee_{b \in B} K_b (-)$
expressing that someone in $A$ erroneously believes in something known to be false by someone else in $B$.
However, we shall not discuss such cases here due to limited space.

\subsection{Belief without Knowledge}

In this section, we consider the bundled modality $\tau_\mathtt{MB} = \Diamond \Box (+) \wedge \Diamond (-)$ over $S4.2$-models.
The class of $S4.2$-models is noteworthy since it validates a desirable logic of knowledge and belief 
when we take $\Box \phi$ as the knowledge modality $K \phi$,
and define belief $B \phi$ as $\Diamond \Box \phi$ (for more details, see \cite{WL-1978-EL, RS-2006-LBK} for example).
Then, $\bund \phi$ with the above bundled semantics expresses $B \phi \wedge \neg K \phi$, which is a kind of \emph{belief without knowledge} or \emph{mere belief}.
Note that on $S4.2$-models, 
$\bund \phi$ is also equivalent to \emph{Dunning-Kruger ignorance} $BK \phi \wedge \neg K \phi$ \cite{JF-2025-RIDKI},\footnote{
  In \cite{JF-2025-RIDKI}, a logic with both the bundle $\neg K \phi \wedge BK \phi$ and $K \phi$ is axiomatized over $S4.2$-models (which are viewed as epistemic-doxastic models validating certain interaction principles there).
  But our result here is novel since $K \phi$ is not in our language.}
and $\bund \phi \vee \nbund \phi$ expresses the aforementioned ``Rumsfeld ignorance''.

Table \ref{table.ax2} lists relevant axioms and properties that are not covered by Table \ref{table.ax1}.
Again, frames with the property corresponding to an axiom in Table \ref{table.ax2} validate the axiom.
These axioms also correspond to the properties canonically
(the proof is again in Appendix \ref{appd.cano}).

\begin{table}
\centering
  {\small
  \renewcommand\arraystretch{1.2}
  \begin{tabular}{|ll|l|}
    \hline
    \multicolumn{2}{|l|}{\textbf{Axioms}} & \textbf{Properties}\\
    \hline
    $\mathtt{DIV}_\bund^n$ & $\bund \bigvee_{0 \le i \leq n} \phi_i \to \bigvee_{0 \le i \leq n} \bund \bigvee_{j \neq i} \phi_j$ & 
    ($n$-s-Bd)$^+$ \, $\forall X \in N^+_w |X| \leq n$\\
    \multirow{2}{*}{$\mathtt{4}''_\bund$} & \multirow{2}{*}{$\bund \phi \wedge \bund (\phi \vee \psi) \to \bund (\phi \wedge \bund (\phi \vee \psi))$} & 
    \multirow{2}{*}{(PI$''$)$^+$} \, $\forall X \in N^+_w \forall v \in X (X \in {\uparrow}N^+_v \wedge$\\[-0.4em]
    & & \qquad \qquad \qquad \quad $\forall Y \in N^-_w (X \cap Y = \emptyset \Rightarrow Y \in {\uparrow}N^-_v))$ \\
    $\mathtt{4}'''_\bund$ & $\bund \phi \wedge \bund \psi \to \bund (\phi \wedge \bund \psi)$ & 
    (PI$'''$)$^+$ \, $\forall X \in N^+_w \forall v \in X (N^+_w \subseteq {\uparrow}N^+_v \wedge N^-_w \subseteq {\uparrow}N^-_v)$\\
    $\mathtt{5}'''_\bund$ & $\bund \phi \wedge \neg \bund \psi \to \bund (\phi \wedge \neg \bund \psi)$ & 
    (NI$'''$)$^+$ \, $\forall X \in N^+_w \forall v \in X (N^+_v \subseteq {\uparrow} N^+_w \wedge N^-_v \subseteq {\uparrow}N^-_w)$ \\
    $\mathtt{DE}_\bund$ & $\bund \phi \to \bund (\phi \wedge \neg \bund \psi)$ & 
    (Emp)$^+$ \, $\forall X \in N^+_w \forall v \in X (N^+_v = N^-_v = \emptyset)$\\
    $\mathtt{T}^\mathtt{op}_\bund$ & $\phi \wedge \bund (\phi \vee \psi) \to \bund \phi$ &
    (Id)$^+$ \, $(\exists Y \in N^-_w w \notin Y) \Rightarrow \{w\} \in {\uparrow} N^+_w $\\
    \hline
    \multicolumn{3}{|l|}{$\nbund$- and $-$-versions of the above axioms and properties are defined in the same way as Table \ref{table.ax1}.} \\
    \hline
  \end{tabular}
  }

  \caption{The second set of axioms and corresponding properties.
  Conventions for the abbreviations we use are the same as Table \ref{table.ax1}.}

  \label{table.ax2}
\end{table}

\begin{lemma}\label{lem.cano2}
  For any convex modal logic $\BL$ and any axiom $\alpha$ in Table \ref{table.ax2},
  if all instances of $\alpha$ is in $\BL$, then the canonical model of $\BL$ has the property corresponding to $\alpha$.
\end{lemma}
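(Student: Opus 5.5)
The proof will go axiom by axiom, using one common toolkit. The toolkit has three parts. First, by Proposition \ref{prop.core} and its proof, $\ext{F} \subseteq \ext{G}$ iff $G \subseteq F$ for filters. Second, by Lemmas \ref{lem.ext} and \ref{lem.nwitt} together with the Truth Lemma, $\bund\phi \in \Delta$ iff some $F \in \PC(\Delta)$ contains $\phi$ and some $F' \in \NC(\Delta)$ contains $\neg\phi$. Third, each $F \in \PC(\Delta)$ contains a formula $\psi$ with $\bund\psi \in \Delta$, and $F$ is maximal among filters that are $+$-coherent with $\Delta$.

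The basic move for a property of the form ``$\forall X \in N^\oplus(\Delta)\,\forall \Theta \in X\ (\dots)$'' is as follows. Write $X = \ext{F}$ and fix $\Theta \supseteq F$. Suppose the conclusion fails at $\Theta$. Build from $F$ a strictly larger filter $F\cup\{\chi\}$, closed under consequence, where $\chi \notin F$ is a formula witnessing the failure (e.g.\ $\neg\bund\psi$ or $\bund(\phi\vee\psi)\to\bund\phi$). Then use the axiom to show this larger filter is still $+$-coherent with $\Delta$. Maximality is then contradicted. Equivalently, coherence plus maximality shows that $F$ already contains every such $\chi$, and hence $\Theta$ contains it, which delivers the property at $\Theta$.

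Concretely, I would treat the axioms in this order.

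\textbf{$\mathtt{DE}_\bund$.} Given $\phi\in F$ with $\bund(\phi\wedge\psi_0)\in\Delta$, the axiom gives $\bund(\phi\wedge\psi_0\wedge\neg\bund\chi)\in\Delta$. So $F$ extended by all $\neg\bund\chi$ is coherent, hence $\neg\bund\chi\in F\subseteq\Theta$ for all $\chi$. Then $\bund\top,\bund\bot\notin\Theta$. Lemma \ref{lem.ext} turns this into $N^\oplus(\Theta)=N^\ominus(\Theta)=\emptyset$: any neighborhood would yield some $\bund\chi\in\Theta$, by the third part of the toolkit applied to $\Theta$.

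\textbf{$\mathtt{4}'''_\bund$ and $\mathtt{5}'''_\bund$.} These run the same way. Whenever $\bund\psi\in\Delta$ (resp.\ $\neg\bund\psi\in\Delta$), the formula $\bund\psi$ (resp.\ $\neg\bund\psi$) belongs to every $F\in\PC(\Delta)$. Next, for $G\in\PC(\Delta)$ I need $\ext{G}\in{\uparrow}N^\oplus(\Theta)$. I would show that $\{\psi \mid \bund\psi\in\Delta\}\cap G$ is dense in $G$, so each such $\psi$ gives $\bund\psi\in\Theta$. This makes $G$ itself $+$-coherent with $\Theta$, so Lemma \ref{lem.ext} extends $G$ to some $G^*\in\PC(\Theta)$, and then $\ext{G^*}\subseteq\ext{G}$. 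For the negative part I use the $\nbund$-formulation, via $\mathtt{RE}$. $\mathtt{5}'''$ is the converse direction: a filter coherent with $\Theta$ is shown coherent with $\Delta$, using contraposition of $\neg\bund\psi\in F$.

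\textbf{$\mathtt{4}''_\bund$.} This mixes the two patterns. The extra conjunct $\bund\phi$ restricts the argument to pairs $X,Y$ that are disjoint, which is exactly what the side condition in (PI$''$) requires.

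\textbf{$\mathtt{T}^\mathtt{op}_\bund$.} Assume some $Y=\ext{F'}\in N^\ominus(\Delta)$ with $\Delta\notin Y$. Then I would show that the filter $\Delta$ (an MCS, viewed as a filter) is $+$-coherent with $\Delta$. Take $\phi\in\Delta$ and a formula $\chi\in F'$ with $\nbund\chi\in\Delta$ and $\chi\notin\Delta$. Rewriting gives $\bund(\phi\vee\neg\chi)\in\Delta$, and $\neg\chi\in\Delta$, so the axiom yields $\bund\phi$-type witnesses. Maximality then puts $\Delta$ in $\PC(\Delta)$, and $\ext{\Delta}=\{\Delta\}$.

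\textbf{$\mathtt{DIV}^n_\bund$.} This is the step I expect to be the main obstacle. Suppose $\ext{F}$ contains distinct $\Theta_0,\dots,\Theta_n$. Choose separating formulas $\phi_i\in\Theta_i$ that are pairwise contradictory. Then $\bigvee_i\phi_i$ can be added to $F$, since every $\Theta_i$ satisfies it, provided maximality lets me conclude that $F$ contains some formula $\psi$ with $\bund(\psi\wedge\bigvee\phi_i)\in\Delta$. Applying the axiom to $\phi_i\wedge\psi$ gives some $i$ with $\bund(\psi\wedge\bigvee_{j\neq i}\phi_j)\in\Delta$. The hard part is upgrading this single witness to $+$-coherence of the whole filter generated by $F\cup\{\bigvee_{j\ne i}\phi_j\}$, which is needed to force $\bigvee_{j\neq i}\phi_j\in F$ and hence contradict $\phi_i$'s exclusivity at $\Theta_i$. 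The obstacle is that the chosen $i$ may depend on the density witness $\psi$. I would handle this with a finite pigeonhole argument over the finitely many indices, combined with closure of coherence witnesses under conjunction, before invoking maximality. I also expect $\mathtt{CONV}$ to be needed to combine witnesses.
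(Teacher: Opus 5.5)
Your overall strategy is the paper's: you use maximality of coherent filters (Proposition~\ref{prop.filt}(ii)) together with Lemma~\ref{lem.ext}, and a pigeonhole along an enumeration of $F$ for $\mathtt{DIV}^n_\bund$. Your $\mathtt{DE}_\bund$, $\mathtt{4}'''_\bund$ and $\mathtt{5}'''_\bund$ cases match the paper. However, two steps fail as written.

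\textbf{$\mathtt{DIV}^n_\bund$.} Separating formulas $\theta_i\in\Theta_i$ that are merely pairwise contradictory do not suffice. To run the pigeonhole you need $\bund(\psi\wedge\bigvee_i\theta_i)\in\Delta$ for cofinally many $\psi\in F$. By Proposition~\ref{prop.filt}(ii) this already amounts to $\bigvee_i\theta_i\in F$, which fails whenever another member of $\ext{F}$ falsifies every $\theta_i$; maximality does not supply it. The missing idea is to choose the $\theta_i$ jointly exhaustive, $\vdash\bigvee_{i\le n}\theta_i$. For example, replace $\theta_n$ by $\neg\bigvee_{i<n}\theta_i$, which stays in $\Theta_n$ and outside the other $\Theta_j$. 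Now let $\psi_m=\bigwedge_{k\le m}\phi_k$ along an enumeration of $F$; then $\bund\psi_m\in\Delta$ by Proposition~\ref{prop.filt}(i), which is where $\mathtt{CONV}$ enters. $\mathtt{RE}$ gives $\bund\bigvee_i(\psi_m\wedge\theta_i)\in\Delta$ for free, and the axiom yields an index $i_m$. Some $i$ recurs infinitely often, and Proposition~\ref{prop.filt}(ii) puts $\bigvee_{j\neq i}\theta_j$ into $F\subseteq\Theta_i$, a contradiction. This is exactly the paper's argument; your pigeonhole idea works once exhaustiveness is in place.

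\textbf{$\mathtt{T}^\mathtt{op}_\bund$.} The ``rewriting'' of $\nbund\chi\in\Delta$ into $\bund(\phi\vee\neg\chi)\in\Delta$ is invalid, because $\bund$ is not monotone: $\mathtt{CONV}$ only gives $\bund$ of formulas sandwiched between two $\bund$-formulas. If the step were valid, the axiom would give $\bund\phi$ for every $\phi\in\Delta$, including $\bund\top$. That fails at $w$ in the core model with $N^+(w)=\{\{w\}\}$, $N^-(w)=\{\{u\}\}$, $N^+(u)=N^-(u)=\emptyset$, $V(p)=\{u\}$, which satisfies (Id)$^+$ and makes $\neg p\wedge\nbund p$ true at $w$. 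The right instance takes $\phi\wedge\neg\chi$ for $\phi$ and $\neg\chi$ for $\psi$. Since $(\phi\wedge\neg\chi)\vee\neg\chi$ is equivalent to $\neg\chi$, $\mathtt{RE}$ turns $\nbund\chi\in\Delta$ into the antecedent. You then get $\bund(\phi\wedge\neg\chi)\in\Delta$, with witness $\neg\chi\in\Delta$.

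Also, $\Delta$ itself need not lie in $\PC(\Delta)$: if $\bund\bot\in\Delta$, its maximal coherent extension is $\LL_\bund$. Instead, take any $G\in\PC(\Delta)$ extending $\Delta$ by Lemma~\ref{lem.ext}, so $\ext{G}\subseteq\{\Delta\}$.

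\textbf{$\mathtt{4}''_\bund$.} Your paragraph is only a gesture and still needs its two instances.
\begin{itemize}
\item The instance $\bund(\phi\wedge\psi)\wedge\bund\phi\to\bund(\phi\wedge\psi\wedge\bund\phi)$ puts $\bund\phi$ into $F$ whenever $\phi\in F$ and $\bund\phi\in\Delta$. This makes $F$ $+$-coherent with $\Theta$.
\item For $\ext{F'}$ disjoint from $\ext{F}$, pick $\phi\in F$ with $\neg\phi\in F'$ and $\bund\phi\in\Delta$. For each $\psi\in F'$, $\nbund(\neg\phi\wedge\psi)\in\Delta$ by Proposition~\ref{prop.filt}(i), and up to $\mathtt{RE}$ this is $\bund(\phi\vee\neg\psi)$, whose argument is implied by $\phi$. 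So $\mathtt{4}''_\bund$ carries it into $F\subseteq\Theta$, making $F'$ $-$-coherent with $\Theta$.
\end{itemize}
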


We define the following two systems using the axioms presented in Table \ref{table.ax1} and Table \ref{table.ax2}
(note that their only difference is that $\mathtt{4}''_\nbund$ is strengthened to $\mathtt{4}'''_\nbund$ in $\mathbf{MBS4F}$):
\begin{align*}
  \mathbf{MBS4.2} := & \; \mathbf{CONV} \oplus \{\mathtt{D}_\bund, \mathtt{C}^1_\bund, \mathtt{DE}_\bund, \mathtt{D}_\nbund, \mathtt{DIV}^1_\nbund, \mathtt{4}''_\nbund, \mathtt{5}'''_\nbund, \mathtt{T}^\op_\nbund\}\\
  \mathbf{MBS4F} := & \; \mathbf{CONV} \oplus \{\mathtt{D}_\bund, \mathtt{C}^1_\bund, \mathtt{DE}_\bund, \mathtt{D}_\nbund, \mathtt{DIV}^1_\nbund, \mathtt{4}'''_\nbund, \mathtt{5}'''_\nbund, \mathtt{T}^\op_\nbund\}.
\end{align*}

Again, using the canonical construction and Lemma \ref{lem.cano1}, Lemma \ref{lem.cano2},
we could obtain the following neighborhood completeness:

\begin{lemma}
  $\mathbf{MBS4.2}$ is strongly complete w.r.t.\ the class of core models 
  satisfying (Ser)$^+$, ($1$-d-Bd)$^+$, (Emp)$^+$, (Ser)$^-$, ($1$-s-Bd)$^-$, (PI$''$)$^-$, (NI$'''$)$^-$ and (Id)$^-$,
  and $\mathbf{MBS4F}$ is strongly complete w.r.t.\ the class of core models 
  satisfying (Ser)$^+$, ($1$-d-Bd)$^+$, (Emp)$^+$, (Ser)$^-$, ($1$-s-Bd)$^-$, (PI$'''$)$^-$, (NI$'''$)$^-$ and (Id)$^-$.
\end{lemma}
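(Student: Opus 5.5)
This lemma follows the same template as Lemma \ref{lem.nbhcom1}. I would combine the general canonical construction with the canonicity results of Lemma \ref{lem.cano1} and Lemma \ref{lem.cano2}.

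Fix $\BL \in \{\mathbf{MBS4.2}, \mathbf{MBS4F}\}$ and take a $\BL$-consistent set $\Gamma$. By a Lindenbaum argument (which only uses $\mathtt{TAUT}$ and $\mathtt{MP}$), extend $\Gamma$ to a maximally $\BL$-consistent set $\Delta \in W^c_\BL$. By the Truth Lemma (Lemma \ref{lem.truth}), $\N^c_\BL, \Delta \Vdash \Gamma$. So strong completeness reduces to checking that $\N^c_\BL$ lies in the target class.

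That $\N^c_\BL$ is a core model is Proposition \ref{prop.core}, which holds for every convex modal logic. The remaining frame conditions come from matching each axiom of the system to its property:
\begin{itemize}
  \item $\mathtt{D}_\bund$, $\mathtt{C}^1_\bund$, $\mathtt{D}_\nbund$ yield (Ser)$^+$, ($1$-d-Bd)$^+$, (Ser)$^-$ by Lemma \ref{lem.cano1};
  \item $\mathtt{DE}_\bund$, $\mathtt{DIV}^1_\nbund$, $\mathtt{5}'''_\nbund$, $\mathtt{T}^\op_\nbund$ yield (Emp)$^+$, ($1$-s-Bd)$^-$, (NI$'''$)$^-$, (Id)$^-$ by Lemma \ref{lem.cano2};
  \item $\mathtt{4}''_\nbund$ (for $\mathbf{MBS4.2}$) or $\mathtt{4}'''_\nbund$ (for $\mathbf{MBS4F}$) yields (PI$''$)$^-$ or (PI$'''$)$^-$ respectively, again by Lemma \ref{lem.cano2}.
\end{itemize}
The $\nbund$-versions and $-$-properties are exactly the duals covered by the clause ``defined in the same way'' in those lemmas. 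Since canonicity is established separately for each axiom on the single canonical model $\N^c_\BL$, all properties hold simultaneously and no interaction between them needs checking.

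There is essentially no obstacle beyond the cited lemmas. The one point needing care is that each canonicity lemma applies to the logic $\BL$ itself, i.e.\ to the whole system rather than to $\mathbf{CONV}$ plus that single axiom. This is fine because the lemmas are stated for any convex modal logic containing all instances of the axiom, and $\BL$ is closed under uniform substitution. Soundness is not claimed here, so nothing more is required.
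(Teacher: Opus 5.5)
Your proposal is correct and follows the paper's own route: build the canonical model, use that it is core (Proposition \ref{prop.core}) and the Truth Lemma, then obtain each frame property by matching the axioms to the canonicity results of Lemma \ref{lem.cano1} and Lemma \ref{lem.cano2}. Your axiom-to-property assignments, including the $\nbund$/$-$ duals (justified because $\mathbf{CONV}$ is invariant under swapping $\bund$ and $\nbund$), are exactly the ones the paper relies on.
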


Then, the Kripke completeness of the above logics under bundled semantics follows from the representation result below.

\begin{lemma}
  Let $N,w$ be a core model s.t.\ $\N$ satisfies (Ser)$^+$, ($1$-d-Bd)$^+$, (Emp)$^+$,
  (Ser)$^-$, ($1$-s-Bd)$^-$, (PI$''$)$^-$, (NI$'''$)$^-$ and (Id)$^-$.
  Then, $\N,w$ can be $\tau_\mathtt{MB}$-represented by an $S4.2$-Kripke model;
  moreover, if $\N$ further satisfies (PI$'''$)$^-$,
  then $\N,w$ can be $\tau_\mathtt{MB}$-represented by an $S4F$-Kripke model.\footnote{
    $\M = (W, R, V)$ is an $S4F$-model if there is $X \subseteq W$ s.t.\ $R = (W \setminus X)^2 \cup (W \times X)$.
    In \cite{RS-2006-LBK}, such models are viewed as the \emph{maximal} epistemic extensions of $KD45$-doxastic models.
    An intuitive interpretation of such models can also be found in \cite{RS-2006-LBK}.
  }
\end{lemma}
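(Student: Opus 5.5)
The plan is to build the representing Kripke model directly on the relevant part of $\N$ and to take $\pi$ to be essentially the identity. No unraveling should be needed, because the constraints on $\N$ are strong enough.

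\emph{Step 1: compute the generated neighborhoods.} For a reflexive transitive model, unfolding the definition of $\Nbh_\M$ on $\tau_\mathtt{MB} = \Diamond\Box(+) \wedge \Diamond(-)$ gives
\[ \Nbh_\M(u,\tau_\mathtt{MB}) = \{(R(v),\{v'\}) \mid v,v' \in R(u)\}. \]

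\emph{Step 2: read off the shape of $\N$.} By (Ser)$^+$ and ($1$-d-Bd)$^+$, each world $u$ with $N^+(u)\neq\emptyset$ has exactly one positive neighborhood $X_u$. By ($1$-s-Bd)$^-$ and (Ser)$^-$, $N^-(u)=\{\{y\}\mid y\in S_u\}$ for some set $S_u$.

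Let $X = X_w$ and $S = S_w$. I will derive the following facts.
\begin{itemize}
\item[(a)] $w\notin X$, and hence $w \in S$. If $w\in X_w$, then (Emp)$^+$ would make $N^\pm(w)=\emptyset$, contradicting (Ser)$^+$. Then (Id)$^-$ gives $\{w\}\in N^-(w)$.
\item[(b)] For $s\in S$: $s\in S_s$ and $X_s\subseteq X$ by (PI$''$)$^-$. Here core condition (i) and the singleton shape make the ${\uparrow}$'s collapse. Also $S_s\subseteq S$ and $X\subseteq X_s$ by (NI$'''$)$^-$. Hence $X_s = X$ and $S_s\subseteq S$.
\item[(c)] $X\cap S_u=\emptyset$ for every $u \in S$. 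A point $x\in X$ has empty neighborhoods by (Emp)$^+$, but membership in some $S_u$ would force $x\in S_x$ by (PI$''$)$^-$.
\end{itemize}

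\emph{Step 3: the model.} Let $W^* = S \cup (X\times\{*\})$ and let $\pi$ be the identity on $S$ and the projection on $X\times\{*\}$. For $s,t\in S$ put $sRt$ iff $t\in S_s$. Every world is related to every element of $X\times\{*\}$, which forms a final cluster. Valuations are pulled back along $\pi$. Then:
\begin{itemize}
\item $R$ is reflexive by (b).
\item $R$ is transitive since $t\in S_s$ gives $S_t\subseteq S_s$. Applying (NI$'''$)$^-$ at $s$ rather than $w$ gives this.
\item $R$ is directed because of the common final cluster, so $\M$ is an $S4.2$-model.
\end{itemize}

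\emph{Step 4: check Definition \ref{def.rep}.}
\begin{itemize}
\item At a cluster point $c$, every generated pair is $(C,\{c'\})$ with $c'\in C$, so $\pi[X]\cap\pi[Y]\neq\emptyset$ and (Forth) is vacuous. (Back) is vacuous by (Emp)$^+$.
\item At $u\in S$ we have $R(u)=S_u\cup(X\times\{*\})$. Every $R(v)$ contains the cluster, so any $\pi$-disjoint pair $(R(v),\{v'\})$ must have $v'\in S_u$ by (c). Such a pair is covered by $(X,\{v'\})\in N^+(u)\times N^-(u)$, using $X_u=X$ from (b); this gives (Forth).
\item Conversely, $(X,\{y\})$ with $y\in S_u$ is matched by $(X\times\{*\},\{y\})$, which gives (Back).
\end{itemize}
By the preceding proposition on representation, $\N,w$ is then $\tau_\mathtt{MB}$-represented by $\M,w$.

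For $S4F$, I will use (PI$'''$)$^-$, which yields $S\subseteq S_s$ for every $s\in S$. Thus $S$ is a single cluster sitting below the final one, and $R = (W^*\setminus X^*)^2\cup(W^*\times X^*)$ with $X^* = X\times\{*\}$.

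The main obstacle is Step 2(b): extracting exactly $X_s=X$ and $S_s\subseteq S$ from (PI$''$)$^-$ and (NI$'''$)$^-$. The difficulty is turning ${\uparrow}$-inclusions into equalities, using the core conditions and the singleton/one-element bounds. If this fails, I would fall back to indexing copies of worlds by their generating neighborhood, as in the unraveling of Lemma \ref{lem.unr}.
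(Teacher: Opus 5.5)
Your proposal is correct and is essentially the paper's construction. The paper also defines $vRu$ iff $\{u\}\in N^-(v)$ and keeps $R(w)$ with $X_w$ as a final cluster. It uses (Emp)$^+$, (Id)$^-$, (PI$''$)$^-$ and (NI$'''$)$^-$ as you do to get reflexivity, transitivity and $N^+(v)=\{X_w\}$, takes $\pi$ to be the identity, and uses (PI$'''$)$^-$ to make $R$ Euclidean on $R(w)$ for $S4F$.

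The differences are cosmetic. Your disjoint copy $X\times\{*\}$ is unnecessary, since core condition (ii) already gives $S\cap X_w=\emptyset$; (ii), not (i) as you wrote in Step 2(b), is also what supplies the disjointness needed there. You should also add the trivial case $N^+(w)=\emptyset$, where $N^-(w)=\emptyset$ as well and a single reflexive point suffices.
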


\begin{proof}[Proof Sketch]
  Assume that $\N = (W, N^+, N^-, V)$ is a core model satisfying (Ser)$^+$, ($1$-d-Bd)$^+$, (Emp)$^+$,
  (Ser)$^-$, ($1$-s-Bd)$^-$, (PI$''$)$^-$, (NI$'''$)$^-$ and (Id)$^-$,
  and let $w \in W$ be arbitrary.
  When $N^+(w) = \emptyset$, the case is trivial,
  so we assume that $N^+(w) \neq \emptyset$.
  
  Since $N^+(w) \neq \emptyset$, by (Ser)$^+$ and ($1$-d-Bd)$^+$,
  there is a non-empty $X_w \subseteq W$ s.t.\ $N^+(w) = \{X_w\}$. 
  Then, Let $R$ be the binary relation on $W$ s.t.\ $v R u$ iff $\{u\} \in N^-(v)$.
  Note that $R$ is transitive: 
  if $v R u$ and $u R s$,
  i.e.\ $\{u\} \in N^-(v)$ and $\{s\} \in N^-(u)$,
  then by (NI$'''$)$^-$ and (Ser)$^-$,
  $\{s\} \in N^-(v)$, i.e.\ $v R s$.
  Then, consider $R(w)$.
  We show that $w \in R(w)$,
  and for any $v \in R(w)$ we have $v R v$
  and $N^+(v) = \{X_w\}$.
  By (Emp)$^+$, for any $u \in X_w$, $N^+(u) = \emptyset$.
  Hence, by our assumption that $N^+(w) \neq \emptyset$,
  $w \notin X_w$.
  Thus, by (Id)$^-$ and (Ser)$^-$, $\{w\} \in N^-(w)$,
  i.e.\ $w R w$.
  Next, take an arbitrary $v \in R(w)$.
  Then, $\{v\} \in N^-(w)$,
  so by (PI$''$)$^-$ and (Ser)$^-$, $\{v\} \in N^-(v)$,
  i.e.\ $v R v$.
  Moreover, since $\N$ is a core model and $\{v\} \in N^-(w)$,
  clearly $\{v\} \cap X_w = \emptyset$,
  so by (PI$''$)$^-$,
  $X_w \in {\uparrow}N^+(v)$,
  i.e.\ there is $X \in N^+(v)$ s.t.\ $X \subseteq X_w$.
  Then, by (NI$'''$)$^-$,
  $X \in N^+(w)$, so $X = X_w$ since $\N$ is a core model, and thus $X_w \in N^+(v)$.
  Thus, by ($1$-d-Bd)$^+$, $N^+(v) = \{X_w\}$.
  Finally, note that when $\N$ further satisfies (PI$'''$)$^-$,
  $R$ becomes Euclidean:
  if $v R u$ and $v R s$,
  i.e.\ $\{u\} \in N^-(v)$ and $\{s\} \in N^-(v)$,
  then by (PI$'''$)$^-$ and (Ser)$^-$,
  $\{s\} \in N^-(u)$, i.e.\ $u R s$.
  Then, since $w \in R(w)$,
  it follows that $R \cap (R(w))^2 = (R(w))^2$.

  Then, we define the Kripke model we need as $\M = (W^*, R^*, V^*)$,
  where $W^* = R(w) \cup X_w$,
  $R^* = (R \cap (R(w))^2) \cup (W^* \times X_w)$,
  and $V^*(p) = V(p) \cap W^*$ for all $p \in \BP$.
  By (Ser)$^+$, $X_w \neq \emptyset$, so $\M$ according to what we proved above, $\M$ is indeed an $S4.2$-model;
  moreover, when $\N$ satisfies (PI$'''$)$^-$,
  $R \cap (R(w))^2 = (R(w))^2 = (W^* \setminus X_w)^2$,
  so $\M$ is an $S4F$-model.
  Then, we could check that the identity function (restricted to $W^*$) satisfies the conditions listed in Definition \ref{def.rep} for $\tau_\mathtt{MB}$,
  so $\N,w$ is $\tau_\mathtt{MB}$-represented by $\M,w$.
\end{proof}

\begin{theorem}
  Under $\tau_\mathtt{MB}$-bundled semantics,
  $\mathbf{MBS4.2}$ is sound and strongly complete w.r.t.\ the class of $S4.2$-models,
  and $\mathbf{MBS4F}$ is sound and strongly complete w.r.t.\ the class of $S4F$-models.\footnote{
  Strictly speaking, the Kripke \emph{soundness} of $\mathbf{MBS4.2}$ and $\mathbf{MBS4F}$ under $\tau_\mathtt{MB}$-bundled semantics does not follow directly from their neighborhood soundness w.r.t.\ models with the corresponding properties listed in Table \ref{table.ax1} and Table \ref{table.ax2},
  since properties of the $\tau_\mathtt{MB}$-generated neighborhoods of $S4.2$- and $S4F$-Kripke models might be weaker than the ones listed in the tables.
  Nevertheless, they are strong enough to validate all corresponding axioms,
  which is rather routine to check,
  so we leave the details to the reader.}
\end{theorem}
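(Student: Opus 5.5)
The plan splits into a completeness half and a soundness half.

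For completeness, I would use the preceding neighborhood completeness lemma together with the representation lemma. Suppose $\Gamma$ is $\mathbf{MBS4.2}$-consistent (resp.\ $\mathbf{MBS4F}$-consistent). By Lindenbaum's lemma, extend it to $\Delta \in W^c$. Proposition~\ref{prop.core} tells us the canonical model is core. Lemmas~\ref{lem.cano1} and~\ref{lem.cano2} give it (Ser)$^+$, ($1$-d-Bd)$^+$, (Emp)$^+$, (Ser)$^-$, ($1$-s-Bd)$^-$, (PI$''$)$^-$ (resp.\ (PI$'''$)$^-$), (NI$'''$)$^-$ and (Id)$^-$. By the Truth Lemma~\ref{lem.truth}, $\N^c,\Delta \Vdash \Gamma$.

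One small point needs checking. The $\mathbf{MBS4F}$ version of the representation lemma is stated for models satisfying (PI$''$)$^-$ together with (PI$'''$)$^-$, but $\mathbf{MBS4F}$ contains only $\mathtt{4}'''_\nbund$. I would show $\mathtt{4}''_\nbund$ is derivable from $\mathtt{4}'''_\nbund$. Instantiate $\psi$ in $\mathtt{4}'''_\nbund$ with $\phi\vee\psi$, and use $\mathtt{RE}$ (note $\nbund$ respects provable equivalence). Alternatively, I would check directly that (PI$'''$)$^-$ together with the core conditions yields what the representation proof actually uses.

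The representation lemma then gives an $S4.2$- (resp.\ $S4F$-) Kripke model $\M,w$ that $\tau_\mathtt{MB}$-represents $\N^c,\Delta$. By the proposition after Definition~\ref{def.rep}, $\M,w \vDash_{\tau_\mathtt{MB}} \Gamma$. This gives strong completeness.

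For soundness, as the footnote indicates, I would verify each axiom directly on $S4.2$- (resp.\ $S4F$-) models under $\tau_\mathtt{MB}$. Since $\tau_\mathtt{MB}$ has the form of Proposition~\ref{prop.cvnf}, it is PN-independent, so $\mathtt{CONV}$ is valid; $\mathtt{RE}$ is trivial. Reading $\bund\phi$ as $\Diamond\Box\phi\wedge\Diamond\neg\phi$:
\begin{itemize}
\item[] $\mathtt{D}_\bund$ and $\mathtt{D}_\nbund$ follow from reflexivity.
\item[] $\mathtt{C}^1_\bund$ uses the fact that in $S4.2$, $\Diamond\Box$ distributes over conjunction by convergence, while $\Diamond\neg(\phi_0\wedge\phi_1)$ follows from $\Diamond\neg\phi_0$.
\item[] $\mathtt{DE}_\bund$ holds because at a final cluster $\Box\phi$ points, $\Diamond\neg$-witnesses for $\phi$ are unavailable, so every $\bund$ fails there.
\item[] $\mathtt{DIV}^1_\nbund$, $\mathtt{4}''_\nbund$, $\mathtt{5}'''_\nbund$ and $\mathtt{T}^\op_\nbund$ need case checks using transitivity and convergence: $\Diamond\neg\phi$ with $\phi\to\neg\phi$ at the point itself, and stability of $\Diamond\Box$ along $R$ in $S4.2$.
\item[] For $\mathtt{4}'''_\nbund$ on $S4F$, the Euclidean-like structure of the non-final part is what makes it valid.
\end{itemize}

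I expect the soundness case checks, especially $\mathtt{5}'''_\nbund$ and $\mathtt{4}''_\nbund$, to be the main obstacle. The completeness direction is essentially assembly of lemmas already stated.
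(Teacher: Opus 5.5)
Your completeness half follows the paper's own route: canonical core model, Lemmas~\ref{lem.cano1} and~\ref{lem.cano2}, the truth lemma, then the $\tau_\mathtt{MB}$-representation lemma and truth preservation under representation. You rightly noticed that the $S4F$ clause of the representation lemma still presupposes (PI$''$)$^-$. In fact $\mathtt{4}''_\nbund$ is literally the instance of $\mathtt{4}'''_\nbund$ with $\phi \vee \psi$ for $\psi$, so $\mathtt{RE}$ is not even needed. At the frame level, (PI$'''$)$^-$ directly implies (PI$''$)$^-$.

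The one step that fails as written is in the soundness half, which the paper leaves to the reader. Your check of $\mathtt{DE}_\bund$ relies on reaching a final cluster, but $S4.2$-models need not have one (e.g.\ $(\Nat, \leq)$), and the theorem concerns all $S4.2$-models. Replace it by a case split at a successor $u$ of $w$ where $\Box\phi$ holds:
\begin{itemize}
\item[] If $\Diamond\Box\psi$ holds at $u$, pass to a successor $u'$ of $u$ where $\Box\psi$ holds. By transitivity, $u'$ is a successor of $w$, and every successor of $u'$ satisfies $\phi \wedge \Box\psi$. Hence every such successor fails $\Diamond\neg\psi$, and so fails $\bund\psi$.
\item[] Otherwise $\neg\Diamond\Box\psi$ holds at $u$. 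By transitivity, no successor of $u$ satisfies $\Diamond\Box\psi$, so none satisfies $\bund\psi$.
\end{itemize}
Either way $\Diamond\Box(\phi \wedge \neg\bund\psi)$ holds at $w$. The conjunct $\Diamond\neg(\phi \wedge \neg\bund\psi)$ follows from $\Diamond\neg\phi$, and only transitivity is used.

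The cases you flagged as the main obstacle are in fact easy. $\mathtt{5}'''_\nbund$ needs only transitivity. $\mathtt{4}''_\nbund$ needs only persistence of $\Diamond\Box$ along $R$ (convergence plus transitivity) together with reflexivity at the $\phi$-witness. For $\mathtt{T}^\op_\nbund$, the required $\Diamond\phi$ comes from $\phi$ at the point itself by reflexivity; your description of that case is garbled.
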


\section{Conclusion}\label{sec.conclusion}

In this paper, we have studied propositional bundled modalities at a general level.
We showed how to view bundled semantics as a kind of neighborhood semantics
and uniformly defined bisimulations for bundled modalities from this viewpoint.
We also studied the expressivity and axiomatizations of PN-independent bundles
and offered axiomatizations for several concrete PN-independent bundles using our general method.

There are several directions for future work.
One is to further study the bundled bisimulation we defined.
For example, a \emph{characterization theorem} for these bisimulations is highly desirable.
A source of difficulty here is that
notions of modal saturation for ordinary neighborhood models appear too strong for the neighborhood structures generated by bundle terms,
so the usual ultrapower construction might not work here.

It also seems interesting to further study the class of PN-independent bundles.
For example, it might be interesting to try to offer a purely syntactic characterizion of the class of PN-independent bundles over a given class of Kripke models
--- though a grammar to generate PN-independent bundles is given in Proposition \ref{prop.cvnf},
it is still unclear whether the bundles it generates exhaust all PN-independent bundles (up to logical equivalence) over, say, the class of all models, or the class of serial models.

It should also be interesting to try to find more uniform ways to generate axiomatizations for bundled modalities.
For example, if we could prove Sahlqvist-style correspondence and completeness theorems for convex neighborhood logics,
and at the same time develop a method to automatically obtain representation results for certain bundles,
then we could immediately obtain a great amount of complete axiomatizations for these bundles.

Finally, we believe the class of convex neighborhood logics deserves a more fully developed theory.
On the one hand, some fine distinctions can only be made in this class of logics.
For example, the axioms $\mathtt{4}'_\bund$ and $\mathtt{4}''_\bund$ we introduced above both collapse to $\bund \phi \to \bund (\phi \wedge \bund \phi)$ in monotonic logics with $\mathtt{N}_\bund$,
and $\mathtt{4}'_\bund$, $\mathtt{4}''_\bund$, $\mathtt{4}'''_\bund$ all collapse to $\bund \phi \to \bund \bund \phi$ in normal modal logics.
On the other hand, logics in this class admit adequate semantics based on core models, so we may hope for a more fruitful theory than that of all neighborhood modal logics.
Hence, it should be interesting to further study the general theory of convex modal logics.

\appendix

\section{First-Order Undefinability of Bundled Bisimulation}\label{appd.fodf}

In this section we consider only bisimulation for $\tau = \Diamond\Box\Diamond(+)$ and show that it is not first-order definable in the sense that
there is no first-order formula $\Phi$ with predicates for $R$ and $V(p)$ ($p \in \mathbf{P}$) in Kripke models and a special predicate $Z$ such that $\Phi$ is satisfied by a Kripke model iff the interpretation of $Z$ is a $\tau$-bisimulation on that model.

Let $\M$ be the Kripke model $(W, R, V)$ where (as usual $2 = \{0, 1\}$)
\begin{align*}
  W & = \{w, u, v\} \cup \{w_i \mid i \in \Nat\} \cup \{w_{i, j} \mid i \in \Nat, j \in \{0, 1\}\} \cup 2^\Nat. \\
  R & = \{(w, u)\} \cup \{(u, w_i) \mid i \in \Nat\} \cup \{(w_i, w_{i, j}), (w_{i, j}, w_{i, j}) \mid i \in \Nat, j \in \{0, 1\}\} \cup \\[-0.3em]
  & \qquad (\{v\} \times 2^\Nat) \cup \{(f, w_{i, f(i)}) \mid f \in 2^\Nat, i \in \Nat\}.\\
  V & = \BP \times \{\emptyset\}.
\end{align*}
Note that $\Nbh_\M(w, \tau) = \Nbh_\M(v, \tau) = \{(\{w_{i, f(i)} \mid i \in \Nat\}, \emptyset) \mid f \in 2^\Nat\}$ and $\Dom_\M(w, \tau) = \Dom_\M(v, \tau) = \{w_{i, j} \mid i \in \Nat, j \in \{0, 1\}\}$.
Then, it is not hard to check that the binary relation $\Z$ defined as follows is a $\tau$-bisimulation on $\M$:
\begin{align*}
  \Z &= \{(w, v)\} \cup \{(w_{i, j}, w_{i, j}) \mid i \in \Nat, j \in \{0, 1\}\}.
\end{align*}

\begin{figure}[t]
  \centering
  \begin{tikzpicture}[
    >=Stealth,
    node distance=9mm and 10mm,
    every node/.style={inner sep=1.2pt}
  ]
    \begin{scope}
      \node (w) at (-5, -1) {$w$};
      \node (u) at (-4, -1) {$u$};
      \node (v) at (5.5, -1) {$v$};
      
      \node (w00) at (0, 0) {$w_{0,0}$};
      \node (w01) at (0, -1) {$w_{0,1}$};
      \node (w10) at (0, -2) {$w_{1,0}$};
      \node (w11) at (0, -3) {$w_{1,1}$};
      \node (w20) at (0, -4) {$w_{2,0}$};
      \node (w21) at (0, -5) {$w_{2,1}$};
      \node (w30) at (0, -6) {$w_{3,0}$};
      \node (w31) at (0, -7) {$w_{3,1}$};
      \node at (0, -7.5) {$\vdots$};
      \node at (0, -8) {$\vdots$};
      \node at (0, -8.5) {$\vdots$};

      \node (w0) at (-2, -0.5) {$w_0$};
      \node (w1) at (-2, -2.5) {$w_1$};
      \node (w2) at (-2, -4.5) {$w_2$};
      \node (w3) at (-2, -6.5) {$w_3$};
      \node (wdots) at (-2, -8.5) {$\vdots$};
      
      \node (ubd) at (3, 0) {};
      \node (dbd) at (3, -8) {$2^\Nat$};
      \node (1/4) at (3, -2) {$\bullet$};
      \node (5/8) at (3, -5) {$\bullet$};
      \node (13/16) at (3, -6.5) {$\bullet$};

      \draw[->] (w) to (u);
      \draw[->] (u) to (w0);
      \draw[->] (u) to (w1);
      \draw[->] (u) to (w2);
      \draw[->] (u) to (w3);
      \draw[->] (u) to (wdots);

      \draw[|-|] (ubd) to (dbd);
      \draw[-, dashed] (v) to (3.1, -0.1);
      \draw[-, dashed] (v) to (3.1, -7.7);
      \draw[->] (v) to (1/4);
      \draw[->] (v) to (5/8);
      \draw[->] (v) to (13/16);

      \draw[->] (w0) to (w00);
      \draw[->] (w0) to (w01);
      \draw[->] (w1) to (w10);
      \draw[->] (w1) to (w11);
      \draw[->] (w2) to (w20);
      \draw[->] (w2) to (w21);
      \draw[->] (w3) to (w30);
      \draw[->] (w3) to (w31);

      \draw[->] (1/4) to (w00);
      \draw[->] (1/4) to (w11);
      \draw[->] (1/4) to (w20);
      \draw[->] (1/4) to (w30);
      \draw[->] (1/4) to (0.1, -8.1);
    
      \draw[->] (5/8) to (0.3,-1.2);
      \draw[->] (5/8) to (w10);
      \draw[->] (5/8) to (w21);
      \draw[->] (5/8) to (w30);
      \draw[->] (5/8) to (0.1, -8.3);

      \draw[->] (13/16) to (w01);
      \draw[->] (13/16) to (w11);
      \draw[->] (13/16) to (w20);
      \draw[->] (13/16) to (w31);
      \draw[->] (13/16) to (0.1, -8.5);

      \draw[->] (w00) edge[loop above] (w00);
      \draw[->] (w01) edge[loop above] (w01);
      \draw[->] (w10) edge[loop above] (w10);
      \draw[->] (w11) edge[loop above] (w11);
      \draw[->] (w20) edge[loop above] (w20);
      \draw[->] (w21) edge[loop above] (w21);
      \draw[->] (w30) edge[loop above] (w30);
      \draw[->] (w31) edge[loop above] (w31);
    \end{scope}
  \end{tikzpicture}
  \caption{A schematic picture of the Kripke frame $(W,R)$ of $\M$ (the valuation is empty), in which
  $2^\Nat$ is represented by a segment of the real line.}
  \label{fig:tau-diamond-box-diamond-model}
\end{figure}

Observe also that when the part of $\M$ restricted to $W \setminus 2^\Nat$ remains intact,
removing any $f \in 2^\Nat$ from the model would result in $\Z$ being no longer a $\tau$-bisimulation.

If $\tau$-bisimulation were first-order definable, then we could take the Skolem hull of $\M$ from $W \setminus 2^\Nat$, and $\Z$ would still be a $\tau$-bisimulation in this Skolem hull.
But such a Skolem hull is a countable submodel of $\M$ and thus must omit one (indeed most) $f \in 2^\Nat$,
so $\Z$ in this submodel cannot be a $\tau$-bisimulation.
Hence, bisimulation for $\tau = \Diamond\Box\Diamond(+)$ is not first-order definable.

\section{Non-PN-Independence of Rumsfeld Ignorance}\label{appd.ncvri}
Let $\tau_\mathtt{RI}$ stand for the bundle $\Diamond (+) \wedge \Diamond (-) \wedge \Diamond (\Box (+) \vee \Box(-))$ of Rumsfeld ignorance,
and let $\tau^\op_\mathtt{RI}$ stand for its dual $\Box (+) \vee \Box (-) \vee \Box (\Diamond (+) \wedge \Diamond (-))$.
In order to show that $\tau_\mathtt{RI}$ (resp.\ $\tau^\op_\mathtt{RI}$) is not PN-independent over a model,
it suffices to show that the axiom of convexity $\mathtt{CONV}: \bund (\phi \wedge \psi) \wedge \bund (\phi \vee \chi) \to \bund \phi$
can be falsified in the model in question
under $\tau_\mathtt{RI}$-bundled (resp.\ $\tau^\op_\mathtt{RI}$-bundled) semantics.

The countermodels are the following ones:
  let $\M = (W, R, V)$ and $\N = (W', R', V')$,
  where 
  \begin{itemize}
    \item $W = \{w_0, w_1, w_2, w_3\}$, $W' = \{v_0, v_1\}$;
    \item $R = \{w_0, w_1\}^2 \cup (W \times \{w_2, w_3\})$, $R' = \{v_0\}^2 \cup (W' \times \{v_1\})$;
    \item $V(p) = \{w_0, w_2\}$, $V(q) = \{w_0, w_3\}$,
    and $V'(p) = \{v_1\}$, $V'(q) = \emptyset$.
  \end{itemize}
  See Figure \ref{fig.ncv} for a diagram of the above models.
  
  Then, $\M,w_0 \not \vDash_{\tau_\mathtt{RI}} \bund (p \wedge q) \wedge \bund (p \vee q) \to \bund p$,
  and $\N,v_0 \not \vDash_{\tau^\op_\mathtt{RI}} \bund (p \wedge q) \wedge \bund (p \vee \neg q) \to \bund p$.

  \begin{figure}[t]
    \centering
      \begin{tikzpicture}[
    >=Stealth,
    node distance=9mm and 10mm,
    every node/.style={inner sep=1.2pt}
  ]
    \begin{scope}
      \node[draw, circle, minimum width = .7cm, fill = lightgray!75] (w0) at (0, 0) {$p\,q$};
      \node[draw, circle, minimum width = .7cm] (w1) at (1.5, 0) {$\quad$};
      \node[draw, circle, minimum width = .7cm] (w2) at (0, 1.5) {$p$};
      \node[draw, circle, minimum width = .7cm] (w3) at (1.5, 1.5) {$q$};
      \node at (0.75, -1) {$\M,w_0$};
      
      \node[draw, circle, minimum width = .7cm] (v1) at (5, 1.5) {$p$};
      \node[draw, circle, minimum width = .7cm, fill = lightgray!75] (v0) at (5, 0) {$\quad$};
      \node at (5, -1) {$\N,v_0$};

      \draw[->] (w0) to (w2);
      \draw[->] (w0) to (w3);
      \draw[->] (w1) to (w2);
      \draw[->] (w1) to (w3);
      \draw[<->] (w0) to (w1);
      \draw[<->] (w2) to (w3);
      \draw[->] (w0) edge[out = 180, in = 270, looseness = 4] (w0);
      \draw[->] (w1) edge[out = 270, in = 0, looseness = 4] (w1);
      \draw[->] (w2) edge[out = 90, in = 180, looseness = 4] (w2);
      \draw[->] (w3) edge[out = 0, in = 90, looseness = 4] (w3);

      \draw[->] (v0) to (v1);
      \draw[->] (v1) edge[out = 45, in = 135, looseness = 4] (v1);
      \draw[->] (v0) edge[out = 225, in = 315, looseness = 4] (v0);
    \end{scope}
    \end{tikzpicture}
    \caption{A diagram of the countermodels $\M, w_0$ and $\N, v_0$.}

    \label{fig.ncv}
  \end{figure}

\section{Proof for Proposition \ref{prop.cvbis}}

\begin{proof}\label{appd.cvbis}
  It suffices to show that for any model $\M$, any $\tau \in \TT_\BA$, any binary relation $\Z$ on $\M$ and any $(w, v) \in \Z$,
  the conditions ($\tau$-Zig) and ($\tau$-Zag) imply ($\tau$-Coh),
  and the converse holds when $\tau$ is PN-independent over $\M$.
  Without loss of generality, we may also assume that $\Z$ is an equivalence relation, since we can always use the bisimilarity relation, which is itself a bisimulation.

  First, assume that $(w, v) \in \Z$ satisfies ($\tau$-Zig) and ($\tau$-Zag).
  Also assume that $X \subseteq \Dom_\M(w, \tau)$ and $Y \subseteq \Dom_\M(v, \tau)$ are $\Z$-coherent, and $X \in \Nbh^\cp_\M(w, \tau)$.
  We show that $Y \in \Nbh^\cp_\M(v, \tau)$.
  By definition, there is $(X_0, X_1) \in \Nbh_\M(w, \tau)$ s.t.\ $(X_0, X_1) \sqsubseteq (X, \Dom_\M(w, \tau) \setminus X)$,
  so $\Z[X_0] \cap X_1 = \emptyset$ by the $\Z$-coherence of $X$ and $Y$ (in particular, $\Z[X] \cap \Dom_\M(w, \tau) \subseteq X$).
  Hence, by ($\tau$-Zig), there is $(Y_0, Y_1) \in \Nbh_\M(v, \tau)$ s.t.\ $Y_0 \subseteq \Z[X_0]$,
  so $Y_0 \subseteq \Z[X_0] \cap \Dom_\M(v, \tau) \subseteq \Z[X] \cap \Dom_\M(v, \tau) \subseteq Y$;
  also, $Y_1 \subseteq \Z[X_1]$,
  and thus $Y_1 \subseteq \Z^{-1}[X_1]$ since $\Z$ is symmetric, which implies that $Y_1 \cap Y = \emptyset$, since otherwise there would be $w' \in X_1$ s.t.\ $w' \in \Z[Y] \cap \Dom_\M(w, \tau) \subseteq X$,
  causing a contradiction.
  Therefore, we have $(Y_0, Y_1) \sqsubseteq (Y, \Dom_\M(v, \tau) \setminus Y)$, and thus $Y \in \Nbh^\cp_\M(v, \tau)$.
  The converse direction of ($\tau$-Coh) can be proved symmetrically using ($\tau$-Zag).

  Next, assume that $\tau$ is PN-independent over $\M$ and $(w, v) \in \Z$ satisfies ($\tau$-Coh).
  We only show that $(w, v)$ satisfies ($\tau$-Zig), since the case for ($\tau$-Zag) is symmetric.
  Let $(X_0, X_1) \in \Nbh_\M(w, \tau)$ be arbitrary, and assume that $\Z[X_0] \cap X_1 = \emptyset$.
  First, let $X = \Z[X_0] \cap \Dom_\M(w, \tau)$ and $Y = \Z[X_0] \cap \Dom_\M(v, \tau)$.
  It is not hard to see that $X$ and $Y$ are $\Z$-coherent and $X \in \Nbh^\cp_\M(w, \tau)$.
  Thus, we have $Y \in \Nbh^\cp_\M(v, \tau)$,
  so there is $(Y_0, Y_1) \in \Nbh_\N(v, \tau)$ s.t.\ $Y_0 \subseteq Y \subseteq \Z[X_0]$.
  Next, let $X' = \Dom_\M(w, \tau) \setminus \Z[X_1]$ and $Y' = \Dom_\M(v, \tau) \setminus \Z[X_1]$.
  Again, $X'$ and $Y'$ are $\Z$-coherent and $X' \in \Nbh^\cp_\M(w, \tau)$,
  so $Y' \in \Nbh^\cp_\M(v, \tau)$.
  Hence, there is also $(Y'_0, Y'_1) \in \Nbh_\M(v, \tau)$ s.t.\ $Y'_1 \subseteq \Dom_\M(v, \tau) \setminus Y' \subseteq \Z[X_1]$.
  Thus, $(Y_0, Y'_1) \sqsubseteq (\Z[X_0], \Z[X_1])$.
  Finally, note that $\Z[X_0] \cap \Z[X_1] = \emptyset$,
  so $Y_0 \cap Y'_1 = \emptyset$.
  Then, since $\tau$ is PN-independent over $\M$, $(Y_0, Y'_1) \in \Nbh_\M(v, \tau)$.
\end{proof}

\section{Proofs for Lemma \ref{lem.cano1} and Lemma \ref{lem.cano2}}\label{appd.cano}

The following observations are useful.
\begin{proposition}\label{prop.filt}
  For any maximally consistent set $\Delta$ and $F \in \PC(\Delta)$,
  \begin{itemize}
    \item[(i)] There is $\phi \in F$ s.t.\ $\bund \phi \in \Delta$,
    and for any $\phi, \psi \in F$, if $\bund \phi \in \Delta$, then $\bund (\phi \wedge \psi) \in \Delta$.\footnote{
    In other words, $\{\phi \in F \mid \bund \phi \in \Delta\}$ is not only dense, but also \emph{open dense} in $F$,
    i.e.\ it is downward closed in $F$.
    }
    \item[(ii)] For any $\chi$, if for all $\phi \in F$, there is $\psi \in F$ s.t.\ $\vdash \psi \to \phi$ and $\bund (\psi \wedge \chi) \in \Delta$, then $\chi \in F$. 
  \end{itemize}

  The above also hold after replacing `$+$' with `$-$' and $\bund$ with $\nbund$.
\end{proposition}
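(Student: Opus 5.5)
Both parts follow from $+$-coherence and the maximality of $F$ among filters $+$-coherent with $\Delta$; the $-$-versions are the same argument with $\nbund$ in place of $\bund$.

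For (i), the first claim is immediate: $\top \in F$ and $F$ is $+$-coherent, so there is $\psi \in F$ with $\bund(\top \wedge \psi) \in \Delta$. By $\mathtt{RE}$, $\bund \psi \in \Delta$.

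For the second claim, take $\phi, \psi \in F$ with $\bund \phi \in \Delta$. Then $\phi \wedge \psi \in F$. By coherence there is $\chi \in F$ with $\bund(\phi \wedge \psi \wedge \chi) \in \Delta$. Now use $\mathtt{CONV}$ in the form $\bund(\alpha \wedge \beta) \wedge \bund(\alpha \vee \gamma) \to \bund \alpha$, with $\alpha = \phi \wedge \psi$, $\beta = \chi$ and $\gamma = \phi$. Since $(\phi \wedge \psi) \vee \phi$ is provably equivalent to $\phi$, $\mathtt{RE}$ turns $\bund \phi \in \Delta$ into $\bund((\phi \wedge \psi) \vee \phi) \in \Delta$. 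Together with $\bund((\phi \wedge \psi) \wedge \chi) \in \Delta$, $\mathtt{CONV}$ gives $\bund(\phi \wedge \psi) \in \Delta$.

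For (ii), assume $\chi$ satisfies the hypothesis, and let $G$ be the filter generated by $F \cup \{\chi\}$, that is, $G = \{\theta \mid \exists \phi \in F\, (\vdash \phi \wedge \chi \to \theta)\}$. Clearly $F \subseteq G$. Once $G$ is shown to be $+$-coherent with $\Delta$, maximality of $F$ gives $G = F$, and hence $\chi \in F$.

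To check coherence, take $\theta \in G$, witnessed by $\phi \in F$ with $\vdash \phi \wedge \chi \to \theta$. By hypothesis there is $\psi \in F$ with $\vdash \psi \to \phi$ and $\bund(\psi \wedge \chi) \in \Delta$. Set $\sigma := \psi \wedge \chi$, which lies in $G$. Then $\theta \wedge \sigma$ is provably equivalent to $\psi \wedge \chi$, because $\psi \wedge \chi$ entails $\phi \wedge \chi$ and hence $\theta$. By $\mathtt{RE}$, $\bund(\theta \wedge \sigma) \in \Delta$, which is the required coherence witness.

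The only step that needs care is choosing the right $\mathtt{CONV}$ instance in (i); once $\phi$ is rewritten as the disjunct $(\phi \wedge \psi) \vee \phi$, the rest is bookkeeping with $\mathtt{RE}$ and the filter properties.
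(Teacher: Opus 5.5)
Your proof is correct and follows the paper's argument. For (i) you use $\top \in F$ with $\mathtt{RE}$, and then the $\mathtt{CONV}$ instance combining $\bund(\phi\wedge\psi\wedge\chi)$ with $\bund\phi$ to get downward closure. For (ii) you use maximality of $F$ against the filter generated by $F \cup \{\chi\}$. You also make explicit the coherence witness $\psi \wedge \chi$ for that filter, which the paper leaves unstated. For the $-$-versions, note that $\mathtt{RE}$ and $\mathtt{CONV}$ transfer to $\nbund$: each $\nbund$-instance of $\mathtt{CONV}$ is, up to $\mathtt{RE}$, a $\bund$-instance.
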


\begin{proof}
We only prove the case for $+$-coherent filters, since the case for $-$-coherent filters is symmetric.

First, assume that $F$ is a filter $+$-coherent with $\Delta$.
Since $\top \in F$, by $+$-coherence, there is $\phi \in F$ s.t.\ $\bund (\top \wedge \phi) \in \Delta$ and thus $\bund\phi \in \Delta$ by $\mathtt{RE}$.
Moreover, for any $\phi, \psi \in F$, we have $\phi \wedge \psi \in F$,
so by $+$-coherence, there is $\chi \in F$ s.t.\ $\bund (\phi \wedge \psi \wedge \chi) \in \Delta$.
Thus, if $\bund \phi \in \Delta$, by $\mathtt{CONV}$ and $\mathtt{RE}$, we also have $\bund (\phi \wedge \psi) \in \Delta$.

Next, assume that for any $\phi \in F$, there is $\psi \in F$ s.t.\ $\vdash \psi \to \phi$ and $\bund (\psi \wedge \chi) \in \Delta$.
Then, note that the filter $G$ generated by $F \cup \{\chi\}$ is also $+$-coherent with $\Delta$,
and $F \subseteq G$.
Then, by the maximality of $F$, we must have $F = G$, so $\chi \in F$.
\end{proof}

We now present the proofs for Lemma \ref{lem.cano1} and Lemma \ref{lem.cano2}.
Proofs for the canonicity of $\mathtt{N}_\bund$, $\mathtt{D}_\bund$, $\mathtt{T}_\bund$ are omitted since they are rather trivial.
Moreover, we only prove canonicity for $\bund$-axioms,
since the cases for their $\nbund$-counterparts are completely symmetric.
Also note that the order in which the following axioms are dealt is slightly different from the order they are introduced in Table \ref{table.ax1} and Table \ref{table.ax2}.\\

(1) 
  \begin{tabular}{|ll|ll|}
    \hline
    $\mathtt{EQU}$ & $\bund \phi \to \nbund \phi$ & (Sym) & $N^+_w = N^-_w$\\
    \hline
  \end{tabular}

\begin{proof}
  Just notice that when $\mathtt{EQU}$ is an axiom,
  we have $\vdash \bund \phi \leftrightarrow \nbund \phi$ for all $\phi \in \LL_\bund$,
  so for any filter $F$, $F$ is $+$-coherent with $\Delta$ iff it is $-$-coherent with $\Delta$.
  Hence, we obviously have $\PC(\Delta) = \NC(\Delta)$, and thus $N^\oplus_\Delta = N^\ominus_\Delta$.
\end{proof}

(2)
  \begin{tabular}{|ll|ll|}
    \hline
    $\mathtt{C}^n_\bund$ & $\bigwedge_{0 \leq i \leq n} \bund \phi_i \to \bigvee_{0 \leq i < j \leq n} \bund (\phi_i \wedge \phi_j)$ & 
    ($n$-d-Bd)$^+$ & $|N^+_w| \leq n$\\
    \hline
  \end{tabular}

\begin{proof}
  Suppose that there are $F_0, ..., F_n \in \PC(\Delta)$ which are pairwise distinct.
  For each $i \leq n$, we list formulas in $F_i$ as $(\phi_i^k)_{k \in \Nat}$;
  by (i) of Proposition \ref{prop.filt}, we may also assume that $\bund \phi_i^0 \in \Delta$.
  Then, let $\psi^m_i = \bigwedge_{k \leq m} \phi^k_i$;
  by (i) of Proposition \ref{prop.filt}, $\bund \psi^m_i \in \Delta$.
  Then, for any $m \in \Nat$, we have $\bigwedge_{0 \leq i \leq n} \bund \psi^m_i \in \Delta$,
  so by $\mathtt{C}^n_\bund$, there are $0 \leq i < j \leq n$ s.t.\ $\bund (\psi^m_i \wedge \psi^m_j) \in \Delta$.
  Thus, there must be some $0 \leq i < j \leq n$ s.t.\ for infinitely many $m \in \Nat$,
  $\bund (\psi^m_i \wedge \psi^m_j) \in \Delta$.
  Then, consider the filter $G$ generated by $F_i \cup F_j$.
  For any $\phi_i^k \in F_i$, $\phi_j^l \in F_j$,
  we can find $m > \max \{k, l\}$ s.t.\ $\bund (\psi^m_i \wedge \psi^m_j) \in \Delta$.
  Also note that $\vdash \psi^m_i \wedge \psi^m_j \to \phi_i^k \wedge \phi_j^l$ and $\psi^m_i \wedge \psi^m_j \in G$.
  Thus, $G$ is $+$-coherent with $\Delta$, so by the maximality of $F_i$ and $F_j$, we must have $F_i = F_j = G$, causing a contradiction.
\end{proof}

(3) 
  \begin{tabular}{|ll|ll|}
    \hline
    $\mathtt{DIV}_\bund^n$ & $\bund \bigvee_{i \leq n} \phi_i \to \bigvee_{i \leq n} \bund \bigvee_{j \neq i} \phi_j$ &
    ($n$-s-Bd)$^+$ & $\forall X \in N^+_w |X| \leq n$\\
    \hline
  \end{tabular}
  
\begin{proof}
  Let $F \in \PC(\Delta)$ be arbitrary, and suppose that there are pairwise distinct $\Theta_0, ..., \Theta_n \in \ext{F}$.
  Then, we can find $\theta_0, ..., \theta_n$ s.t.\ $\vdash \bigvee_{i \leq n} \theta_i$,
  and for all $i \neq j \leq n$, $\theta_i \in \Theta_i$ and $\theta_i \notin \Theta_j$.
  Now, enumerate formulas in $F$ as $(\phi_k)_{k \in \Nat}$;
  by (i) of Proposition \ref{prop.filt},
  we may also assume that $\bund \phi_0 \in \Delta$.
  Then, let $\psi_m = \bigwedge_{k \leq m} \phi_k$;
  by (i) of Proposition \ref{prop.filt}, $\bund \psi_m \in \Delta$.
  Then, by $\mathtt{RE}$ and $\mathtt{DIV}^n_\bund$, for each $m \in \Nat$,
  there is $i \leq n$ s.t.\ $\bund (\psi_m \wedge \bigvee_{j \neq i} \theta_j) \in \Delta$.
  Hence, there is some $i \leq n$ for which there are infinitely many $m \in \Nat$
  s.t.\ $\bund (\psi_m \wedge \bigvee_{j \neq i} \theta_j) \in \Delta$.
  Then, for any $\phi \in F$, there is $m$ s.t.\ $\vdash \psi_m \to \phi$
  and $\bund (\psi_m \wedge \bigvee_{j \neq i} \theta_j) \in \Delta$,
  so by (ii) of Proposition \ref{prop.filt}, $\bigvee_{j \neq i} \theta_j \in F$.
  But then, since $F \subseteq \Theta_i$, $\bigvee_{j \neq i} \theta_j \in \Theta_i$, causing a contradiction.
\end{proof}

(4) 
  \begin{tabular}{|ll|}
    \hline
    $\mathtt{4}'_\bund$ & $\bund \phi \to \bund (\phi \wedge (\bund (\phi \vee \psi) \to \bund \phi))$ \\
    \hline
    (PI$'$)$^+$ & $\forall X \in N^+_w \forall v \in X (\exists Y \in N^-_v (X \cap Y = \emptyset) \Rightarrow X \in {\uparrow} N^+_v)$\\
    \hline
  \end{tabular}

\begin{proof}
  Let $F \in \PC(\Delta)$ and $\Theta \in \ext{F}$ be arbitrary.
  Also assume that there is $F' \in \NC(\Theta)$ s.t.\ $\ext{F} \cap \ext{F'} = \emptyset$.
  Our goal is to find $G \in \PC(\Theta)$ extending $F$, 
  so that $\ext{G} \subseteq \ext{F}$, and by Lemma \ref{lem.ext}, it is enough to show that $F$ is $+$-coherent with $\Theta$.
  Since $\ext{F} \cap \ext{F'} = \emptyset$, there is a $\phi \in F$ s.t.\ $\neg \phi \in F'$.
  Since $F'$ is $-$-coherent with $\Theta$, we may also assume that $\nbund \neg \phi \in \Theta$,
  i.e.\ $\bund \phi \in \Theta$.
  Now, to show that $F$ is $+$-coherent with $\Theta$, it is enough to show that for every $\psi \in F$,
  $\bund \phi \to \bund (\phi \wedge \psi) \in F$ since this implies that $\bund(\phi \wedge \psi) \in \Theta$.
  To do this, we use (ii) of Proposition \ref{prop.filt}.
  Let $\chi \in F$ be arbitrary.
  By $+$-coherence, there is $\theta \in F$ s.t.\ $\vdash \theta \to \phi \wedge \psi \wedge \chi$ and $\bund \theta \in \Delta$ (note $\phi, \psi, \chi \in F$).
  Thus, by $\mathtt{4}'_\bund$ with $\theta$ for $\phi$ and $\phi$ for $\psi$ and detaching $\bund\theta \in \Delta$ and replacing $\theta \lor \phi$ by the equivalent $\phi$, $\bund (\theta \wedge (\bund \phi \to \bund \theta)) \in \Delta$.
  Note that, by $\mathtt{CONV}$, $\vdash (\bund \phi \to \bund \theta) \to (\bund \phi \to \bund (\phi \wedge \psi))$,
  so by $\mathtt{CONV}$ again (easier to see in its rule form), we have 
  $\vdash \bund \theta \wedge \bund (\theta \wedge (\bund \phi \to \bund \theta)) \to \bund (\theta \wedge (\bund \phi \to \bund (\phi \wedge \psi)))$.
  Summarizing above, we have both $\vdash \theta \to \chi$ and $\bund (\theta \wedge (\bund \phi \to \bund (\phi \wedge \psi))) \in \Delta$,
  completing the requirement for (ii) of Proposition \ref{prop.filt}. So $\bund \phi \to \bund (\phi \wedge \psi) \in F$.
\end{proof}

(5) 
  \begin{tabular}{|ll|}
    \hline
    $\mathtt{4}''_\bund$ & $\bund \phi \wedge \bund (\phi \vee \psi) \to \bund (\phi \wedge \bund (\phi \vee \psi))$ \\
    \hline
    (PI$''$)$^+$ & $\forall X \in N^+_w \forall v \in X (X \in {\uparrow} N^+_v \wedge \forall Y \in N^-_w (X \cap Y = \emptyset \Rightarrow Y \in {\uparrow} N^-_v))$\\
    \hline
  \end{tabular}

\begin{proof}
  Let $F \in \PC(\Delta)$ and $\Theta \in \ext{F}$ be arbitrary.
  First, for any $\phi, \psi \in F$,
  if $\bund \phi \in \Delta$,
  then $\bund (\phi \wedge \psi) \in \Delta$ by (i) of Proposition \ref{prop.filt},
  and thus, by $\mathtt{4}''_\bund$ ($\phi \land \psi$ for $\phi$) and $\mathtt{RE}$, $\bund (\phi \wedge \psi \wedge \bund \phi) \in \Delta$.
  Hence, by (ii) of Proposition \ref{prop.filt}, $\bund \phi \in F \subseteq \Theta$.
  Then, it is not hard to see that $F$ is also $+$-coherent with $\Theta$,
  so by Lemma \ref{lem.ext}, there is $G \in \PC(\Theta)$ extending $F$, and thus $\ext{G} \subseteq \ext{F}$.

  Next, let $F' \in \NC(\Delta)$ be arbitrary, and assume that $\ext{F} \cap \ext{F'} = \emptyset$.
  Then, there is $\phi \in F$ s.t.\ $\neg \phi \in F'$,
  and we may assume that $\nbund \neg \phi \in \Delta$ (i.e.\ $\bund \phi \in \Delta$) by the $-$-coherence of $F'$.
  Then, for any $\psi \in F'$, 
  we have $\nbund (\neg \phi \wedge \psi) \in \Delta$ by (i) of Proposition \ref{prop.filt},
  i.e.\ $\bund (\phi \vee \neg \psi) \in \Delta$.
  Then, for any $\chi \in F$, since $\bund (\phi \wedge \chi) \in \Delta$,
  by $\mathtt{4}''_\bund$,
  $\bund (\phi \wedge \chi \wedge \nbund (\neg \phi \wedge \psi)) \in \Delta$.
  Thus, by (ii) of Proposition \ref{prop.filt},
  $\nbund (\neg \phi \wedge \psi) \in F \subseteq \Theta$.
  In sum, $\psi \in F' \Rightarrow \nbund(\lnot\phi \land \psi) \in \Theta$, so $F'$ is $-$-coherent with $\Theta$,
  so there is $G' \in \NC(\Theta)$ extending $F'$, and thus $\ext{G'} \subseteq \ext{F'}$.
\end{proof}

(6) 
  \begin{tabular}{|ll|ll|}
    \hline
    $\mathtt{4}'''_\bund$ & $\bund \phi \wedge \bund \psi \to \bund (\phi \wedge \bund \psi)$ &
    (PI$'''$)$^+$ & $\forall X \in N^+_w \forall v \in X (N^+_w \subseteq {\uparrow}N^+_v \wedge N^-_w \subseteq {\uparrow}N^-_v)$\\
    \hline
  \end{tabular}

\begin{proof}
  Let $F \in \PC(\Delta)$ and $\Theta \in \ext{F}$ be arbitrary.
  By $\mathtt{4}'''_\bund$ and (ii) of Proposition \ref{prop.filt},
  for any $\bund \psi \in \Delta$,
  we have $\bund \psi \in F \subseteq \Theta$.
  Thus, if a filter $G$ is $+$-coherent (resp.\ $-$-coherent) with $\Delta$,
  it would also be $+$-coherent (resp.\ $-$-coherent) with $\Theta$,
  so by Lemma \ref{lem.ext}, there is $H \in \PC(\Theta)$ (resp.\ $\NC(\Theta)$) extending $G$,
  and thus $\ext{H} \subseteq \ext{G}$.
\end{proof}

(7) 
  \begin{tabular}{|ll|ll|}
    \hline
    $\mathtt{5}'''_\bund$ & $\bund \phi \wedge \neg \bund \psi \to \bund (\phi \wedge \neg \bund \psi)$ &
    (NI$'''$)$^+$ & $\forall X \in N^+_w \forall v \in X (N^+_v \subseteq {\uparrow}N^+_w \wedge N^-_v \subseteq {\uparrow}N^-_w)$\\
    \hline
  \end{tabular}

\begin{proof}
  Let $F \in \PC(\Delta)$ and $\Theta \in \ext{F}$ be arbitrary.
  By $\mathtt{5}'''_\bund$ and (ii) of Proposition \ref{prop.filt},
  for any $\bund \psi \notin \Delta$,
  we have $\neg \bund \psi \in F \subseteq \Theta$, and thus $\bund \psi \notin \Theta$.
  Thus, if $G$ is $+$-coherent (resp.\ $-$-coherent) with $\Theta$,
  it would also be $+$-coherent (resp.\ $-$-coherent) with $\Delta$,
  so by Lemma \ref{lem.ext}, there is $H \in \PC(\Delta)$ (resp.\ $\NC(\Delta)$) extending $G$, and thus $\ext{H} \subseteq \ext{G}$.
\end{proof}

(8) 
  \begin{tabular}{|ll|}
    \hline
    $\mathtt{5C}^n_\bund$ & $\bund \phi \wedge \bigwedge_{0 \leq i \leq n} \neg \bund (\phi \wedge \psi_i) \to \bund (\phi \wedge (\bigwedge_{0 \leq i \leq n} \bund \psi_i \to \bigvee_{0 \leq i < j \leq n} \bund (\psi_i \wedge \psi_j)))$\\
    \hline
    ($n$-Coa) & $\forall X \in N^+_w \forall v \in X |N^+_v \setminus {\uparrow}\{X\}| \leq n$\\
    \hline
  \end{tabular}

\begin{proof}
  Let $F \in \PC(\Delta)$, $\Theta \in \ext{F}$ be arbitrary,
  and suppose that there are pairwise distinct $G_0, ..., G_n \in \PC(\Theta)$ s.t.\ for each $i \leq n$, $\ext{F} \not \subseteq \ext{G_i}$.
  Then, for each $i \leq n$, there is $\phi_i^0 \in G_i \setminus F$;
  since $G_i$ is $+$-coherent with $\Theta$, we may also assume that $\bund \phi_i^0 \in \Theta$.
  Now, we list formulas in $G_i$ as $(\phi_i^k)_{k \in \Nat}$,
  where $\phi_i^0$ is selected in the way above;
  then, for each $m \in \Nat$, let $\psi^m_i = \bigwedge_{k \leq m} \phi_i^k$.
  By (i) of Proposition \ref{prop.filt}, $\bund \psi^m_i \in \Theta$.
  Then, let $m \in \Nat$ be arbitrary;
  we show that $\bigwedge_{0 \leq i \leq n} \bund \psi^m_i \to \bigvee_{0 \leq i < j \leq n} \bund (\psi^m_i \wedge \psi^m_j) \in \Theta$.
  Since $\phi_i^0 \notin F$, clearly $\psi^m_i \notin F$,
  so by (ii) of Proposition \ref{prop.filt} in contraposition, there must be $\chi^m_i \in F$ s.t.\
  $\bund \chi^m_i \in \Delta$, but
  for any $\theta \in F$, $\bund (\chi^m_i \wedge \theta \wedge \psi^m_i) \notin \Delta$.
  Then, let $\chi^m = \bigwedge_{i \leq n} \chi_i^m$.
  For any $\theta \in F$ and $i \leq n$,
  $\bund (\chi^m \wedge \theta) \in \Delta$ (by (i) of Proposition \ref{prop.filt}) and  
  $\bund (\chi^m \wedge \theta \wedge \psi^m_i) \notin \Delta$,
  so by $\mathtt{5C}^n_\bund$, $\bund (\chi^m \wedge \theta \wedge (\bigwedge_{0 \leq i \leq n} \bund \psi^m_i \to \bigvee_{0 \leq i < j \leq n} \bund (\psi^m_i \wedge \psi^m_j))) \in \Delta$.
  Then, since $\theta \in F$ is arbitrary, by (ii) of Proposition \ref{prop.filt}, $\bigwedge_{0 \leq i \leq n} \bund \psi^m_i \to \bigvee_{0 \leq i < j \leq n} \bund (\psi^m_i \wedge \psi^m_j) \in F \subseteq \Theta$.
  The rest is the same as the case for $\mathtt{C}^n_\bund$.
\end{proof}

(9)
  \begin{tabular}{|ll|ll|}
    \hline
    $\mathtt{DE}_\bund$ & $\bund \phi \to \bund (\phi \wedge \neg \bund \psi)$ &
    (Emp)$^+$ & $\forall X \in N^+_w \forall v \in X (N^+_v = N^-_v = \emptyset)$\\
    \hline
  \end{tabular}

\begin{proof}
  Let $F \in \PC(\Delta)$ and $\Theta \in \ext{F}$ be arbitrary.
  By $\mathtt{DE}_\bund$, for any $\psi$ and any $\phi \in F$ s.t.\ $\bund \phi \in \Delta$,
  we have $\bund (\phi \wedge \neg \bund \psi) \in \Delta$.
  Thus, by (ii) of Proposition \ref{prop.filt}, $\neg \bund \psi \in F \subseteq \Theta$.
  Therefore, there is no $\psi$ s.t.\ $\bund \psi \in \Theta$,
  so $\PC(\Theta) = \NC(\Theta) = \emptyset$.
\end{proof}

(10)
  \begin{tabular}{|ll|ll|}
    \hline
    $\mathtt{T}^\mathtt{op}_\bund$ & $\phi \wedge \bund (\phi \vee \psi) \to \bund \phi$ &
    (Id)$^+$ & $(\exists Y \in N^-_w w \notin Y) \Rightarrow \{w\} \in {\uparrow} N^+_w$\\
    \hline
  \end{tabular}

\begin{proof}
  Assume that there is $F' \in \NC(\Delta)$ s.t.\ $\Delta \notin \ext{F'}$.
  Then, there is $\phi \in F'$ s.t.\ $\neg \phi \in \Delta$;
  since $F'$ is $-$-coherent w.r.t.\ $\Delta$,
  we may also assume that $\bund \neg \phi \in \Delta$.
  Then, let $\psi \in \Delta$ be arbitrary.
  Since $\neg \phi \wedge \psi \in \Delta$ and $\bund \neg \phi \in \Delta$,
  by $\mathtt{T}^\mathtt{op}_\bund$, $\bund (\neg \phi \wedge \psi) \in \Delta$.
  Thus, $\Delta$, viewed as a filter, is $+$-coherent with $\Delta$ itself.
  Then, by Lemma \ref{lem.ext}, there is $G \in \PC(\Delta)$ extending $\Delta$, and thus $\ext{G} \subseteq \{\Delta\}$.
\end{proof}

\bibliographystyle{eptcs}
\bibliography{generic}

\end{document}